\documentclass[pdflatex,sn-nature]{sn-jnl}

\usepackage{graphicx}%
\usepackage{multirow}%
\usepackage{amsmath,amssymb,amsfonts,mathtools,physics,physics}%
\usepackage{amsthm}%
\usepackage{mathrsfs}%
\usepackage[title]{appendix}%
\usepackage{xcolor}%
\usepackage{textcomp}%
\usepackage{manyfoot}%
\usepackage{booktabs}%
\usepackage{algorithm}%
\usepackage{algorithmicx}%
\usepackage{algpseudocode}%
\usepackage{listings}%
\usepackage{tabularx}
\usepackage{xltabular}

\usepackage{physics}      
\usepackage{braket}       
\usepackage{dsfont}       
\usepackage{upgreek}      
\usepackage{cancel}       
\usepackage{marvosym}     
\usepackage{tikz}         

\theoremstyle{thmstyleone}%
\newtheorem{theorem}{Theorem}[section]%
\newtheorem{corollary}{Corollary}[section]%
\newtheorem{lemma}{Lemma}[section]%

\theoremstyle{thmstyletwo}%
\newtheorem{remark}{Remark}[section]%

\theoremstyle{thmstylethree}%
\newtheorem{definition}{Definition}[section]%

\begin{document}

\title[QPI-DeepONet-MAC]{
QPI-DeepONet-MAC: A Scalable and Stable Hybrid Classical-Quantum Architecture for Physics-Informed Deep Operator Networks
}

\author*[1,2]{\fnm{Said} \sur{Lantigua}}\email{said.lantigua@acad.ufsm.br}

\author[1,3]{\fnm{José} \sur{Valencia}}\email{jcvaal@uel.br}

\author[1]{\fnm{Gilson} \sur{Giraldi}}\email{gilson@lncc.br}

\author[1]{\fnm{Renato} \sur{Portugal}}\email{portugal@lncc.br}

\author[2]{\fnm{Jonas} \sur{Maziero}}\email{jonas.maziero@ufsm.br}

\affil*[1]{\orgdiv{Departament of Physics, Center for Natural and Exact Sciences}, \orgname{Federal University of Santa Maria}, \orgaddress{\city{Santa Maria}, \postcode{97105-900}, \state{RS}, \country{Brazil}}}

\affil[2]{\orgname{National Laboratory for Scientific Computing (LNCC)}, \orgaddress{\street{Av. Getúlio Vargas, 333 - Quitandinha}, \city{Petrópolis}, \postcode{25651-075}, \state{RJ}, \country{Brazil}}}

\affil[3]{\orgdiv{Department of Mathematics}, \orgname{State University of Londrina (UEL)}, \orgaddress{\street{Celso Garcia Cid, PR-445, Km 380}, \city{Londrina}, \postcode{86057-970}, \state{PR}, \country{Brazil}}}

\abstract{
General operator learning for parametric partial differential equations (PDEs) is a fundamental challenge at the intersection of artificial intelligence and physics-based modeling. Physics-informed Deep Operator Networks (PI-DeepONets) incorporate governing equations and physical constraints into the learning process, but can exhibit optimization difficulties and limited expressivity in high-dimensional settings. Quantum extensions offer additional representational resources, yet their trainability can be hindered by barren plateaus, in which gradients may vanish rapidly with increasing circuit size or depth.
Here, we introduce QPI-DeepONet-MAC, a hybrid quantum-classical architecture that integrates parameterized quantum circuits into Physics-Informed DeepONets through a multiplicative-and-additive coupling (MAC) of classical neural representations and quantum expectation values. We establish analytically that the proposed architecture retains the universal approximation capability of operator networks under appropriate assumptions. We further derive bounds on the gradients with respect to the quantum parameters, providing conditions under which exponentially vanishing gradients can be avoided, together with a scaling criterion for trainability. In addition, we derive a lower bound on the number of sensors required to achieve a prescribed approximation accuracy, explicitly relating the sampling requirements to the regularity and dimensionality of the input functions and to the dynamical growth of the underlying PDE.
To regulate the quantum component during optimization, we introduce an informational regularization scheme based on the dynamics of quantum coherence and state fidelity. The resulting framework penalizes loss of coherence while controlling changes in the quantum-state trajectory, thereby providing a mechanism for preserving quantum-state structure during training. Together, these results provide a theoretical foundation for incorporating parameterized quantum circuits into physics-informed operator learning and establish QPI-DeepONet-MAC as a principled hybrid framework for the approximation of parametric PDE solution operators.
}
\keywords{
    Physics-Informed DeepONets,
    Hybrid Quantum-Classical Neural Networks,
    Parametric PDEs,
    Universal Approximation,
    Barren Plateaus Mitigation,
}

\maketitle

\section{Introduction}\label{sec:1}

In recent years, Physics-Informed Deep Operator Networks (PI-DeepONets) have emerged as a promising framework for learning solution operators of partial differential equations (PDEs) while incorporating the underlying physical laws into the training process \cite{Wang202100, Lu202100}. Unlike conventional numerical solvers, which generally require a new computational solution for each parameter configuration, operator-learning approaches aim to learn a mapping from problem parameters or input functions to the corresponding PDE solutions, thereby enabling efficient evaluation over multiple configurations once the model has been trained \cite{Raissi201900}. By incorporating automatic differentiation into the training process, PI-DeepONets can enforce the residuals of the governing equations together with initial and boundary conditions directly through the loss function \cite{Raissi201900, Wang202100}. This physics-informed formulation can substantially reduce the reliance on large collections of precomputed solution data required by purely data-driven operator-learning methods, while promoting consistency with the underlying physical constraints \cite{Karniadakis202100}.

Despite these advantages, PI-DeepONets inherit important challenges associated with both operator learning and physics-informed optimization. In particular, their expressive capacity and training efficiency can become limiting factors in high-dimensional problems, where optimization may become increasingly difficult and computationally demanding \cite{Wang202100, Lu202100, Giraldi202300, Huang202300, Sharma202300, Cheng202100, Wu201800, Chu202200, Bernardes-Vieira202100, Cai202100, Cai202101, Rudy201700}. These challenges are related to broader difficulties encountered by physics-informed neural networks (PINNs and CPINNs), which have nevertheless demonstrated the potential of physics-informed learning across applications such as fluid dynamics \cite{Jin202100}, materials science \cite{Kamalia202300}, and multiscale modeling \cite{Kashinath202100}. This motivates the exploration of alternative computational architectures capable of increasing representational flexibility while maintaining the physical structure imposed by the governing equations.


Quantum computing provides an alternative computational framework for machine learning and scientific computing, motivating quantum counterparts of several classical architectures, including quantum neural networks (QNNs), quantum convolutional neural networks (QCNNs), quantum physics-informed neural networks (QPINNs), quantum constrained physics-informed neural networks (QCPINNs), and quantum physics-informed DeepONets (QPI-DeepONets) \cite{Nielsen201000, Markidis202200, Trahan202400, Xiao202400}. Quantum algorithms have also been developed for the numerical treatment of differential equations \cite{Harrow200900, Sedykh202400, Norambuena202400, Clader201300, Costa201900, Berry201500, Cao201300, Wang201900, Arrazola201900, Childs202000}. In variational quantum models, parameterized quantum circuits (PQCs) provide flexible quantum feature maps and can represent nonlinear functions through measurements in high-dimensional Hilbert spaces \cite{Schuld201900, Perez-Salinas202000}. However, practical implementations remain constrained by the limited coherence, connectivity, gate fidelity, and circuit depth available on noisy intermediate-scale quantum (NISQ) hardware \cite{Scherer201700, Linden202200, Dong201000, Paine202300, Andrew202100, Babbush202300, Low202400, Broeckmann202400}. In addition, the optimization of PQCs can suffer from barren plateaus, characterized by gradients whose variance decreases exponentially with system size or circuit depth under broad classes of circuit and cost-function constructions, thereby making parameter optimization increasingly difficult \cite{McClean201800, Wang202101, Cerezo202100}. Several strategies have been proposed to alleviate barren plateaus, including architectural modifications and hybrid quantum-classical schemes \cite{Zhao202200}. Nevertheless, establishing rigorous trainability guarantees for quantum-enhanced physics-informed operator-learning architectures remains an open challenge. This motivates the development of architectures in which the quantum components are incorporated in a manner that preserves expressive power while providing analytical control over gradient scaling.


To address these challenges, we introduce Physics-Informed Deep Operator Networks with Multiplicative-Additive Coupling (QPI-DeepONet-MAC), a hybrid quantum-classical architecture that integrates parameterized quantum circuits into PI-DeepONets through structured multiplicative and additive couplings within the branch and trunk networks. Inspired by the QPINN-MAC architecture introduced in Ref. \cite{Lantigua202600}, the proposed framework combines classical operator-learning representations with quantum expectation values, while introducing a quantum-dependent additive coupling that enriches the resulting operator representation. We theoretically establish that QPI-DeepONet-MAC retains the universal approximation capability of the underlying DeepONet architecture under appropriate assumptions \cite{Gonon202500, Manzano202500}. Furthermore, we derive upper and lower bounds on the gradients with respect to the quantum parameters, yielding explicit scaling conditions under which the gradients remain detectable and providing theoretical conditions for mitigating barren-plateau effects. In addition, we derive a bound on the minimum number of sensors required to achieve a prescribed approximation accuracy for a class of well-posed parametric PDEs. Finally, we introduce a quantum-informational regularization scheme based on the dynamics of quantum coherence and fidelity, designed to discourage the loss of quantum-state structure during optimization and thereby preserve the expressive resources of the quantum component.

\section{Theoretical definition of the problem and the classical architecture of PI-DeepONets}\label{sec:2}
Let us begin by formally defining a system of partial differential equations (PDEs) on
\begin{equation}\label{eq:01}
    D \coloneq \Omega \times \mathbb{R}^{+},
\end{equation}
where \( D \) is an open set defined in terms of a spatial domain \( \Omega \subset \mathbb{R}^{n}\) with boundary \( \partial \Omega \), given the set
\begin{equation}\label{eq:02}
    \mbox{Diff} \coloneq \left\{ \hat{\mathcal{L}} : C_{\mathbb{R}^{M}}^{\infty} \left( D \right) \longrightarrow C_{\mathbb{R}^{M}}^{\infty} \left( D \right) \, | \, \hat{\mathcal{L}} \mbox{ is a differential operator} \right\},
\end{equation}
with
\begin{equation}\label{eq:03}
    C_{\mathbb{R}^{M}}^{\infty} \left( D \right) \coloneq \left\{ f : D \to \mathbb{R}^{M} \, | \, f \mbox{ is infinitely differentiable} \right\}.
\end{equation}
Given the above settings, a PDE can be defined as:
\begin{equation}\label{eq:04} 
    \mathcal{F} \left( \mathbf{y}, \mathbf{u}, \hat{\mathcal{L}}_{1} \left[ \cdot \right], \dots, \hat{\mathcal{L}}_{M} \left[ \cdot \right] \right) = \boldsymbol{\mathfrak{v}} \left( \tilde{\mathbf{y}} \right),
\end{equation}
where \( \mathbf{y} \in D \) is a \( (n + 1) \)-tuple of spatial and temporal coordinates, \( \tilde{\mathbf{y}} \in D \) is a \( (n + 1) \)-tuple of spatial coordinates whose temporal coordinate is zero (\(t = 0\)), \( \hat{\mathcal{L}}_{1} \left[ \cdot \right], \dots, \hat{\mathcal{L}}_{M} \left[ \cdot \right] \in \mbox{Diff} \), with vector functions \( \mathbf{u} = \left( u_{1}, \dots, u_{M} \right) \in C_{\mathbb{R}^{M}}^{\infty} \left( D \right) \), \( \boldsymbol{\mathfrak{v}} = \left( v_{1},\dots,v_{M} \right) \in C_{\mathbb{R}^{M}} \left( D \right) \) \cite{Rudin197600, Rudin198700, Tao201600, Brezis201100, Kiseliov198400, Piskunov197700, Piskunov197701, Demidovich197000, Spivak199400, Sunder200000, Pinkus201200}. And where
\begin{equation}\label{eq:05} 
    \mathcal{F} : D \times C_{\mathbb{R}^{M}}^{\infty} \left( D \right) \times \mbox{Diff}^{M} \longrightarrow C_{\mathbb{R}^{M}} \left( D \right),
\end{equation}
with \(\mbox{Diff}^{M} = \mbox{Diff} \times \mbox{Diff} \times \dots \times \mbox{Diff}\) and subject to the conditions:
\begin{equation}\label{eq:06} 
    \mathbf{u} \left( \mathbf{x},t_{0} \right) = \mathbf{u}_{0} \left( \mathbf{x} \right), \quad \left( \mathbf{x},t_{0} \right) \in D,
\end{equation}
and
\begin{equation}\label{eq:07} 
    \mathbf{u} \left( \mathbf{x}_{B},t \right) = \mathbf{u}_{B} \left( t \right), \quad \mathbf{x}_{B} \in \partial \Omega.
\end{equation}
In this regard, as demonstrated in Refs. \cite{Karniadakis202100, Lu202100}, a problem defined by equations (\ref{eq:04})–(\ref{eq:07}) can be solved using stacked or unstacked DeepONet architectures; see Figures \ref{fig:Fig1} and \ref{fig:Fig2}.
\begin{figure}[ht!]
    \centering
    \includegraphics[width=1.0\linewidth]{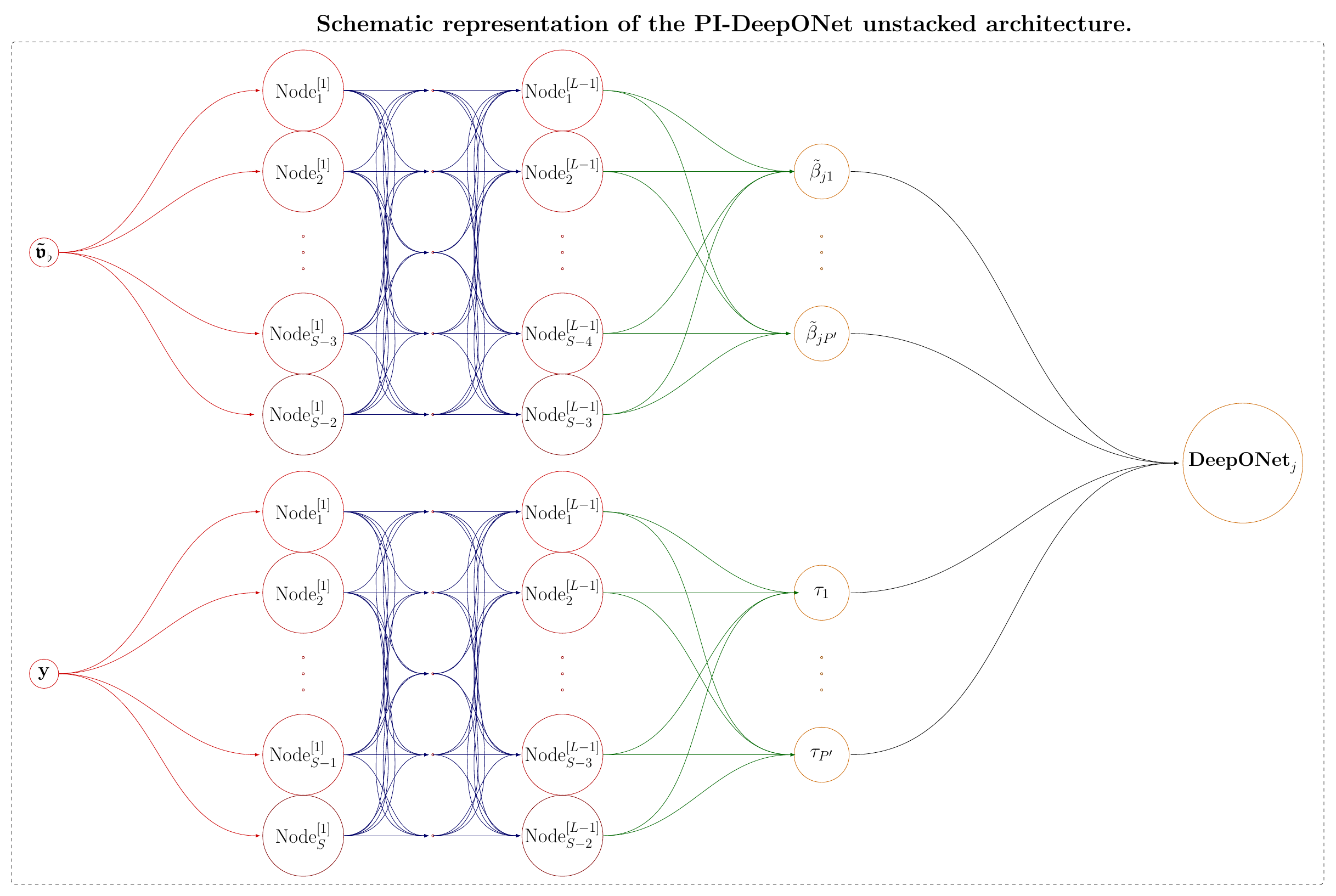}
    \caption{Schematic representation of the unstacked Physics-Informed DeepONet (PI-DeepONet) architecture. The diagram details the distinct input pathways for the branch (\(\boldsymbol{\tilde{\mathfrak{v}}}_{\flat}\)) and trunk (\(\mathbf{y}\)) variables, the hidden layers, the intermediate branch and trunk outputs (\(\boldsymbol{\hat{\beta}}_{j}\) and \(\boldsymbol{\tau}\)), and their final integration via the \(\mathbf{DeepONet}_{j}\) operator.}
    \label{fig:Fig1}
\end{figure}
\begin{figure}[ht!]
    \centering
    \includegraphics[width=1.0\linewidth]{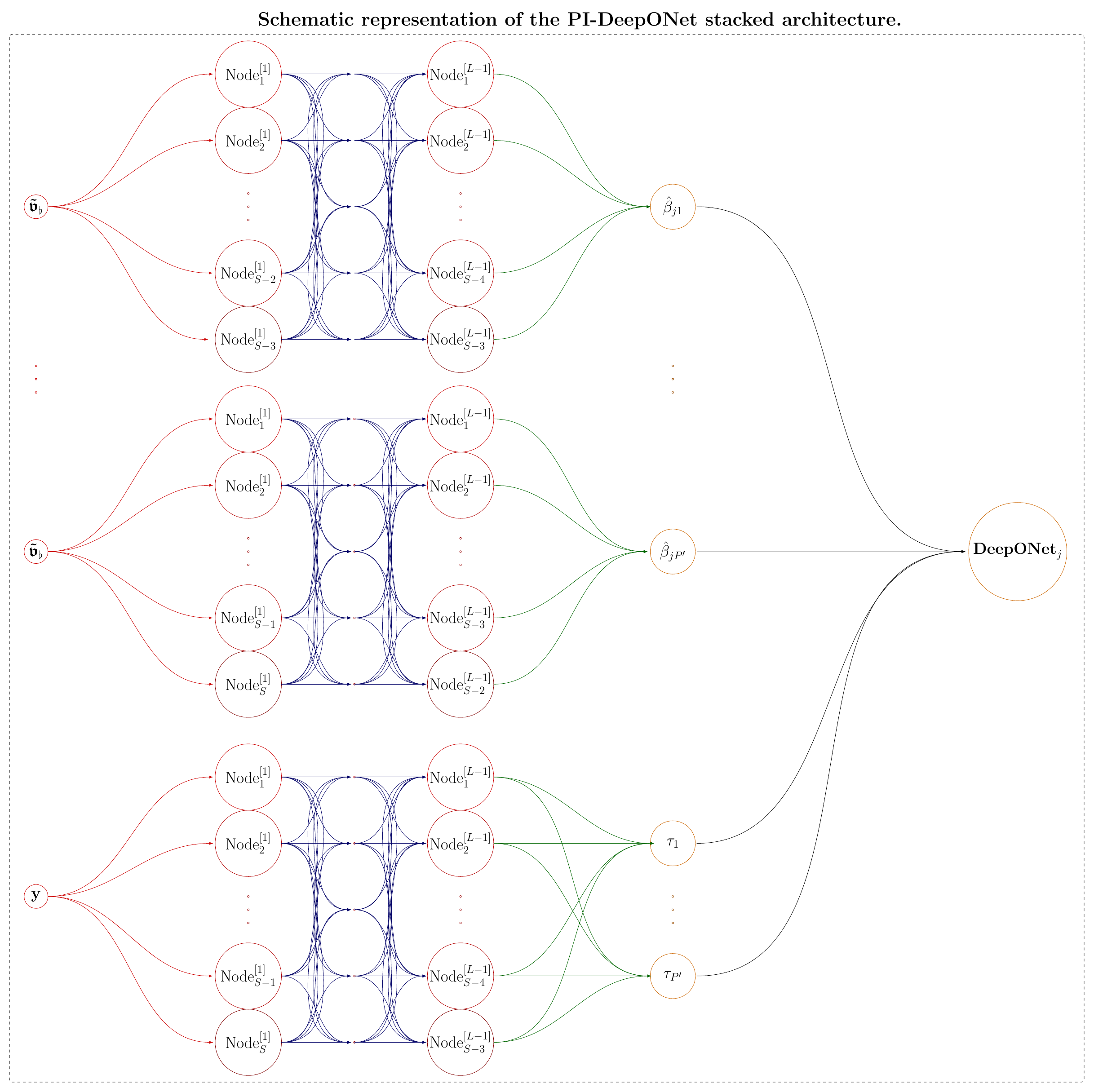}
    \caption{Schematic representation of the stacked Physics-Informed DeepONet (PI-DeepONet) architecture. This layout illustrates the multi-branch input configuration coupled with the trunk network, designed to process complex multi-field representations effectively.}
    \label{fig:Fig2}
\end{figure}
We would like to highlight that, for the sake of simplicity, this article focuses on the unstacked version of this classical architecture. However, as shown in Appendix \ref{sec:Ap-A}, these definitions are sufficiently general and also apply to the analogous stacked version of the architecture \cite{Hornik199100, Cybenko198900, Barron200200, Stone194800, Haykin199900, Bishop200600, Mohri201800, Anthony199900, Hastie198600}. That said, DeepONets are mathematically defined by the map:
\begin{equation}\label{eq:08}
    \mathbf{DeepONet} \left( \cdot, \boldsymbol{\mathcal{W}}_{\flat}^{\mathcal{B}} \right) \left( \cdot, \boldsymbol{\mathcal{W}}^{\mathcal{T}} \right): \mathbb{R}^{m M \times 1} \longrightarrow C_{\mathbb{R}^{M}} \left( D \right),
\end{equation}
with the correspondence rule
\begin{equation}\label{eq:09}
    \begin{split}
        \mathbf{DeepONet} \left( \boldsymbol{\tilde{\mathfrak{v}}}_{\flat}, \boldsymbol{\mathcal{W}}_{\flat}^{\mathcal{B}} \right) \left( \cdot, \boldsymbol{\mathcal{W}}^{\mathcal{T}} \right) &= \left( \mathbf{DeepONet}_{1} \left( \boldsymbol{\tilde{\mathfrak{v}}}_{\flat}, \boldsymbol{\mathcal{W}}_{\flat}^{\mathcal{B}} \right) \left( \cdot, \boldsymbol{\mathcal{W}}^{\mathcal{T}} \right), \dots \right. \\
        &\left. \hspace{1.0cm} \dots, \mathbf{DeepONet}_{M} \left( \boldsymbol{\tilde{\mathfrak{v}}}_{\flat}, \boldsymbol{\mathcal{W}}_{\flat}^{\mathcal{B}} \right) \left( \cdot, \boldsymbol{\mathcal{W}}^{\mathcal{T}} \right) \right).
    \end{split}
\end{equation}
Here, each component function \(\mathbf{DeepONet}_{j}\) is defined by the map
\begin{equation}\label{eq:10}
    \mathbf{DeepONet}_{j} \left( \cdot, \boldsymbol{\mathcal{W}}_{\flat}^{\mathcal{B}} \right) \left( \cdot, \boldsymbol{\mathcal{W}}^{\mathcal{T}} \right): \mathbb{R}^{m M \times 1} \longrightarrow C_{\mathbb{R}} \left( D \right),
\end{equation}
whose correspondence rule,
\begin{equation}\label{eq:11}
    \mathbf{DeepONet}_{j} \left( \cdot, \boldsymbol{\mathcal{W}}_{\flat}^{\mathcal{B}} \right) \left( \cdot, \boldsymbol{\mathcal{W}}^{\mathcal{T}} \right) = \boldsymbol{\hat{\beta}}_{j} \left( \cdot, \boldsymbol{\mathcal{W}}_{\flat}^{\mathcal{B}} \right) \cdot \boldsymbol{\tau} \left( \cdot, \boldsymbol{\mathcal{W}}^{\mathcal{T}} \right),
\end{equation}
is defined in terms of the branch and trunk sub-networks. These are defined by expressions
\begin{equation}\label{eq:12}
    \begin{split}
        \boldsymbol{\hat{\beta}} \left( \cdot, \boldsymbol{\mathcal{W}}_{\flat}^{\mathcal{B}} \right) : \mathbb{R}^{mM \times 1} & \longrightarrow \mathbb{R}^{N_{l}M \times 1} \\
        \boldsymbol{\tilde{\mathfrak{v}}}_{\flat} \hspace{0.25cm} & \longmapsto \boldsymbol{\hat{\beta}} \left( \boldsymbol{\tilde{\mathfrak{v}}}_{\flat}, \boldsymbol{\mathcal{W}}_{\flat}^{\mathcal{B}} \right) = \sigma \left( \mathbf{W}_{\flat}^{\mathcal{B} \, \left[ L \right]} \mathbf{x}^{\mathcal{B} \, \left[ L - 1 \right]} \left( \boldsymbol{\tilde{\mathfrak{v}}}_{\flat} \right) +  \mathbf{b}_{\flat}^{\mathcal{B} \, \left[ L \right]} \right),
    \end{split}
\end{equation}
and
\begin{equation}\label{eq:13}
    \begin{split}
        \boldsymbol{\tau} \left( \cdot, \boldsymbol{\mathcal{W}}^{\mathcal{T}} \right) : D & \longrightarrow \mathbb{R}^{N_{l} \times 1} \\
        \mathbf{y} & \longmapsto \boldsymbol{\tau} \left( \mathbf{y}, \boldsymbol{\mathcal{W}}^{\mathcal{T}} \right) = \sigma \left( \mathbf{W}^{\mathcal{T} \, \left[ \hat{L} \right]} \mathbf{x}^{\mathcal{T} \, \left[ \hat{L} - 1 \right]} \left( \mathbf{y} \right) +  \mathbf{b}^{\mathcal{T} \, \left[ \hat{L} \right]} \right).
    \end{split}
\end{equation}
where \(\boldsymbol{\tilde{\mathfrak{v}}}_{\flat}\) is given by the isomorphism
\begin{equation}\label{eq:14}
    \begin{split}
        \textbf{Vec} : \left(\mathbb{R}^{m\times 1}\right)^{M} & \longrightarrow \mathbb{R}^{mM \times 1} \\
        \boldsymbol{\tilde{\mathfrak{v}}} \hspace{0.75cm} & \longmapsto \boldsymbol{\tilde{\mathfrak{v}}}_{\flat} = \textbf{Vec} \left( \boldsymbol{\tilde{\mathfrak{v}}} \right) \coloneq \left[\boldsymbol{\tilde{v}}_{1}^{\intercal},\dots,\boldsymbol{\tilde{v}}_{M}^{\intercal}\right]^{\intercal},
    \end{split}
\end{equation}
for a matrix \( \boldsymbol{\tilde{\mathfrak{v}}} \in \left(\mathbb{R}^{m \times 1}\right)^{M} \), whose component \( \boldsymbol{\tilde{v}}_{j} \in \mathbb{R}^{m \times 1} \) is simply the discretization of the component \( v_{j} \) of \( \boldsymbol{\mathfrak{v}} \coloneq \left( v_{1},\dots,v_{M} \right) \in \mathbb{V}^{M} \) obtained via \( m \) sensors. Furthermore, it is crucial to highlight that in this work, each \( \mathbb{V} \) is a compact subset of \( C_{\mathbb{R}}\left(\mathbb{K}\right) \), \( \mathbb{K} \) is a compact subset of a Banach space \( \mathbb{B} \), and we are considering the uniform norm (supremum norm) on \( C_{\mathbb{R}}\left(\mathbb{K}\right) \). Furthermore, the existence of \( \mathbb{V} \) is guaranteed by the Arzelà-Ascoli theorem \cite{Rudin197600}, or, more precisely, \( \mathbb{V} \) is a compact subset of \( C_{\mathbb{R}}\left(\mathbb{K}\right) \) consisting of uniformly bounded and equicontinuous functions. Moreover, in the definitions presented above
\begin{equation}\label{eq:15} 
    \boldsymbol{\mathcal{W}}_{\flat}^{\mathcal{B}} \coloneq \bigcup_{l \in \mathcal{I}_{L_{\flat}}} \left\{ \mathbf{W}_{\flat}^{\mathcal{B} \, \left[ l \right]}, \mathbf{b}_{\flat}^{\mathcal{B} \, \left[ l \right]} \right\},
\end{equation}
is the set of weights and biases of the branch subnet with \(\mathbf{W}_{\flat}^{\mathcal{B} \, \left[ l \right]} \in \mathbb{R}^{N_{l}M \times N_{l-1}M}\), \(\mathbf{b}_{\flat}^{\mathcal{B} \, \left[ l \right]} \in \mathbb{R}^{N_{l}M \times 1}\), for \( N_{l-1},N_{l} \in \mathbb{N}^{\ast}\) and \( \mathcal{I}_{L_{\flat}} = \{0,1,\dots,L_{\flat}\}\) with \( L_{\flat} \in \mathbb{N}^{\ast}\)
of classic branch-net parameters. In a similar way,
\begin{equation}\label{eq:16} 
    \boldsymbol{\mathcal{W}}^{\mathcal{T}} \coloneq \bigcup_{l \in \mathcal{I}_{L}} \left\{ \mathbf{W}^{\mathcal{T} \, \left[ l \right]}, \mathbf{b}^{\mathcal{T} \, \left[ l \right]} \right\},
\end{equation}
is the set of weights and biases of the trunk subnet with \(\mathbf{W}^{\mathcal{T} \, \left[ l \right]} \in \mathbb{R}^{N_{l} \times N_{l-1}}\), \(\mathbf{b}^{\mathcal{T} \, \left[ l \right]} \in \mathbb{R}^{N_{l} \times 1}\), for \( N_{l-1},N_{l} \in \mathbb{N}^{\ast}\) and \( \mathcal{I}_{\hat{L}} = \{0,1,\dots,\hat{L}\}\) with \( \hat{L} \in \mathbb{N}^{\ast}\). Furthermore, \( \sigma: \mathbb{R} \to \mathbb{R} \) is an injective activation function satisfying the Tauber-Wiener (TW) condition \cite{Lu202100}, as required in Lemma \ref{lemma:Lem-B.1}. However, it is important to note that the extension to general activation functions will be addressed in future work.

Thus, for any \( \boldsymbol{\mathfrak{v}} \in \mathbb{V}^{M} \) with discretization \( \boldsymbol{\tilde{\mathfrak{v}}}_{\flat} \in \mathbb{R}^{m M \times 1} \) and coordinates \( \mathbf{y} \in D \), there is a set of weights and biases \( \boldsymbol{\mathcal{W}}_{\flat}^{\mathcal{B} \ast} \), \( \boldsymbol{\mathcal{W}}^{\mathcal{T} \ast} \) that allow \(\mathbf{DeepONet}_{j}\) to approximate the operator defined by:
\begin{equation}\label{eq:17}
    \begin{split}
        \mathcal{G}: \mathbb{V} &\longrightarrow C_{\mathbb{R}} \left( D \right) \\
        v \, &\longmapsto \mathcal{G} \left( v \right),
    \end{split}
\end{equation}
with arbitrary precision. That is, for any \( \epsilon_{j} > 0\) and \(j \in \mathcal{I}_{M} = \{1,2,\dots,M\}\), the following inequality is satisfied
\begin{equation}\label{eq:18}
    \left\Vert \mathbf{DeepONet}_{j} \left( \boldsymbol{\tilde{\mathfrak{v}}}_{\flat}, \boldsymbol{\mathcal{W}}_{\flat}^{\mathcal{B} \ast} \right) \left( \mathbf{y}, \boldsymbol{\mathcal{W}}^{\mathcal{T} \ast} \right) - \mathcal{G}_{j} \left( \boldsymbol{\mathfrak{v}} \right) \left( \mathbf{y} \right) \right\Vert_{2} \leq \epsilon_{j}.
\end{equation}
Consequently, using the properties of the triangular inequality and considering \( \epsilon = \epsilon_{1} + \dots + \epsilon_{M} \), \(\mathbf{DeepONet}\) can approximate the operator defined by:
\begin{equation}\label{eq:19}
    \begin{split}
        \boldsymbol{\mathcal{G}}: \mathbb{V}^{M} &\longrightarrow C_{\mathbb{R}^{M}} \left( D \right) \\
        \boldsymbol{\mathfrak{v}} \hspace{0.225cm} \, &\longmapsto \boldsymbol{\mathcal{G}} \left( \boldsymbol{\mathfrak{v}} \right) \left( \cdot \right) = \left(\mathcal{G}_{1} \left( \boldsymbol{\mathfrak{v}} \right) \left( \cdot \right), \dots, \mathcal{G}_{M} \left( \boldsymbol{\mathfrak{v}} \right) \left( \cdot \right) \right),
    \end{split}
\end{equation}
with arbitrary precision. That is, for any \( \epsilon > 0\), the following inequality is satisfied
\begin{equation}\label{eq:20}
    \left\Vert \mathbf{DeepONet} \left( \boldsymbol{\tilde{\mathfrak{v}}}_{\flat}, \boldsymbol{\mathcal{W}}_{\flat}^{\mathcal{B} \ast} \right) \left( \mathbf{y}, \boldsymbol{\mathcal{W}}^{\mathcal{T} \ast} \right) - \boldsymbol{\mathcal{G}} \left( \boldsymbol{\mathfrak{v}} \right) \left( \mathbf{y} \right) \right\Vert_{2,M} \leq \epsilon.
\end{equation}

Thus, concluding this brief yet necessary review of PI-DeepONets—whose formal and rigorous definition is presented in greater depth in Appendix \ref{sec:Ap-A}—we must define the  ``Physics-Informed'' component. Consequently, it is fair to say that for the architecture to approximate the operator \( \boldsymbol{\mathcal{G}} \)—which effectively maps to the function \( \mathbf{u} \) satisfying the problem defined by expressions (\ref{eq:04})–(\ref{eq:07})—it is necessary to find the optimal sets of weights and biases, namely: \( \boldsymbol{\mathcal{W}}_{\flat}^{\mathcal{B} \ast} \) and \( \boldsymbol{\mathcal{W}}^{\mathcal{T} \ast} \). This can be achieved during the optimization (minimization) process of a loss function composed of equations (\ref{eq:04}), (\ref{eq:06}), and (\ref{eq:07}), as well as, potentially, some known solutions \( \mathbf{u} \left( x_{\ell_{1}}, t_{\ell_{2}} \right) \), where \( \ell_{1} \in \mathcal{I}_{N_{1}} = \{1, 2, \dots, N_{1}\}\) and \( \ell_{2} \in \mathcal{I}_{N_{2}} = \{1, 2, \dots, N_{2}\} \) \cite{Feurer201500, Balaji201800, Snoek201200, Kotthoff201700, Elshawi201900, Pascanu201300, Knuth199700, Larkin197000}. This function is mathematically defined as:
\begin{equation}\label{eq:21}
    \mathcal{L} = \lambda_{IC}\mathcal{L}_{IC} + \lambda_{BC}\mathcal{L}_{BC} + \lambda_{PDE}\mathcal{L}_{PDE} + \lambda_{SOL}\mathcal{L}_{SOL},
\end{equation}
where
\begin{align}
    &\mathcal{L}_{IC} = \frac{1}{M N_{1}} \sum_{\substack{\ell_{1} \in \mathcal{I}_{N_{1}}}} \left\Vert \mathbf{DeepONet} \left( \boldsymbol{\tilde{\mathfrak{v}}}_{\flat}, \boldsymbol{\mathcal{W}}_{\flat}^{\mathcal{B}} \right) \left( \mathbf{x}_{\ell_{1}}, t_{0}, \boldsymbol{\mathcal{W}}^{\mathcal{T}} \right) - \mathbf{u}_{0} \left( \mathbf{x}_{\ell_{1}} \right) \right\Vert_{2,M}^{2}, \label{eq:22} \\
    &\mathcal{L}_{BC} = \frac{1}{M N_{2}} \sum_{\substack{\ell_{2} \in \mathcal{I}_{N_{2}}}} \left\Vert \mathbf{DeepONet} \left( \boldsymbol{\tilde{\mathfrak{v}}}_{\flat}, \boldsymbol{\mathcal{W}}_{\flat}^{\mathcal{B}} \right) \left( \mathbf{x}_{\mathcal{B}}, t_{\ell_{2}}, \boldsymbol{\mathcal{W}}^{\mathcal{T}} \right) - \mathbf{u}_{B} \left( t_{\ell_{2}} \right) \right\Vert_{2,M}^{2}, \label{eq:23} \\
    &\mathcal{L}_{PDE} = \frac{1}{M N_{1} N_{2}} \sum_{\substack{\ell_{1} \in \mathcal{I}_{N_{1}} \\ \ell_{2} \in \mathcal{I}_{N_{2}} }} \left\Vert \mathcal{F}\left( \mathbf{x}_{\ell_{1}}, t_{\ell_{2}}, \mathbf{DeepONet} \left( \boldsymbol{\tilde{\mathfrak{v}}}_{\flat}, \boldsymbol{\mathcal{W}}_{\flat}^{\mathcal{B}} \right) \left( \mathbf{x}_{\ell_{1}}, t_{\ell_{2}}, \boldsymbol{\mathcal{W}}^{\mathcal{T}} \right), \right. \right. \label{eq:24} \\
    &\left. \left. \hat{\mathcal{L}}_{1} \left[ \cdot \right], \dots, \hat{\mathcal{L}}_{M} \left[ \cdot \right] \right) - \mathbf{s} \left( \mathbf{x}_{\ell_{1}}, t_{\ell_{2}} \right) \right\Vert_{2,M}^{2},\nonumber \\
    &\mathcal{L}_{SOL} = \frac{1}{M N_{1} N_{2}} \sum_{\substack{\ell_{1} \in \mathcal{I}_{N_{1}} \\ \ell_{2} \in \mathcal{I}_{N_{2}} }} \left\Vert \mathbf{DeepONet} \left( \boldsymbol{\tilde{\mathfrak{v}}}_{\flat}, \boldsymbol{\mathcal{W}}_{\flat}^{\mathcal{B}} \right) \left( \mathbf{x}_{\ell_{1}}, t_{\ell_{2}}, \boldsymbol{\mathcal{W}}^{\mathcal{T}} \right) - \mathbf{u} \left( \mathbf{x}_{\ell_{1}}, t_{\ell_{2}} \right) \right\Vert_{2,M}^{2}. \label{eq:25}
\end{align}

\section{QPI-DeepONet-MAC Architecture}\label{sec:3}
First of all, it is important to highlight that, in the formulation presented below, we will use Einstein notation whenever possible, along with the index sets \(\mathcal{I}_{P} = \{1,\dots,P\}\) and \(\mathcal{I}_{N_{\nu}} = \{1,\dots,N_{\nu}\}\), where \(P, N_{\nu} \in \mathbb{N}^{\ast}\), and the sets of observables \(\mathbb{O}^{\nu}\), for \(\nu \in \{\mathcal{B}, \mathcal{T}\}\), which contain the Hermitian operators \(\hat{O}^{\nu}\) acting on \(\mathcal{H}_{\mathcal{Q}}^{\nu}\).

Analogous to the introduction of the classical PI-DeepONet model in Sec. \ref{sec:2}, we formally introduce our classical-quantum hybrid model, which we have named QPI-DeepONet-MAC and whose schematic representation is shown in Fig. \ref{fig:Fig3}.
\begin{figure}[ht!]
    \centering
    \includegraphics[width=1.0\linewidth]{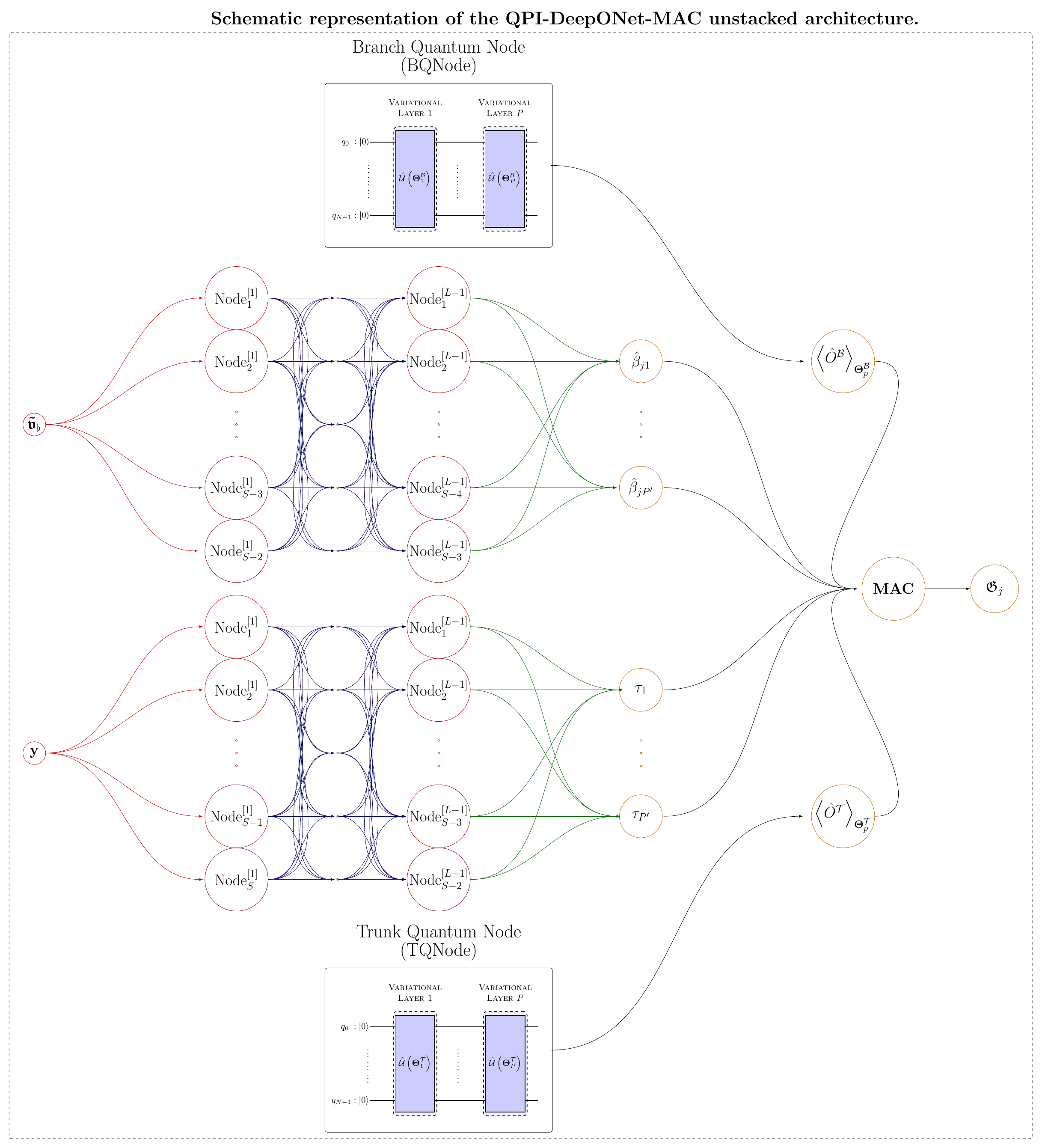}
    \caption{Schematic representation of the unstacked Quantum Physics-Informed DeepONet with Multiplication-And-Concatenation (QPI-DeepONet-MAC) architecture. The schematic highlights the integration of Branch Quantum Nodes (BQNodes) and Trunk Quantum Nodes (TQNodes) containing parameterized variational quantum layers, the measurement expectation value operators (\( \expval{\hat{O}^{\mathcal{B}}}_{\boldsymbol{\Theta}_{p}^{\mathcal{B}}} \) and \( \expval{\hat{O}^{\mathcal{T}}}_{\boldsymbol{\Theta}_{p}^{\mathcal{T}}} \)), and the central MAC combination block. Crucially, this unstacked framework represents the viable approach for practical implementation in near-term hybrid quantum-classical hardware, mitigating unintended entanglement and noise propagation. However, we emphasize that the above scheme has an analogue written in terms of a stacked DeepONet.}
    \label{fig:Fig3}
\end{figure} 
This model is constructed using two components. The first one is the density matrix \( \rho_{T} \), defined by:
\begin{equation}\label{eq:26}
    \rho_{T} =  \displaystyle{\bigotimes_{\substack{\mu \in \{ \mathcal{C}, \mathcal{Q} \} \\ \nu \in \{ \mathcal{B}, \mathcal{T} \}}} \rho_{\mu}^{\nu}}.
\end{equation}
That is, through the tensor product of the density sub-matrices 
\begin{equation}\label{eq:26-1}
    \rho_{\mathcal{C}}^{\mathcal{B}} \left( \boldsymbol{\tilde{\mathfrak{v}}}_{\flat}, \boldsymbol{\mathcal{W}}_{\flat}^{\mathcal{B}} \right) \coloneq \mathcal{C}_{k^{\mathcal{B}}l^{\mathcal{B}}} \left( \boldsymbol{\tilde{\mathfrak{v}}}_{\flat}, \boldsymbol{\mathcal{W}}_{\flat}^{\mathcal{B}} \right) \ketbra{k^{\mathcal{B}}}{l^{\mathcal{B}}},
\end{equation}
\begin{equation}\label{eq:26-2}
    \rho_{\mathcal{C}}^{\mathcal{T}} \left( \mathbf{y}, \boldsymbol{\mathcal{W}}^{\mathcal{T}} \right) \coloneq \mathcal{C}_{k^{\mathcal{T}}l^{\mathcal{T}}} \left( \mathbf{y}, \boldsymbol{\mathcal{W}}^{\mathcal{T}} \right) \ketbra{k^{\mathcal{T}}}{l^{\mathcal{T}}},
\end{equation}
\begin{equation}\label{eq:26-3}
    \rho_{\mathcal{Q}}^{\mathcal{B}} \left( \hat{O}^{\mathcal{B}}, \boldsymbol{\Theta}_{p}^{\mathcal{B}} \right) \coloneq \mathcal{P}_{m^{\mathcal{B}}n^{\mathcal{B}}} \left( \hat{O}^{\mathcal{B}}, \boldsymbol{\Theta}_{p}^{\mathcal{B}} \right) \ketbra{\Psi_{m^{\mathcal{B}}} \left( \boldsymbol{\Theta}_{p}^{\mathcal{B}} \right)}{\Psi_{n^{\mathcal{B}}} \left( \boldsymbol{\Theta}_{p}^{\mathcal{B}} \right)}
\end{equation}
and
\begin{equation}\label{eq:26-4}
    \rho_{\mathcal{Q}}^{\mathcal{T}} \left( \hat{O}^{\mathcal{T}}, \boldsymbol{\Theta}_{p}^{\mathcal{T}} \right) \coloneq \mathcal{P}_{m^{\mathcal{T}}n^{\mathcal{T}}} \left( \hat{O}^{\mathcal{T}}, \boldsymbol{\Theta}_{p}^{\mathcal{T}} \right) \ketbra{\Psi_{m^{\mathcal{T}}} \left( \boldsymbol{\Theta}_{p}^{\mathcal{T}} \right)}{\Psi_{n^{\mathcal{T}}} \left( \boldsymbol{\Theta}_{p}^{\mathcal{T}} \right)},
\end{equation}
which act in the composite Hilbert space \( \mathcal{H} = \mathcal{H}_{\mathcal{C}}^{\mathcal{B}} \otimes \mathcal{H}_{\mathcal{Q}}^{\mathcal{B}} \otimes \mathcal{H}_{\mathcal{C}}^{\mathcal{T}} \otimes \mathcal{H}_{\mathcal{Q}}^{\mathcal{T}} \) with \( \mathcal{H}_{\nu \in \{ \mathcal{C}, \mathcal{Q} \}}^{\nu \in \{ \mathcal{B}, \mathcal{T} \}} \) being \(\mathcal{H}_{\nu}^{\mu} \coloneq \mbox{span} \left\{ \ket{i_{k_{\mu}^{\nu}}}^{\otimes k_{\mu}^{\nu}} \right\}_{i_{k_{\mu}^{\nu}} \in \{0,1\} \, \forall \, k_{\mu}^{\nu} \in \{ 1, \dots, N \}}\) and \( N \in \mathbb{N}^{\ast} \) \cite{ReddyVadyala202300, Trahan202400, Xiao202400, Markidis202200, Siegl202400, Criado202300, Yadav202300, Sargsyana202400, Sedykh202400, Batra202400, Panichi202500}. Which encode classical and quantum information through the functions \(\mathcal{C}_{k^{\mathcal{B}}l^{\mathcal{B}}} \left( \cdot, \boldsymbol{\mathcal{W}}_{\flat}^{\mathcal{B}} \right): \mathbb{R}^{m M \times 1} \longrightarrow \mathbb{R}, \, \forall \, k^{\mathcal{B}},l^{\mathcal{B}} \in \mathcal{I}_{N_{\mathcal{B}}}\), \(\mathcal{C}_{k^{\mathcal{T}}l^{\mathcal{T}}} \left( \cdot, \boldsymbol{\mathcal{W}}^{\mathcal{T}} \right): D \longrightarrow \mathbb{R}, \, \forall \, k^{\mathcal{T}},l^{\mathcal{T}} \in \mathcal{I}_{N_{\mathcal{T}}}\), \(\mathcal{P}_{m^{\mathcal{B}}n^{\mathcal{B}}} \left( \cdot, \boldsymbol{\Theta}_{p}^{\mathcal{B}} \right): \mathbb{O}^{\mathcal{B}} \longrightarrow \mathbb{R}, \, \forall \, m^{\mathcal{B}},n^{\mathcal{B}} \in \mathcal{I}_{N_{\mathcal{B}}}, p \in \mathcal{I}_{P}\) and \(\mathcal{P}_{m^{\mathcal{T}}n^{\mathcal{T}}} \left( \cdot, \boldsymbol{\Theta}_{p}^{\mathcal{T}} \right): \mathbb{O}^{\mathcal{T}} \longrightarrow \mathbb{R}, \, \forall \, m^{\mathcal{T}},n^{\mathcal{T}} \in \mathcal{I}_{N_{\mathcal{T}}}, p \in \mathcal{I}_{P}\).

The second of these ingredients is the Hermitian coupling operator \( \hat{\mathfrak{O}} \) defined by:
\begin{equation}\label{eq:27} 
    \hat{\mathfrak{O}} = [\hat{A}_{\mathcal{C}}^{\mathcal{B}} + \mathds{1}] \otimes \hat{O}^{\mathcal{B}} \otimes [\hat{A}_{\mathcal{C}}^{\mathcal{T}} + \mathds{1}] \otimes \hat{O}^{\mathcal{T}} + \mathds{1} \otimes \hat{O}^{\mathcal{B}} \otimes \mathds{1} \otimes \hat{O}^{\mathcal{T}},
\end{equation}
where
\begin{equation}\label{eq:27-1}
    \hat{A}_{\mathcal{C}}^{\mathcal{B}} \left( \boldsymbol{\tilde{\mathfrak{v}}}_{\flat}, \boldsymbol{\mathcal{W}}_{\flat}^{\mathcal{B}} \right) = \mathcal{C}_{k^{\mathcal{B}}l^{\mathcal{B}}}^{\prime} \left( \boldsymbol{\tilde{\mathfrak{v}}}_{\flat}, \boldsymbol{\mathcal{W}}_{\flat}^{\mathcal{B}} \right) \ketbra{k^{\mathcal{B}}}{l^{\mathcal{B}}}
\end{equation}
and
\begin{equation}\label{eq:27-2}
    \hat{A}_{\mathcal{C}}^{\mathcal{T}} \left( \mathbf{y}, \boldsymbol{\mathcal{W}}^{\mathcal{T}} \right) = \mathcal{C}_{k^{\mathcal{T}}l^{\mathcal{T}}}^{\prime} \left( \mathbf{y}, \boldsymbol{\mathcal{W}}^{\mathcal{T}} \right) \ketbra{k^{\mathcal{T}}}{l^{\mathcal{T}}}
\end{equation}
are operators that also encode part of the classical information in the coefficient functions \(\mathcal{C}_{k^{\mathcal{B}}l^{\mathcal{B}}}^{\prime} \left( \cdot, \boldsymbol{\mathcal{W}}_{\flat}^{\mathcal{B}} \right): \mathbb{R}^{m M \times 1} \longrightarrow \mathbb{R}, \quad k^{\mathcal{B}},l^{\mathcal{B}} \in \mathcal{I}_{N}\) and \(\mathcal{C}_{k^{\mathcal{T}}l^{\mathcal{T}}}^{\prime} \left( \cdot, \boldsymbol{\mathcal{W}}^{\mathcal{T}} \right): D \longrightarrow \mathbb{R}, \quad k^{\mathcal{T}},l^{\mathcal{T}} \in \mathcal{I}_{N}\).

On the other hand, the parameterized quantum state \(\ket{\Psi(\boldsymbol{\Theta}_{p}^{\nu})}\), corresponding to the \(p\)-th computational node of the branch (\(\nu=\mathcal{B}\)) or trunk (\(\nu=\mathcal{T}\)) sub-networks, is constructed using a generic universal parameterized quantum circuit (PQC) \cite{Farhi201800, Grant201800, Ding202400, Lloyd202000, Beer202200, Schuld202100, Peruzzo201400}. For a circuit with \(N_{\nu}\) qubits and depth \(\mathcal{N}_{\nu}\), the local quantum parameters are encoded in the matrix \(\boldsymbol{\Theta}_{p}^{\nu} \in \mathbb{R}^{N_{\nu} \times \mathcal{N}_{\nu}}\). Consequently, the global set of all trainable quantum angles for each sub-network is fully defined by the parameter tensor \(\boldsymbol{\Xi}^{\nu} \in \mathbb{R}^{P \times N_{\nu} \times \mathcal{N}_{\nu}}\). Quantum evolution is driven by the alternation between continuous parametric layers \(\widehat{\mathcal{VL}}_{\text{Param}}\)—containing standard rotation and controlled-phase gates—and discrete layers \(\widehat{\mathcal{VL}}_{\text{Discr}, p,k}\), constructed from elements of the Clifford group augmented with Toffoli gates. This structure allows us to define the global unitary operator for the \(p\)-th node as
\begin{equation}\label{eq:28}
    \widehat{U}\left(\boldsymbol{\Theta}_{p}^{\nu}\right) = \prod_{k \in \mathcal{I}_{\mathcal{N}_{\nu}}} \left[ \widehat{\mathcal{VL}}_{\text{Param}} \left( \theta_{p,k}^{\nu} \right) \right] \left[ \widehat{\mathcal{VL}}_{\text{Discr}, p,k} \right],
\end{equation}
which remains entirely agnostic regarding the specific choice of quantum hardware or circuit analysis. Finally, by applying this general unitary operator to the reference vacuum state \(\ket{0}^{\otimes N_\nu}\), we obtain the complete set of parameterized quantum states for our architecture:
\begin{equation}\label{eq:29}
    \left\{ \ket{\Psi \left( \boldsymbol{\Theta}_{p}^{\nu} \right)} = \widehat{U} \left( \boldsymbol{\Theta}_{p}^{\nu} \right) \ket{0}^{\otimes N_\nu} \right\}_{\nu \in \{ \mathcal{B}, \mathcal{T} \}, p \in \mathcal{I}_{P}}.
\end{equation}

This way, when taking the dash \(\Tr{ \rho_{T} \hat{\mathcal{D}}}\), we obtain the map:
\begin{equation}\label{eq:30}
    \mathfrak{G}_{j} \left( \cdot, \hat{O}^{\mathcal{B}}, \boldsymbol{\mathcal{W}}_{\flat}^{\mathcal{B}}, \boldsymbol{\Theta}^{\mathcal{B}}_{p} \right) \left( \cdot, \hat{O}^{\mathcal{T}}, \boldsymbol{\mathcal{W}}^{\mathcal{T}}, \boldsymbol{\Theta}^{\mathcal{T}}_{p} \right): \mathbb{R}^{m M \times 1} \longrightarrow C_{\mathbb{R}} \left( D \right),
\end{equation}
whose matching rule,
\begin{equation}\label{eq:31}
    \begin{split}
        &\mathfrak{G}_{j} \left( \cdot, \hat{O}^{\mathcal{B}}, \boldsymbol{\mathcal{W}}_{\flat}^{\mathcal{B}}, \boldsymbol{\Theta}^{\mathcal{B}}_{p} \right) \left( \cdot, \hat{O}^{\mathcal{T}}, \boldsymbol{\mathcal{W}}^{\mathcal{T}}, \boldsymbol{\Theta}^{\mathcal{T}}_{p} \right) = \mathfrak{B}_{jp^{\prime}} \left( \cdot, \hat{O}^{\mathcal{B}}, \boldsymbol{\mathcal{W}}_{\flat}^{\mathcal{B}}, \boldsymbol{\Theta}^{\mathcal{B}}_{p^{\prime}} \right) \\
        &\hspace{3.5cm} \times \mathfrak{T}_{p^{\prime}}\left( \cdot, \hat{O}^{\mathcal{T}}, \boldsymbol{\mathcal{W}}^{\mathcal{T}}, \boldsymbol{\Theta}^{\mathcal{T}}_{p} \right) + \mathfrak{b}_{j} \left( \hat{O}^{\mathcal{B}}, \hat{O}^{\mathcal{T}}, \boldsymbol{\Theta}^{\mathcal{B}}_{p}, \boldsymbol{\Theta}^{\mathcal{T}}_{p} \right),
    \end{split}
\end{equation}
is written in terms of functions
\begin{equation}\label{eq:32}
    \begin{split}
        &\mathfrak{B}_{jp^{\prime}} \left( \cdot, \hat{O}^{\mathcal{B}}, \boldsymbol{\mathcal{W}}_{\flat}^{\mathcal{B}}, \boldsymbol{\Theta}^{\mathcal{B}}_{p} \right) = \left[\hat{\beta}_{jp^{\prime}} \left( \cdot, \boldsymbol{\mathcal{W}}_{\flat}^{\mathcal{B}} \right) + 1\right] \\
        &\hspace{3.5cm} \times \mathcal{P}_{m^{\mathcal{B}}n^{\mathcal{B}}} \bra{\Psi_{m^{\mathcal{B}}} \left( \boldsymbol{\Theta}^{\mathcal{B}}_{p} \right) } \hat{O}^{\mathcal{B}} \ket{\Psi_{n^{\mathcal{B}}} \left( \boldsymbol{\Theta}^{\mathcal{B}}_{p} \right) },
    \end{split}
\end{equation}
\begin{equation}\label{eq:33}
    \begin{split}
        &\mathfrak{T}_{p^{\prime}}\left( \cdot, \hat{O}^{\mathcal{T}}, \boldsymbol{\mathcal{W}}^{\mathcal{T}}, \boldsymbol{\Theta}^{\mathcal{T}}_{p} \right) = \left[\tau_{p^{\prime}} \left( \cdot, \boldsymbol{\mathcal{W}}^{\mathcal{T}} \right) + 1\right] \\
        &\hspace{3.5cm} \times \mathcal{P}_{m^{\mathcal{T}}n^{\mathcal{T}}} \bra{\Psi_{m^{\mathcal{T}} } \left( \boldsymbol{\Theta}^{\mathcal{T}}_{p} \right) } \hat{O}^{\mathcal{T}} \ket{\Psi_{n^{\mathcal{T}}} \left( \boldsymbol{\Theta}^{\mathcal{T}}_{p} \right)},
    \end{split}
\end{equation}
and
\begin{equation}\label{eq:34}
    \begin{split}
        &\mathfrak{b}_{j} \left( \hat{O}^{\mathcal{B}}, \hat{O}^{\mathcal{T}}, \boldsymbol{\Theta}^{\mathcal{B}}_{p}, \boldsymbol{\Theta}^{\mathcal{T}}_{p} \right) = \mathcal{P}_{m^{\mathcal{B}}n^{\mathcal{B}}} \bra{\Psi_{m^{\mathcal{T}} } \left( \boldsymbol{\Theta}^{\mathcal{B}}_{p} \right) } \hat{O}^{\mathcal{B}} \ket{\Psi_{n^{\mathcal{B}}} \left( \boldsymbol{\Theta}^{\mathcal{B}}_{p} \right) } \\
        &\hspace{3.5cm} \times \mathcal{P}_{m^{\mathcal{T}}n^{\mathcal{T}}}\bra{\Psi_{m^{\mathcal{T}} } \left( \boldsymbol{\Theta}^{\mathcal{T}}_{p} \right) } \hat{O}^{\mathcal{T}} \ket{\Psi_{n^{\mathcal{T}}} \left( \boldsymbol{\Theta}^{\mathcal{T}}_{p} \right) }.
    \end{split}
\end{equation}
We can approximate the \( j \)-th component of \( \boldsymbol{\mathcal{G}} \) with arbitrary precision. That is, for any \( \epsilon_{j} > 0 \) with \( j \in \mathcal{I}_{M} = \left\{1, \dots, M \right\} \), \( \boldsymbol{\mathfrak{v}} \in \mathbb{V}^{M} \), \( \boldsymbol{\tilde{\mathfrak{v}}}_{\flat} \in \mathbb{R}^{m M \times 1} \), and \( \mathbf{y} \in D \) exists \( \hat{O}^{\mathcal{B}} \in \mathbb{O}^{\mathcal{B}} \), \( \hat{O}^{\mathcal{T}} \in \mathbb{O}^{\mathcal{T}} \) and a set of weights and biases \( \boldsymbol{\mathcal{W}}_{\flat}^{\mathcal{B} \ast} \), \( \boldsymbol{\mathcal{W}}^{\mathcal{T} \ast} \), \( \boldsymbol{\Theta}_{p}^{\mathcal{B} \ast} \), \( \boldsymbol{\Theta}_{p}^{\mathcal{T} \ast} \) such that
\begin{equation}\label{eq:35}
    \left\Vert \mathfrak{G}_{j} \left( \boldsymbol{\tilde{\mathfrak{v}}}_{\flat}, \hat{O}^{\mathcal{B}}, \boldsymbol{\mathcal{W}}_{\flat}^{\mathcal{B} \ast}, \boldsymbol{\Theta}^{\mathcal{B} \ast}_{p} \right) \left( \cdot, \hat{O}^{\mathcal{T}}, \boldsymbol{\mathcal{W}}^{\mathcal{T} \ast}, \boldsymbol{\Theta}^{\mathcal{T} \ast}_{p} \right) - \mathcal{G}_{j} \left( \boldsymbol{\mathfrak{v}} \right) \left( \cdot \right) \right\Vert_{\infty} \leq \epsilon_{j}.
\end{equation}
Therefore, when using the properties of the triangular inequality and considering \( \epsilon = \epsilon_{1} + \dots + \epsilon_{M} \), we finally obtain
\begin{equation}\label{eq:36}
    \left\Vert \boldsymbol{\mathfrak{G}} \left( \boldsymbol{\tilde{\mathfrak{v}}}_{\flat}, \hat{O}^{\mathcal{B}}, \boldsymbol{\mathcal{W}}_{\flat}^{\mathcal{B} \ast}, \boldsymbol{\Xi}^{\mathcal{B} \ast} \right) \left( \cdot, \hat{O}^{\mathcal{T}}, \boldsymbol{\mathcal{W}}^{\mathcal{T} \ast}, \boldsymbol{\Xi}^{\mathcal{T} \ast} \right) - \boldsymbol{\mathcal{G}} \left( \boldsymbol{\mathfrak{v}} \right) \left( \cdot \right) \right\Vert_{\infty,M} \leq \epsilon,
\end{equation}
and thus, \( \boldsymbol{\mathfrak{G}} \) is defined by the map
\begin{equation}\label{eq:37}
    \boldsymbol{\mathfrak{G}} \left( \cdot, \hat{O}^{\mathcal{B}}, \boldsymbol{\mathcal{W}}_{\flat}^{\mathcal{B}}, \boldsymbol{\Xi}^{\mathcal{B}} \right) \left( \cdot, \hat{O}^{\mathcal{T}}, \boldsymbol{\mathcal{W}}^{\mathcal{T}}, \boldsymbol{\Xi}^{\mathcal{T}} \right): \mathbb{R}^{m M \times 1} \longrightarrow C_{\mathbb{R}^{M}} \left( D \right),
\end{equation}
with matching rule
\begin{equation}\label{eq:38}
    \begin{split}
        &\boldsymbol{\mathfrak{G}} \left( \boldsymbol{\tilde{\mathfrak{v}}}_{\flat}, \hat{O}^{\mathcal{B}}, \boldsymbol{\mathcal{W}}_{\flat}^{\mathcal{B} \ast}, \boldsymbol{\Xi}^{\mathcal{B} \ast} \right) \left( \cdot, \hat{O}^{\mathcal{T}}, \boldsymbol{\mathcal{W}}^{\mathcal{T} \ast},\boldsymbol{\Xi}^{\mathcal{T} \ast} \right) = \\
        &\hspace{1.5cm} \left( \mathfrak{G}_{1} \left( \boldsymbol{\tilde{\mathfrak{v}}}_{\flat}, \hat{O}^{\mathcal{B}}, \boldsymbol{\mathcal{W}}_{\flat}^{\mathcal{B} \ast}, \boldsymbol{\Theta}^{\mathcal{B} \ast}_{1} \right) \left( \cdot, \hat{O}^{\mathcal{T}}, \boldsymbol{\mathcal{W}}^{\mathcal{T} \ast}, \boldsymbol{\Theta}^{\mathcal{T} \ast}_{1} \right), \dots \right.\\.
        &\left. \hspace{3.5cm} \dots, \mathfrak{G}_{M} \left( \boldsymbol{\tilde{\mathfrak{v}}}_{\flat}, \hat{O}^{\mathcal{B}}, \boldsymbol{\mathcal{W}}_{\flat}^{\mathcal{B} \ast}, \boldsymbol{\Theta}^{\mathcal{B} \ast}_{P} \right) \left( \cdot, \hat{O}^{\mathcal{T}}, \boldsymbol{\mathcal{W}}^{\mathcal{T}\ast}, \boldsymbol{\Theta}^{\mathcal{T} \ast}_{P} \right) \right),
    \end{split}
\end{equation}
Consequently, in a manner analogous to the previously presented classical DeepONet model, we define the ``Physics-Informed'' component that enables our hybrid architecture to approximate \( \boldsymbol{\mathcal{G}} \)—which maps the solution function \( \mathbf{u} \) of the problem defined by expressions (\ref{eq:04})–(\ref{eq:07})—via the loss function
\begin{equation}\label{eq:39}
    \tilde{\mathcal{L}} = \lambda_{IC} \tilde{\mathcal{L}}_{IC} + \lambda_{BC} \tilde{\mathcal{L}}_{BC} + \lambda_{PDE} \tilde{\mathcal{L}}_{PDE} + \lambda_{SOL} \hat{\mathcal{L}}_{SOL},
\end{equation}
where
\begin{align}
    &\tilde{\mathcal{L}}_{IC} = \frac{1}{M N_{1}} \sum_{\substack{\ell_{1} \in \mathcal{I}_{N_{1}}}} \left\Vert \boldsymbol{\mathfrak{G}} \left( \boldsymbol{\tilde{\mathfrak{v}}}_{\flat}, \hat{O}^{\mathcal{B}}, \boldsymbol{\mathcal{W}}_{\flat}^{\mathcal{B}}, \boldsymbol{\Xi}^{\mathcal{B}} \right) \left( \mathbf{x}_{\ell_{1}}, t_{0}, \hat{O}^{\mathcal{T}}, \boldsymbol{\mathcal{W}}^{\mathcal{T}}, \boldsymbol{\Xi}^{\mathcal{T}} \right) - \mathbf{u}_{0} \left( \mathbf{x}_{\ell_{1}} \right) \right\Vert_{2,M}^{2}, \label{eq:40} \\
    &\tilde{\mathcal{L}}_{BC} = \frac{1}{M N_{2}} \sum_{\substack{\ell_{2} \in \mathcal{I}_{N_{2}}}} \left\Vert \boldsymbol{\mathfrak{G}} \left( \boldsymbol{\tilde{\mathfrak{v}}}_{\flat}, \hat{O}^{\mathcal{B}}, \boldsymbol{\mathcal{W}}_{\flat}^{\mathcal{B}}, \boldsymbol{\Xi}^{\mathcal{B}} \right) \left( \mathbf{x}_{\mathcal{B}}, t_{\ell_{2}}, \hat{O}^{\mathcal{T}}, \boldsymbol{\mathcal{W}}^{\mathcal{T}}, \boldsymbol{\Xi}^{\mathcal{T}} \right) - \mathbf{u}_{B} \left( t_{\ell_{2}} \right) \right\Vert_{2,M}^{2}, \label{eq:41} \\
    &\tilde{\mathcal{L}}_{PDE} = \frac{1}{M N_{1} N_{2}} \sum_{\substack{\ell_{1} \in \mathcal{I}_{N_{1}} \\ \ell_{2} \in \mathcal{I}_{N_{2}} }} \bigg\Vert \mathcal{F}\left( \mathbf{x}_{\ell_{1}}, t_{\ell_{2}}, \boldsymbol{\mathfrak{G}} \left( \boldsymbol{\tilde{\mathfrak{v}}}_{\flat}, \hat{O}^{\mathcal{B}}, \boldsymbol{\mathcal{W}}_{\flat}^{\mathcal{B}}, \boldsymbol{\Xi}^{\mathcal{B}} \right) \left( \mathbf{x}_{\ell_{1}}, t_{\ell_{2}}, \hat{O}^{\mathcal{T}}, \boldsymbol{\mathcal{W}}^{\mathcal{T}}, \boldsymbol{\Xi}^{\mathcal{T}} \right),\right. \nonumber \\
    &\left. \hspace{7.125cm}  \hat{\mathcal{L}}_{1} \left[ \cdot \right], \dots, \hat{\mathcal{L}}_{M} \left[ \cdot \right] \right) - \mathbf{s} \left( \mathbf{x}_{\ell_{1}}, t_{\ell_{2}} \right) \bigg\Vert_{2,M}^{2}, \label{eq:42} \\
    &\tilde{\mathcal{L}}_{SOL} = \frac{1}{M N_{1} N_{2}} \sum_{\substack{\ell_{1} \in \mathcal{I}_{N_{1}} \\ \ell_{2} \in \mathcal{I}_{N_{2}} }} \left\Vert \boldsymbol{\mathfrak{G}} \left( \boldsymbol{\tilde{\mathfrak{v}}}_{\flat}, \hat{O}^{\mathcal{B}}, \boldsymbol{\mathcal{W}}_{\flat}^{\mathcal{B}}, \boldsymbol{\Xi}^{\mathcal{B}} \right) \left( \mathbf{x}_{\ell_{1}}, t_{\ell_{2}}, \hat{O}^{\mathcal{T}}, \boldsymbol{\mathcal{W}}^{\mathcal{T}}, \boldsymbol{\Xi}^{\mathcal{T}} \right) \right. \nonumber \\
    &\left. \hspace{9.85cm}- \mathbf{u} \left( \mathbf{x}_{\ell_{1}}, t_{\ell_{2}} \right) \right\Vert_{2,M}^{2}, \label{eq:43}
\end{align}
and where again the hyperparameters \( \lambda_{IC} \), \( \lambda_{BC} \), \( \lambda_{PDE} \), and \( \lambda_{SOL} \) weight the physical constraints imposed by the system of interest.

Therefore, the feasibility and effectiveness of the model depend on determining the optimal set of classical and quantum parameters, denoted by \(\boldsymbol{\mathcal{W}}_{\flat}^{\mathcal{B} \ast}\), \(\boldsymbol{\mathcal{W}}^{\mathcal{T} \ast}\), \(\boldsymbol{\Xi}^{\mathcal{B} \ast}\), and \(\boldsymbol{\Xi}^{\mathcal{T} \ast}\). As in the classical approach, convergence toward these ideal parameters is achieved by minimizing the total loss function. During this optimization process, the model is trained to simultaneously satisfy the governing equation (\ref{eq:04}), the initial and boundary conditions (\ref{eq:06}) and (\ref{eq:07}), and—when available—data from known empirical or analytical solutions \(\mathbf{u} \left( \mathbf{x}_{\ell_{1}}, t_{\ell_{2}} \right)\). To ensure stable convergence, gradient propagation and the exact calculation of derivatives associated with the physical constraints are efficiently performed using Automatic Differentiation (AD) \cite{Raissi201900, Wang202100, Karniadakis202100}.


\section{Theoretical results}\label{sec:4}
The classical-quantum hybrid model QPI-DeepONet-MAC, previously presented, is formally underpinned by a rigorous theoretical framework that ensures its expressivity, trainability, and information efficiency. To properly establish these mathematical foundations, we first clarify the regularity assumptions for functions defined on compact domains.
\begin{remark}\label{remark:3.1}
    Although the definition of class \(C^{k}\) is based on open, consequently not necessarily compact sets, in this work we adapt this definition to compact sets as follows: we say that a function \(f: K \to \mathbb{R}^{M}\) defined on a generic compact set \(K\) is of class \(C^{k}\), written herein as \(f\in C_{\mathbb{R}^{M}}^{k}\left(K\right)\), if it admits a regular extension beyond its structural boundaries. Formally, this means that there exists an open set \(U\) containing \(K\) (\(K \subset U\)) and a function \(F: U \to \mathbb{R}^{M}\) of class \(C_{\mathbb{R}^{M}}^{k}\left(U\right)\) in the traditional sense on the open set \(U\), such that \(F(x) = f(x)\) for all \(x \in K\). This classical approach bypasses the geometric constraint of the boundary of \(K\), inheriting the regularity and the computation of derivatives from the open neighborhood \(U\).
\end{remark}
Supported by these topological considerations, we can now formalize the core representation capacity of our architecture \cite{Ismailov202600, Chen199500}. The following theorem establishes that the QPI-DeepONet-MAC retains the universal approximation property, ensuring its theoretical capability to map the continuous solutions of differential equations, analogous to what is shown in Refs. \cite{Lu202100, Lanthaler202200, Jiao202600, Ouyang202600, Chen202600, Zhang202500, Amin202600, Wang202100, Lantigua202600, Afrah202500, Perez-Salinas202500, Goto202100, Cai202102, Lin202300}.
\begin{theorem}\label{theorem:Teo-3.1}
    Let \( \mathcal{K} \subset D \) be a compact Hausdorff space and \( \mathbb{V} \subset C_{\mathbb{R}} \left( \mathbb{K} \right) \) a compact subset of functions. Let \( \mathfrak{G}_{j} \) be functions as in Lemma \ref{lemma:Lem-B.1}. Given an operator \( \boldsymbol{\mathcal{G}}: \mathbb{V}^{M} \longrightarrow C_{\mathbb{R}^{M}} \left( \mathcal{K} \right) \) capable of mapping component-wise a function \( \mathbf{u}: \mathcal{K} \longrightarrow \mathbb{R}^{M} \), such that \(\mathbf{u} \in C_{\mathbb{R}^{M}}^{\infty} \left( \mathcal{K} \right)\) and it is a solution of (\ref{eq:04}). Then, for any \( \epsilon_{j} > 0 \), \( \boldsymbol{\mathfrak{v}} \in \mathbb{V}^{M} \), \( \boldsymbol{\tilde{\mathfrak{v}}}_{\flat} \in \mathbb{R}^{m M \times 1} \), \( \mathbf{y} \in \mathcal{K} \), there exist \( \hat{O}^{\mathcal{B}} \in \mathbb{O}^{\mathcal{B}} \), \( \hat{O}^{\mathcal{T}} \in \mathbb{O}^{\mathcal{T}} \) and a set of weights and biases \( \boldsymbol{\mathcal{W}}_{\flat}^{\mathcal{B} \ast} \), \( \boldsymbol{\mathcal{W}}^{\mathcal{T} \ast} \), \( \boldsymbol{\Theta}_{p}^{\mathcal{B} \ast} \), \( \boldsymbol{\Theta}_{p}^{\mathcal{T} \ast} \), such that (\ref{eq:35}) is satisfied. Consequently, for any \( \epsilon > 0 \), the operator \( \boldsymbol{\mathfrak{G}} \) approximates \( \boldsymbol{\mathcal{G}} \) with the total error (\ref{eq:36}) in networks whose total number of parameters satisfies \( \mathcal{O} \left( \mathcal{N_{P_{C}}^{\mathcal{B}}} \right) \approx \mathcal{O} \left( \mathcal{N}^{\mathcal{B}} N^{\mathcal{B}} \right) \) and \( \mathcal{O} \left( \mathcal{N_{P_{C}}^{\mathcal{T}}} \right) \approx \mathcal{O} \left( \mathcal{N}^{\mathcal{T}} N^{\mathcal{T}} \right) \).
\end{theorem}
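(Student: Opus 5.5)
The strategy is to reduce the statement to the classical universal approximation theorem for DeepONets recalled in (\ref{eq:18}), which rests on Lemma \ref{lemma:Lem-B.1} and the Tauber--Wiener property of $\sigma$. The MAC structure (\ref{eq:31})--(\ref{eq:34}) is then treated as an affine reparametrization of the classical output, one that can be inverted by a suitable choice of quantum parameters and observables.

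\textbf{Step 1: collapse the quantum factors to constants.} Write $q^{\mathcal{B}}_{p}=\mathcal{P}_{m^{\mathcal{B}}n^{\mathcal{B}}}\bra{\Psi_{m^{\mathcal{B}}}(\boldsymbol{\Theta}^{\mathcal{B}}_{p})}\hat{O}^{\mathcal{B}}\ket{\Psi_{n^{\mathcal{B}}}(\boldsymbol{\Theta}^{\mathcal{B}}_{p})}$, and define $q^{\mathcal{T}}_{p}$ analogously. These quantities do not depend on $\boldsymbol{\tilde{\mathfrak{v}}}_{\flat}$ or $\mathbf{y}$. The map then becomes
\begin{equation*}
\mathfrak{G}_{j}=q^{\mathcal{B}}q^{\mathcal{T}}\left[\hat{\beta}_{jp'}+1\right]\left[\tau_{p'}+1\right]+q^{\mathcal{B}}q^{\mathcal{T}}.
\end{equation*}
Because the PQC in (\ref{eq:28}) is universal and $\mathbb{O}^{\nu}$ contains Hermitian operators, I will choose $\boldsymbol{\Theta}^{\nu\ast}_{p}$ and $\hat{O}^{\nu}$ so that $q^{\mathcal{B}}=q^{\mathcal{T}}=c\neq 0$. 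For example, take $\hat{O}^{\nu}=\hat{U}(\boldsymbol{\Theta}^{\nu\ast}_{p})Z_{1}\hat{U}^{\dagger}(\boldsymbol{\Theta}^{\nu\ast}_{p})$ with a pure state and $\mathcal{P}$ having a single entry, which gives $c=1$.

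\textbf{Step 2: expand and match a classical DeepONet.} Expanding the product gives
\begin{equation*}
\mathfrak{G}_{j}=c^{2}\left(\hat{\beta}_{jp'}\tau_{p'}+\sum_{p'}\hat{\beta}_{jp'}+\sum_{p'}\tau_{p'}+P+1\right).
\end{equation*}
Apply the classical theorem, i.e. (\ref{eq:18}) together with Lemma \ref{lemma:Lem-B.1}, to the shifted target $c^{-2}\mathcal{G}_{j}-(P+1)$, with tolerance $\epsilon_{j}/(3c^{2})$. This yields branch and trunk networks with $|\hat{\beta}^{0}_{jp'}\tau^{0}_{p'}-(c^{-2}\mathcal{G}_{j}-P-1)|$ small on $\mathbb{V}^{M}\times\mathcal{K}$.

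The cross terms $\sum\hat{\beta}$ and $\sum\tau$ still have to be absorbed. I would do this by augmenting the width: append one extra trunk neuron that is approximately constant, equal to $-1$ after the shift. Since $\sigma$ is TW, constants are approximable on compacta. Pair it with branch coefficients chosen so that the linear terms are cancelled up to an error bounded via the sup-norm bounds of the networks. Alternatively, one can approximate the composite target $c^{-2}\mathcal{G}_{j}-\sum\hat{\beta}-\sum\tau-P-1$ iteratively, but the width trick is cleaner.

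\textbf{Step 3: control the error and count parameters.} Continuity and compactness (Arzel\`a--Ascoli for $\mathbb{V}$, and $\mathcal{K}$ compact Hausdorff) ensure that all networks are uniformly bounded. The triangle inequality then gives (\ref{eq:35}), and summing over $j$ gives (\ref{eq:36}). The parameter count follows because each PQC node has $N^{\nu}\times\mathcal{N}^{\nu}$ angles by (\ref{eq:28}), which dominates the constant overhead added by the MAC and the extra neuron.

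\textbf{Main obstacle.} The hardest part is the exact cancellation of the additive cross terms $\sum_{p'}\hat{\beta}_{jp'}+\sum_{p'}\tau_{p'}$ produced by the ``$+1$'' shifts. The output activation $\sigma$ in (\ref{eq:12})--(\ref{eq:13}) restricts which values the last layer can take, so a rigorous argument must show that the augmented neuron can realize the needed constant within the range of $\sigma$. If this fails, I would instead fold the correction into the target function and invoke the density argument of Lemma \ref{lemma:Lem-B.1} directly on the product set.
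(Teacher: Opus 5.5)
Your route differs from the paper's. In Appendix~\ref{sec:Ap-C} the paper adds and subtracts \(b_{0}\), \(\hat{\beta}_{jp'}\tau_{p'}\) and \(\mathfrak{B}_{jp'}\tau_{p'}\). This splits \(\mathfrak{G}_{j}-\mathcal{G}_{j}\) into four pieces: \([\mathfrak{B}_{jp'}-\hat{\beta}_{jp'}]\tau_{p'}\), \([\mathfrak{T}_{p'}-\tau_{p'}]\mathfrak{B}_{jp'}\), a classical DeepONet error handled by Theorem~\ref{theorem:Teo-A.2}, and a quantum-bias error \(\mathfrak{b}_{j}-b_{0}\). The paper then takes from Lemma~\ref{lemma:Lem-B.1} and the QPINN-MAC reference that each hybrid factor approximates its classical counterpart and that \(\mathfrak{b}_{j}\) can be tuned to \(b_{0}\).

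You instead freeze the expectation values to a nonzero constant \(c\) and read \(\mathfrak{G}_{j}\) as an exact affine reparametrization of a classical DeepONet, so only the classical operator theorem is needed. This is more self-contained, and it avoids a delicate step of the paper's decomposition. Since the expectation values do not depend on the inputs, with shared weights \(\boldsymbol{\mathcal{W}}_{\flat}^{\mathcal{B}\ast}\) one has \(\mathfrak{B}_{jp'}-\hat{\beta}_{jp'}=(q^{\mathcal{B}}-1)\hat{\beta}_{jp'}+q^{\mathcal{B}}\). That is uniformly small only if \(\hat{\beta}_{jp'}\) is nearly constant, whereas you choose the hybrid's classical weights independently of the reference DeepONet. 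The price is that the quantum parameters play no role in your approximation.

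The gap is in Step 2, and it has two parts.
First, the width trick cannot work. After expanding, the residual \(\sum_{p'}\hat{\beta}_{jp'}(\boldsymbol{\tilde{\mathfrak{v}}}_{\flat})+\sum_{p'}\tau_{p'}(\mathbf{y})\) is a branch-only plus a trunk-only function. For nonconstant summands this is not, even approximately, a single separable product. One appended index contributes exactly one product \([\hat{\beta}_{j,P+1}+1][\tau_{P+1}+1]\), so it can cancel at most one of the two parts. It also feeds its own \(\hat{\beta}_{j,P+1}+\tau_{P+1}\) back into the residual. And if ``approximately \(-1\)'' means \(\tau_{P+1}\approx-1\), the new product simply vanishes.
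Second, applying the classical theorem to \(c^{-2}\mathcal{G}_{j}-(P+1)\) is circular, since \(P\) is an output of that theorem.

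The repair is not to expand at all. With \(q^{\mathcal{B}}=q^{\mathcal{T}}=c\) you have exactly \(\mathfrak{G}_{j}=c^{2}\bigl(\sum_{p'}(\hat{\beta}_{jp'}+1)(\tau_{p'}+1)+1\bigr)\). Apply Theorem~\ref{theorem:Teo-A.2} and Corollary~\ref{corollary:Col-A.2} to \(c^{-2}\mathcal{G}_{j}-1\) to get \(\beta^{0}_{jp'},\tau^{0}_{p'}\). These results, not Lemma~\ref{lemma:Lem-B.1} (the Stone--Weierstrass statement about MAC itself), are what (\ref{eq:18}) rests on. Then choose the hybrid's classical subnetworks so that \(\hat{\beta}_{jp'}\approx\beta^{0}_{jp'}-1\) and \(\tau_{p'}\approx\tau^{0}_{p'}-1\) uniformly on the compact input sets.

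This also settles your ``main obstacle''. With an affine last layer it is just a bias shift. With the outer \(\sigma\) of (\ref{eq:12})--(\ref{eq:13}), a continuous bijective \(\sigma\), as Lemma~\ref{lemma:Lem-B.1} assumes, is a homeomorphism of \(\mathbb{R}\). So you can approximate \(\sigma^{-1}\circ(\beta^{0}_{jp'}-1)\) and \(\sigma^{-1}\circ(\tau^{0}_{p'}-1)\) by the pre-activation layers and use uniform continuity of \(\sigma\) on compacta. The product error is then bounded by \(c^{2}\) times the approximation errors times the sup-norms of \(\beta^{0},\tau^{0}\). This gives (\ref{eq:35}) and, summing over \(j\), (\ref{eq:36}).

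Finally, your claim that the \(N^{\nu}\times\mathcal{N}^{\nu}\) angles dominate the parameter count does not follow. The classical widths supplied by the approximation theorem depend on \(\epsilon_{j}\) and \(\boldsymbol{\mathcal{G}}\), not on \(N^{\nu},\mathcal{N}^{\nu}\), and in your construction the angles are frozen. The paper's Appendix~\ref{sec:Ap-C} does not derive that clause either.
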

A fundamental consequence of this approximation capability—and of our specific hybrid coupling scheme—is its stabilizing effect on the optimization landscape. As demonstrated in the following corollary, the architecture actively mitigates the barren plateau phenomenon, maintaining detectable gradients and ensuring trainability even as the quantum components scale \cite{McClean201800, Cerezo202101, Cerezo202100}.
\begin{corollary}\label{corollary:Col-3.1}
    Under the hypotheses of Theorem \ref{theorem:Teo-3.1}, the QPI-DeepONet-MAC architecture mitigates the barren plateaus problem. That is, in the supremum norm, the gradient of the loss function
    \begin{equation}\label{eq:44}
        \begin{split}
            &\mathfrak{L} \left( \boldsymbol{\tilde{\mathfrak{v}}}_{\flat}, \mathbf{y}, \hat{O}^{\mathcal{B}}, \hat{O}^{\mathcal{T}}, \boldsymbol{\mathcal{W}}_{\flat}^{\mathcal{B} \ast}, \boldsymbol{\mathcal{W}}^{\mathcal{T} \ast}, \boldsymbol{\Xi}^{\mathcal{B} \ast}, \boldsymbol{\Xi}^{\mathcal{T} \ast} \right) \coloneq \\
            &\hspace{4.5cm} \displaystyle{\sum_{\substack{j \in \mathcal{I}_{M}}} \mathfrak{L}_{j} \left( \boldsymbol{\tilde{\mathfrak{v}}}_{\flat}, \mathbf{y}, \hat{O}^{\mathcal{B}}, \hat{O}^{\mathcal{T}}, \boldsymbol{\mathcal{W}}_{\flat}^{\mathcal{B} \ast}, \boldsymbol{\mathcal{W}}^{\mathcal{T} \ast}, \boldsymbol{\Xi}^{\mathcal{B} \ast}, \boldsymbol{\Xi}^{\mathcal{T} \ast} \right)},
        \end{split}
    \end{equation}
    where
    \begin{equation}\label{eq:45}
        \begin{split}
            &\mathfrak{L}_{j} \left( \boldsymbol{\tilde{\mathfrak{v}}}_{\flat}, \mathbf{y}, \hat{O}^{\mathcal{B}}, \hat{O}^{\mathcal{T}}, \boldsymbol{\mathcal{W}}_{\flat}^{\mathcal{B} \ast}, \boldsymbol{\mathcal{W}}^{\mathcal{T} \ast}, \boldsymbol{\Xi}^{\mathcal{B} \ast}, \boldsymbol{\Xi}^{\mathcal{T} \ast} \right) = \\
            &\hspace{3.0cm} \left\Vert \mathfrak{G}_{j} \left( \boldsymbol{\tilde{\mathfrak{v}}}_{\flat}, \hat{O}^{\mathcal{B}}, \boldsymbol{\mathcal{W}}_{\flat}^{\mathcal{B} \ast}, \boldsymbol{\Xi}^{\mathcal{B} \ast} \right) \left( \mathbf{y}, \hat{O}^{\mathcal{T}}, \boldsymbol{\mathcal{W}}^{\mathcal{T} \ast}, \boldsymbol{\Xi}^{\mathcal{T} \ast} \right) - u_{j} \left( \mathbf{y} \right) \right\Vert_{2}^{2},
        \end{split}
    \end{equation}
    with respect to the quantum parameters \( \boldsymbol{\Xi}^{\mathcal{B} \ast}, \boldsymbol{\Xi}^{\mathcal{T} \ast} \) is bounded from above by:
    \begin{equation}\label{eq:46}
        \begin{split}
            &\left\Vert \nabla_{\left( \boldsymbol{\Xi}^{\mathcal{B} \ast}, \boldsymbol{\Xi}^{\mathcal{T} \ast} \right)} \mathfrak{L} \left( \boldsymbol{\tilde{\mathfrak{v}}}_{\flat}, \mathbf{y}, \hat{O}^{\mathcal{B}}, \hat{O}^{\mathcal{T}}, \boldsymbol{\mathcal{W}}_{\flat}^{\mathcal{B} \ast}, \boldsymbol{\mathcal{W}}^{\mathcal{T} \ast}, \boldsymbol{\Xi}^{\mathcal{B} \ast}, \boldsymbol{\Xi}^{\mathcal{T} \ast} \right) \right\Vert_{\infty, M} \leq \\
            &\hspace{5.0cm} \mathcal{O} \left( \frac{1}{ \mathcal{N}^{\mathcal{T}} N^{\mathcal{T}} \sqrt{ \mathcal{N}^{\mathcal{B}} N^{\mathcal{B}} } } + \frac{1}{ \mathcal{N}^{\mathcal{B}} N^{\mathcal{B}} \sqrt{ \mathcal{N}^{\mathcal{T}} N^{\mathcal{T}} } } \right),
        \end{split}
    \end{equation}
    where \( \mathcal{N}^{\mathcal{B}} \) is the depth of the quantum component of the branch-net, \(\mathcal{N}^{\mathcal{T}}\) is the depth of the quantum component of the trunk-net, \( N^{\mathcal{B}} \) and \( N^{\mathcal{T}} \) are the number of qubits corresponding to the quantum components of the branch-net and trunk-net, respectively. Consequently, to keep the gradients detectable above a finite measurement precision \( \epsilon_{\mathrm{grad}} \), the problem size, defined by the expression \( \mathcal{O} \left( \left[ \mathcal{N}^{\mathcal{B}} \mathcal{N}^{\mathcal{T}} N^{\mathcal{B}} N^{\mathcal{T}} \sqrt{ \mathcal{N}^{\mathcal{B}} \mathcal{N}^{\mathcal{T}} N^{\mathcal{B}} N^{\mathcal{T}} } \right]/\left[ \mathcal{N}^{\mathcal{B}} N^{\mathcal{B}} \sqrt{ \mathcal{N}^{\mathcal{T}} N^{\mathcal{T}} } + \mathcal{N}^{\mathcal{T}} N^{\mathcal{T}} \sqrt{ \mathcal{N}^{\mathcal{B}} N^{\mathcal{B}} } \right] \right) \), must satisfy the following scaling condition for trainability:
    \begin{equation}\label{eq:47}
        \frac{\mathcal{N}^{\mathcal{B}} \mathcal{N}^{\mathcal{T}} N^{\mathcal{B}} N^{\mathcal{T}} \sqrt{ \mathcal{N}^{\mathcal{B}} \mathcal{N}^{\mathcal{T}} N^{\mathcal{B}} N^{\mathcal{T}} }}{\mathcal{N}^{\mathcal{B}} N^{\mathcal{B}} \sqrt{ \mathcal{N}^{\mathcal{T}} N^{\mathcal{T}} } + \mathcal{N}^{\mathcal{T}} N^{\mathcal{T}} \sqrt{ \mathcal{N}^{\mathcal{B}} N^{\mathcal{B}} }} \lesssim \mathcal{O}\left(\frac{1}{\epsilon_{\mathrm{grad}}}\right).
    \end{equation}
\end{corollary}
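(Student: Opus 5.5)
We start from the explicit structure of $\mathfrak{G}_{j}$ in (\ref{eq:31})--(\ref{eq:34}) and apply the chain rule to $\mathfrak{L}_{j}$ in (\ref{eq:45}):
\begin{equation*}
\nabla_{\boldsymbol{\Xi}}\mathfrak{L}_{j} = 2\bigl(\mathfrak{G}_{j}-u_{j}\bigr)\,\nabla_{\boldsymbol{\Xi}}\mathfrak{G}_{j}.
\end{equation*}
The residual factor is controlled by Theorem \ref{theorem:Teo-3.1}: at the optimal parameters it is at most $\epsilon_{j}$ in the supremum norm. More importantly, it is uniformly bounded because $\mathbb{V}$ is compact and $\mathbf{u}\in C^{\infty}_{\mathbb{R}^{M}}(\mathcal{K})$ on the compact set $\mathcal{K}$. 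The remaining task is therefore to bound $\nabla_{\boldsymbol{\Xi}}\mathfrak{G}_{j}$.

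Writing $E^{\nu}_{p}=\mathcal{P}_{mn}\bra{\Psi_{m}(\boldsymbol{\Theta}^{\nu}_{p})}\hat{O}^{\nu}\ket{\Psi_{n}(\boldsymbol{\Theta}^{\nu}_{p})}$, the MAC form gives
\begin{equation*}
\mathfrak{G}_{j}=(\hat{\beta}_{jp'}+1)(\tau_{p'}+1)E^{\mathcal{B}}_{p'}E^{\mathcal{T}}_{p}+E^{\mathcal{B}}_{p}E^{\mathcal{T}}_{p},
\end{equation*}
so the derivative with respect to a branch angle is a product of a bounded classical factor, $\partial E^{\mathcal{B}}$, and $E^{\mathcal{T}}$. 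The trunk derivative is symmetric. Because $\sigma$ satisfies the TW condition on compact inputs, the classical factors $|\hat\beta+1|$ and $|\tau+1|$ are uniformly bounded. This reduces the problem to bounding two quantities: a single parameter-shift derivative $\partial_{\theta}E^{\nu}$, which by the parameter-shift rule for the rotation gates in $\widehat{\mathcal{VL}}_{\text{Param}}$ equals $\tfrac12[E^{\nu}(\theta+\pi/2)-E^{\nu}(\theta-\pi/2)]$, and the magnitude of $E^{\nu}$ itself.

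For the second factor, I would take the observables in $\mathbb{O}^{\nu}$ normalized so that $\|\hat{O}^{\nu}\|$ is fixed. The normalization should be $\mathcal{O}(1/\sqrt{\mathcal{N}^{\nu}N^{\nu}})$ per node, chosen so that the sum over the $\mathcal{N}^{\nu}N^{\nu}$ parameters in the parameter count of Theorem \ref{theorem:Teo-3.1} stays $\mathcal{O}(1)$. This makes $|E^{\nu}|\lesssim(\mathcal{N}^{\nu}N^{\nu})^{-1/2}$. For the first factor, I would argue that since the gradient vector has $\mathcal{N}^{\nu}N^{\nu}$ entries, each entry scales like $1/(\mathcal{N}^{\nu}N^{\nu})$. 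Combining the two, the branch block of the gradient scales as $\frac{1}{\mathcal{N}^{\mathcal{B}}N^{\mathcal{B}}}\cdot\frac{1}{\sqrt{\mathcal{N}^{\mathcal{T}}N^{\mathcal{T}}}}$ and the trunk block symmetrically. Taking the supremum over $j$ and summing the two blocks gives (\ref{eq:46}).

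Finally, (\ref{eq:47}) follows by algebra. The right-hand side of (\ref{eq:46}) equals the reciprocal of the displayed "problem size" expression. Requiring the gradient magnitude to stay above $\epsilon_{\mathrm{grad}}$ means this reciprocal, viewed as a polynomial (not exponential) scale, must be at least of order $\epsilon_{\mathrm{grad}}$. Inverting that requirement gives the stated condition. The polynomial form of the bound is what we interpret as mitigation of the barren plateau.

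The main obstacle is justifying the per-entry scalings $1/(\mathcal{N}^{\nu}N^{\nu})$ and $1/\sqrt{\mathcal{N}^{\nu}N^{\nu}}$ rigorously. Nothing in the earlier results fixes the observable normalization, and a general PQC built from Clifford+Toffoli layers can itself be a 2-design with exponentially small variance. My first attempt would be to make the normalization an explicit hypothesis on $\mathbb{O}^{\nu}$, tied to the parameter count of Theorem \ref{theorem:Teo-3.1}. I would then bound $\partial_\theta E^{\nu}$ via the parameter shift rule by $\|\hat O^{\nu}\|$ divided by the number of parameters sharing the fixed total budget. The bound would be stated as an upper $\mathcal{O}$-estimate rather than as a variance statement.
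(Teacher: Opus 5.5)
There is a genuine gap, and it is the one you flagged yourself: the decay in (\ref{eq:46}) cannot come from the quantum factors.

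\textbf{Per-entry scaling.} The parameter-shift rule gives $|\partial_{\theta}E^{\nu}| \le \Vert \hat{O}^{\nu} \Vert$. This can be of that order no matter how many other angles the circuit has; a rotation on the measured qubit just before measurement is an example. So the step ``the gradient has $\mathcal{N}^{\nu}N^{\nu}$ entries, hence each entry is $\mathcal{O}(1/(\mathcal{N}^{\nu}N^{\nu}))$'' is a non sequitur. Collecting the entries gives the opposite, growing estimate $\Vert \nabla E^{\nu} \Vert \le \Vert \hat{O}^{\nu} \Vert \sqrt{\mathcal{N}^{\nu}N^{\nu}}$, which is exactly (\ref{eq:D.81})--(\ref{eq:D.82}). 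Taken literally, your ``$\Vert\hat{O}^{\nu}\Vert$ divided by the number of parameters'' with the normalized observable gives $(\mathcal{N}^{\nu}N^{\nu})^{-3/2}$, not $(\mathcal{N}^{\nu}N^{\nu})^{-1}$. So your exponents were fitted to the target rather than derived.

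\textbf{Observable normalization.} You also cannot add $\Vert \hat{O}^{\nu} \Vert \sim (\mathcal{N}^{\nu}N^{\nu})^{-1/2}$ as a hypothesis. The corollary is stated under the hypotheses of Theorem \ref{theorem:Teo-3.1}, whose proof via Lemma \ref{lemma:Lem-B.1} rescales the observables freely. Moreover, any such rescaling is undone by the classical factors. With this normalization, $|E^{\mathcal{B}}E^{\mathcal{T}}| \lesssim (\mathcal{N}^{\mathcal{B}}N^{\mathcal{B}}\mathcal{N}^{\mathcal{T}}N^{\mathcal{T}})^{-1/2}$. Since $\mathfrak{G}_{j}$ must approximate an $\mathcal{O}(1)$ target $u_{j}$, the sum over $p'$ of $(\hat{\beta}_{jp'}+1)(\tau_{p'}+1)$ must then grow like $\sqrt{\mathcal{N}^{\mathcal{B}}N^{\mathcal{B}}\mathcal{N}^{\mathcal{T}}N^{\mathcal{T}}}$. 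Your ``uniformly bounded classical factor'' would carry exactly the powers you hoped to gain, and the per-entry bound would have no decay at all.

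The missing idea is that the paper puts all of the decay into the residual factor, which you explicitly demoted to ``merely bounded''. Theorem \ref{theorem:Teo-3.1} lets the tolerance be chosen at will, so the proof takes $\epsilon_{j} = M_{\mathfrak{L}_{j}}/(\mathcal{N}^{\mathcal{B}}N^{\mathcal{B}}\mathcal{N}^{\mathcal{T}}N^{\mathcal{T}})$ in (\ref{eq:D.83}). It evaluates the gradient at the corresponding near-optimal parameters and treats the classical factors and $|E^{\nu}|$ as size-independent constants. The growing bound $M'_{\hat{O}^{\mathcal{B}}}\sqrt{\mathcal{N}^{\mathcal{B}}N^{\mathcal{B}}}$ then yields
$\sqrt{\mathcal{N}^{\mathcal{B}}N^{\mathcal{B}}}/(\mathcal{N}^{\mathcal{B}}N^{\mathcal{B}}\mathcal{N}^{\mathcal{T}}N^{\mathcal{T}}) = 1/(\mathcal{N}^{\mathcal{T}}N^{\mathcal{T}}\sqrt{\mathcal{N}^{\mathcal{B}}N^{\mathcal{B}}})$
for the branch block, which is (\ref{eq:D.85}), and symmetrically for the trunk. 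This assigns the two summands of (\ref{eq:46}) to the opposite blocks from yours. Your match with (\ref{eq:46}) works only because that expression is symmetric under $\mathcal{B}\leftrightarrow\mathcal{T}$.

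The rest of your proposal agrees with the paper: the chain-rule start, the factorization into a classical factor times $E^{\mathcal{T}}\,\partial E^{\mathcal{B}}$ (and symmetrically), the summation over $p$ and $j$, and the algebraic inversion giving (\ref{eq:47}). What must change is the source of the decay.
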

However, there is another interesting theoretical result regarding the behavior of the gradient, which is stated in the following Corollary.
\begin{corollary}\label{corollary:Col-3.2} 
    Under the assumptions of Theorem \ref{theorem:Teo-3.1}, given the loss function \(\mathfrak{L}\) defined in (\ref{eq:44}), \( \boldsymbol{\mathfrak{v}} \in \mathbb{V}^{M} \), \( \boldsymbol{\tilde{\mathfrak{v}}}_{\flat} \in \mathbb{R}^{m M \times 1} \), \( \mathbf{y} \in \mathcal{K} \), there are sets of weights and biases \( \boldsymbol{\mathcal{W}}_{\flat}^{\mathcal{B} \ast} \), \( \boldsymbol{\mathcal{W}}^{\mathcal{T} \ast} \), \( \boldsymbol{\Xi}^{\mathcal{B} \ast} \), \( \boldsymbol{\Xi}^{\mathcal{T} \ast} \) and local operators \( \hat{O}^{\mathcal{B}} \in \mathbb{O}^{\mathcal{B}} \), \( \hat{O}^{\mathcal{T}} \in \mathbb{O}^{\mathcal{T}} \), such that:
    \begin{equation}\label{eq:48}
        \begin{split}
            &\mathbb{E}_{\left( \boldsymbol{\Xi}^{\mathcal{B} \ast}, \boldsymbol{\Xi}^{\mathcal{T} \ast} \right)} \left[ \left\Vert \nabla_{\left( \boldsymbol{\Xi}^{\mathcal{B} \ast}, \boldsymbol{\Xi}^{\mathcal{T} \ast} \right)} \mathfrak{L} \left( \boldsymbol{\tilde{\mathfrak{v}}}_{\flat}, \mathbf{y}, \hat{O}^{\mathcal{B}}, \hat{O}^{\mathcal{T}}, \boldsymbol{\mathcal{W}}_{\flat}^{\mathcal{B} \ast}, \boldsymbol{\mathcal{W}}^{\mathcal{T} \ast}, \boldsymbol{\Xi}^{\mathcal{B} \ast}, \boldsymbol{\Xi}^{\mathcal{T} \ast} \right) \right\Vert_{\infty, M}^{2} \right] \geq \\
            &\hspace{7.75cm} \mathcal{O} \left( \frac{1}{ \mbox{poly} \left(N^{\mathcal{B}}\right)} + \frac{1}{ \mbox{poly} \left(N^{\mathcal{T}}\right)} \right).
        \end{split}
    \end{equation}
\end{corollary}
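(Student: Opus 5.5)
The plan is to work with the explicit form (\ref{eq:31})--(\ref{eq:34}) of $\mathfrak{G}_{j}$. I will lower-bound the expected squared gradient by the contribution of one carefully chosen quantum parameter in each sub-network. Differentiating (\ref{eq:45}) with respect to an angle $\theta$ belonging to $\boldsymbol{\Theta}_{p}^{\mathcal{B}}$ gives
\begin{equation*}
\partial_{\theta}\mathfrak{L}_{j} = 2\left(\mathfrak{G}_{j}-u_{j}\right)\partial_{\theta}\expval{\hat{O}^{\mathcal{B}}}_{\boldsymbol{\Theta}_{p}^{\mathcal{B}}}\left[\left(\hat{\beta}_{jp}+1\right)\mathfrak{T}_{p}+\expval{\hat{O}^{\mathcal{T}}}_{\boldsymbol{\Theta}_{p}^{\mathcal{T}}}\right].
\end{equation*}
The analogous expression holds for trunk angles. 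The key structural point is the additive term $\mathfrak{b}_{j}$ and the ``$+1$'' shifts in (\ref{eq:32})--(\ref{eq:33}). Together they ensure that the classical prefactor $(\hat{\beta}_{jp}+1)(\tau_{p}+1)+1$ multiplying the quantum derivatives can be kept bounded away from zero. We choose the weights $\boldsymbol{\mathcal{W}}_{\flat}^{\mathcal{B}\ast},\boldsymbol{\mathcal{W}}^{\mathcal{T}\ast}$, for instance via Theorem \ref{theorem:Teo-3.1} and the freedom in the bias of the last layer, so that $\hat{\beta}_{jp}+1$ and $\tau_{p}+1$ are of order one and positive on the compact set $\mathcal{K}$.

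The second step treats the residual factor $(\mathfrak{G}_{j}-u_{j})$. Corollary \ref{corollary:Col-3.2} asserts existence of parameters, so we do not evaluate at the exact optimum, where the residual would vanish. Instead we pick $\boldsymbol{\mathcal{W}}^{\ast}$ such that the residual is a fixed nonzero constant $c_{j}$ on a subset of $\mathcal{K}$ where $\mathbf{y}$ lies; a shift of the output bias suffices. The expectation is over the quantum angles. Since $|\expval{\hat{O}}|\le\Vert\hat{O}\Vert$, the residual stays in an interval $[c_{j}/2,2c_{j}]$ uniformly in the angles, provided $\Vert\hat{O}^{\nu}\Vert$ is small relative to $c_{j}$, or provided we absorb the variation into the constants. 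Then $\mathbb{E}\Vert\nabla\mathfrak{L}\Vert^{2}_{\infty,M}$ is bounded below by a constant times $\mathbb{E}[(\partial_{\theta}\expval{\hat{O}^{\mathcal{B}}})^{2}]+\mathbb{E}[(\partial_{\theta'}\expval{\hat{O}^{\mathcal{T}}})^{2}]$. Cross terms are controlled using independence of $\boldsymbol{\Xi}^{\mathcal{B}}$ and $\boldsymbol{\Xi}^{\mathcal{T}}$, which follows from the tensor-product structure (\ref{eq:26}) of $\rho_{T}$. We also use the parameter-shift rule, which gives $\mathbb{E}[\partial_{\theta}\expval{\hat{O}}]=0$ when angles are uniform, so mixed products vanish in expectation.

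The third and main step is the purely quantum lower bound $\mathrm{Var}_{\boldsymbol{\Theta}}[\partial_{\theta}\expval{\hat{O}^{\nu}}]\ge 1/\mathrm{poly}(N^{\nu})$. Here we use the freedom to choose local observables $\hat{O}^{\nu}\in\mathbb{O}^{\nu}$, say a single-qubit Pauli $Z$ on a qubit in the light cone of $\theta$. We restrict to the regime where the circuit (\ref{eq:28}) does not form a 2-design, e.g.\ logarithmic depth or angles drawn from a small-angle distribution around an identity-like initialization. In that regime the local-cost results of Cerezo et al.\ \cite{Cerezo202100} give variance $\Omega(1/\mathrm{poly}(N^{\nu}))$. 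Alternatively, one can fix the later parametrized layers so that $\widehat{U}$ acts trivially outside a constant-size block. Then $\partial_{\theta}\expval{Z}$ is an $N$-independent trigonometric polynomial with nonzero variance, which gives an even stronger bound.

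This third step is the obstacle. The Clifford+Toffoli discrete layers in (\ref{eq:28}) may spread the light cone, so we must either place $\theta$ in the final layer or choose the discrete layers' action on the measured qubit compatibly. I would first try the final-layer choice: a rotation $R_{Y}(\theta)$ immediately before measuring $Z$ gives $\partial_{\theta}\expval{Z}=-\sin\theta\expval{Z}'+\cos\theta\expval{X}'$, where the primes denote expectations in the state just before the final rotation. Its variance is bounded below by the expected reduced-state Bloch component, which is $1/\mathrm{poly}$ for shallow blocks. Summing the branch and trunk contributions yields (\ref{eq:48}).
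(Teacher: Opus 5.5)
Your skeleton matches the paper's proof (Appendix \ref{sec:Ap-E}). You factor $\partial_\theta\mathfrak{L}_j$ into residual $\times$ classical prefactor $\times$ quantum derivative, then import a shallow-circuit, local-observable variance bound (the paper cites McClean and Larocca for $\mathcal{N}^{\nu}=\mathcal{O}(\log N^{\nu})$; you cite Cerezo et al.). Making $\hat\beta_{jp}+1$ and $\tau_p+1$ positive, so that $(\hat\beta_{jp}+1)(\tau_p+1)+1\ge 1$, is sound.

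The second step, however, has a genuine gap, with three parts.

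\emph{The missing factor.} By your own formula, $(\hat\beta_{jp}+1)\mathfrak{T}_p+\langle\hat O^{\mathcal T}\rangle_p=\langle\hat O^{\mathcal T}\rangle_p\,[(\hat\beta_{jp}+1)(\tau_p+1)+1]$. So $\partial_\theta\langle\hat O^{\mathcal B}\rangle_p$ is also multiplied by the random factor $\langle\hat O^{\mathcal T}\rangle_p$, which you never bound below. For a traceless local observable such as a single-qubit $Z$ it can vanish, so the lower bound by a constant times $\mathbb{E}[(\partial_\theta\langle\hat O^{\mathcal B}\rangle)^2]$ does not follow. Independence of $\boldsymbol{\Xi}^{\mathcal B}$ and $\boldsymbol{\Xi}^{\mathcal T}$ only lets you split off $\mathbb{E}[\langle\hat O^{\mathcal T}\rangle^2]$, which still needs a positive lower bound. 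The paper has the same hole: (\ref{eq:E.97}) only gives $\mathfrak{c}^{\mathcal T}\ge 0$.

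\emph{The residual.} $\mathfrak{G}_j$ has no angle-independent output bias. The only additive term $\mathfrak{b}_j$ is itself a product of expectation values, and last-layer classical biases only rescale expectation values. So the residual cannot be pinned to a constant $c_j$. Shrinking $\Vert\hat O^{\nu}\Vert$ sends $\mathfrak{G}_j\to 0$, which keeps the residual away from zero only when $u_j(\mathbf{y})\neq 0$.

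\emph{The cross terms.} $\mathbb{E}[\partial_\theta\langle\hat O\rangle]=0$ does not remove the cross terms in $(\sum_j\partial_\theta\mathfrak{L}_j)^2$. All $\mathfrak{G}_j$ share node $p$, so those terms are quadratic in the same derivative. What you actually need is that the residuals share a common sign.

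All three issues are fixed by a choice you are free to make: take $\hat O^{\nu}=\hat Z_1+c\,\mathds{1}$ with $c>1$, which is still $1$-local. Then $\langle\hat O^{\nu}\rangle_p\in[c-1,c+1]$ for every angle configuration, while $\partial_\theta\langle\hat O^{\nu}\rangle_p=\partial_\theta\langle\hat Z_1\rangle_p$ is unchanged. With $a_{jp}=(\hat\beta_{jp}+1)(\tau_p+1)\ge 0$ one gets $\mathfrak{G}_j\ge(c-1)^2\bigl(\sum_p a_{jp}+1\bigr)$. Choose $c$ large, depending only on $\max_j|u_j(\mathbf{y})|$ and not on $N^{\nu}$. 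Then:
\begin{itemize}
\item every residual satisfies $\mathfrak{G}_j-u_j(\mathbf{y})\ge 1$ uniformly in the angles;
\item every summand of $\partial_\theta\mathfrak{L}$ carries the sign of $\partial_\theta\langle\hat Z_1\rangle_p$;
\item $|\partial_\theta\mathfrak{L}|\ge 2(c-1)\,|\partial_\theta\langle\hat Z_1\rangle_p|$ pointwise.
\end{itemize}
Hence $\mathbb{E}\Vert\nabla\mathfrak{L}\Vert_{\infty,M}^2\gtrsim\mathbb{E}[(\partial_\theta\langle\hat Z_1\rangle^{\mathcal B}_p)^2]+\mathbb{E}[(\partial_{\theta'}\langle\hat Z_1\rangle^{\mathcal T}_p)^2]$, with no independence or factorization needed.

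What remains is your third step, which, like the paper's, is imported rather than proved. Your final-layer $R_Y$ computation correctly reduces it to $\tfrac12\mathbb{E}[\langle Z\rangle'^2+\langle X\rangle'^2]$. But a $1/\mathrm{poly}(N^{\nu})$ lower bound on that reduced Bloch vector holds only in a shallow, non-2-design regime, and it must be justified for the Clifford+Toffoli layers of (\ref{eq:28}). Fixing later parametrized layers is not an option when the expectation runs over all angles.
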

In conclusion, having mathematically guaranteed both the expressivity and the optimization stability of the model, it is necessary to determine its informational lower bound. The subsequent theorem defines the minimum number of sensors required to accurately observe and reconstruct the dynamic evolution of the underlying well-posed physical system.
\begin{theorem}\label{theorem:Teo-3.2}
    Let \( \mathcal{K} \subset D \) be a compact subset and \( \mathbb{V} \subset C_{\mathbb{R}} \left( \mathbb{K} \right) \) also compact. Suppose that \(\mathbb{V}^{M}\) is a compact subset of \(C_{\mathbb{R}^{M}}^{k} \left(\mathbb{K}\right)\) and that \(\mathcal{F}\) takes the form
    \begin{equation}\label{eq:49}
        \mathcal{F} \left(\mathbf{y},\mathbf{u},\hat{\mathcal{L}}_{1}\left[\cdot\right],\ldots,\hat{\mathcal{L}}_{M}\left[\cdot\right]\right) = \frac{\partial\mathbf{u}}{\partial t} - A \mathbf{u}-F \left(\mathbf{y},\mathbf{u}\right),
    \end{equation}
    where \(A : D \left(A\right) \subset C_{\mathbb{R}^{M}}^{\infty} \left(D\right) \to C_{\mathbb{R}^{M}}^{\infty} \left(D\right)\) is a linear operator (representing any eventual differential operators with respect to spatial coordinates), densely defined and the infinitesimal generator of a \(C_{0}\)-semigroup, and \(F : D \times C_{\mathbb{R}^{M}}^{\infty} \left(D\right) \to C_{\mathbb{R}^{M}}^{\infty} \left(D\right)\) is a Lipschitz mapping with respect to the second variable. We also suppose that (\ref{eq:04})--(\ref{eq:07}) in this context defines an initial-boundary value problem well-posed in the Hadamard sense. Consider a QPI-DeepONet-MAC architecture possessing universal approximation capability over \(\mathbb{V}^{M}\). Then, given a tolerance error \(\tilde{\epsilon} > \epsilon > 0\), the minimum number of sensors \(m\) required to guarantee that \(\left\Vert \boldsymbol{\mathfrak{G}} \left( \boldsymbol{\tilde{\mathfrak{v}}}_{\flat}, \hat{O}^{\mathcal{B}}, \boldsymbol{\mathcal{W}}_{\flat}^{\mathcal{B} \ast}, \boldsymbol{\Xi}^{\mathcal{B} \ast} \right) \left( \cdot, \hat{O}^{\mathcal{T}}, \boldsymbol{\mathcal{W}}^{\mathcal{T} \ast}, \boldsymbol{\Xi}^{\mathcal{T} \ast} \right) - \boldsymbol{\mathcal{G}} \left( \boldsymbol{\mathfrak{v}} \right) \left( \cdot \right) \right\Vert_{\infty,M} \leq \tilde{\epsilon}\) is given by:
    \begin{equation}\label{eq:50}
        m \geq \left[ \frac{\mathfrak{C}_{L} C_{\text{Int}}}{\tilde{\epsilon} - \epsilon} \right]^{\frac{d}{k}} \exp{\frac{\omega t d}{k}}.
    \end{equation}
\end{theorem}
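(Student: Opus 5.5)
The plan is to split the total error into an approximation part and a discretization (sensor) part, using the triangle inequality. Let $\boldsymbol{\mathfrak{v}}_{m}$ denote a reconstruction of $\boldsymbol{\mathfrak{v}}$ from its $m$ sensor values, for instance a piecewise polynomial interpolant of degree $k-1$ on a quasi-uniform grid of $\mathbb{K}\subset\mathbb{R}^{d}$. Since $\boldsymbol{\mathfrak{G}}$ only sees $\boldsymbol{\tilde{\mathfrak{v}}}_{\flat}$, it cannot distinguish $\boldsymbol{\mathfrak{v}}$ from $\boldsymbol{\mathfrak{v}}_{m}$. We then write
\begin{equation*}
\left\Vert \boldsymbol{\mathfrak{G}} - \boldsymbol{\mathcal{G}}(\boldsymbol{\mathfrak{v}}) \right\Vert_{\infty,M} \leq \left\Vert \boldsymbol{\mathfrak{G}} - \boldsymbol{\mathcal{G}}(\boldsymbol{\mathfrak{v}}_{m}) \right\Vert_{\infty,M} + \left\Vert \boldsymbol{\mathcal{G}}(\boldsymbol{\mathfrak{v}}_{m}) - \boldsymbol{\mathcal{G}}(\boldsymbol{\mathfrak{v}}) \right\Vert_{\infty,M}.
\end{equation*}
The first term is controlled by Theorem \ref{theorem:Teo-3.1} (universal approximation, applied on the compact set of interpolants together with $\mathbb{V}^{M}$), which makes it at most $\epsilon$. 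It then suffices to force the second term below $\tilde{\epsilon}-\epsilon$, and solving that inequality for $m$ yields (\ref{eq:50}).

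For the second term I will use the semilinear structure (\ref{eq:49}) through the mild (Duhamel) formulation:
\begin{equation*}
\mathbf{u}(t)=S(t)\mathbf{u}_{0}+\int_{0}^{t}S(t-s)F(\cdot,\mathbf{u}(s))\,ds .
\end{equation*}
The $C_{0}$-semigroup generated by $A$ satisfies $\Vert S(t)\Vert\leq C_{S}e^{\omega t}$ by the Hille--Yosida/Feller--Miyadera--Phillips bound. Subtracting the mild formulas for the data $\boldsymbol{\mathfrak{v}}$ and $\boldsymbol{\mathfrak{v}}_{m}$ and using the Lipschitz constant $L_{F}$ of $F$ gives an integral inequality. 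Grönwall's lemma then yields
\begin{equation*}
\Vert \boldsymbol{\mathcal{G}}(\boldsymbol{\mathfrak{v}}_{m})-\boldsymbol{\mathcal{G}}(\boldsymbol{\mathfrak{v}})\Vert \leq \mathfrak{C}_{L}\,e^{\omega t}\,\Vert \boldsymbol{\mathfrak{v}}_{m}-\boldsymbol{\mathfrak{v}}\Vert_{\infty},
\end{equation*}
where $\mathfrak{C}_{L}$ collects $C_{S}$ and the Grönwall factor, with $L_{F}$ absorbed into $\omega$ or $\mathfrak{C}_{L}$. Hadamard well-posedness is invoked to ensure that this mild solution is the solution represented by $\boldsymbol{\mathcal{G}}$ and that the continuous dependence is genuine.

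Next I bound the interpolation error. Since $\mathbb{V}^{M}$ is compact in $C^{k}$, its $C^{k}$ seminorms are uniformly bounded. The standard Bramble--Hilbert/Jackson estimate for a grid of spacing $h\sim m^{-1/d}$ then gives
\begin{equation*}
\Vert \boldsymbol{\mathfrak{v}}_{m}-\boldsymbol{\mathfrak{v}}\Vert_{\infty}\leq C_{\text{Int}}\,m^{-k/d},
\end{equation*}
uniformly over $\mathbb{V}^{M}$. Combining, it suffices that
\begin{equation*}
\mathfrak{C}_{L}C_{\text{Int}}e^{\omega t}m^{-k/d}\leq\tilde{\epsilon}-\epsilon .
\end{equation*}
Raising to the power $d/k$ produces exactly $m\geq [\mathfrak{C}_{L}C_{\text{Int}}/(\tilde{\epsilon}-\epsilon)]^{d/k}e^{\omega t d/k}$.

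The main obstacle is the first term. Theorem \ref{theorem:Teo-3.1} gives the approximation for the true input, whereas here the network must approximate $\boldsymbol{\mathcal{G}}$ evaluated at the interpolant, which determines its input uniquely. I would handle this by applying the universal approximation on the compact set obtained as the union of $\mathbb{V}^{M}$ with its images under the interpolation map. This union is compact because the interpolation map is continuous and its range is finite-dimensional, and the approximation is uniform over it. A secondary technical point is justifying the Lipschitz dependence within the $C^{\infty}$ setting of $A$ and $F$ in the sup norm; I will simply work in the closure $C_{\mathbb{R}^{M}}(\mathcal{K})$, where the semigroup bound applies.
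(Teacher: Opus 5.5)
Your proposal is correct at the paper's level of rigor and has the same skeleton as the proof in Appendix~\ref{sec:Ap-F}. In both, a triangle inequality is taken through a reconstruction that shares the sensor values of $\boldsymbol{\mathfrak{v}}$ (the paper's $\boldsymbol{\mathfrak{v}}_{\operatorname{int},m}$). Both then use a stability factor $\mathfrak{C}_{L}e^{\omega t}$ for $\boldsymbol{\mathcal{G}}$ and an $m^{-k/d}$ rate, and solve $\epsilon+\mathfrak{C}_{L}C_{\text{Int}}e^{\omega t}m^{-k/d}\le\tilde{\epsilon}$ for $m$.

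You differ in the two key ingredients.
\begin{itemize}
\item \emph{The rate.} The paper takes the Kolmogorov $m$-width of $\mathbb{V}^{M}$, an infimum over all subspaces of dimension at most $m$ in (\ref{eq:F.109})--(\ref{eq:F.110}). You use an explicit piecewise-polynomial interpolant with a Bramble--Hilbert estimate.
\item \emph{The stability constant.} The paper merely asserts it from Lipschitz continuity and Hille--Yosida. You derive it from the mild formulation and Gr\"onwall.
\end{itemize}

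Your version is logically tighter. Your reconstruction is a linear function of the $m$ samples, which is exactly what the network receives. The best approximant from an optimal $m$-dimensional subspace is in general not determined by point values at the sensors. Indeed, (\ref{eq:F.108}) feeds the network $\boldsymbol{\tilde{\mathfrak{u}}}_{\flat}$ rather than $\boldsymbol{\tilde{\mathfrak{v}}}_{\flat}$. The relevant quantity is a sampling-recovery width, not a Kolmogorov width. For $C^{k}$-bounded sets on a reasonable $\mathbb{K}\subset\mathbb{R}^{d}$ both scale like $m^{-k/d}$, so (\ref{eq:50}) is unchanged. The price is that $\mathbb{K}$ must be meshable and you fix a specific sensor layout; that is harmless for a sufficiency statement.

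Your Gr\"onwall step also shows that the exponent is really $\omega+C_{S}L_{F}$. Absorb $L_{F}$ into $\omega$, not into $\mathfrak{C}_{L}$. Otherwise $\mathfrak{C}_{L}$ becomes $t$-dependent and (\ref{eq:50}) no longer displays the true time dependence.

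Two points should be made explicit. First, Theorem~\ref{theorem:Teo-A.2} produces a branch that evaluates its input at its own points $x_{\nu l}$, not at your grid. On the set of interpolants $\{\boldsymbol{\mathfrak{v}}_{m}:\boldsymbol{\mathfrak{v}}\in\mathbb{V}^{M}\}$, each such evaluation is a fixed linear combination of the $m$ grid values, so it can be folded into the first affine branch layer. That step is what makes the first term a bound for a network fed by $\boldsymbol{\tilde{\mathfrak{v}}}_{\flat}$. For the same reason you only need this compact set, not its union with $\mathbb{V}^{M}$.

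Second, like the paper, you prove only that $m$ above the threshold suffices. Neither argument shows that fewer sensors fail, so ``minimum'' must be read in this sufficient-condition sense.
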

The detailed step-by-step mathematical proofs for Theorem \ref{theorem:Teo-3.1}, Corollary \ref{corollary:Col-3.1}, and Theorem \ref{theorem:Teo-3.2} are provided in the Appendix.

To conclude this section, we will carry out a series of deductions and analyses aimed at determining fundamental physical quantities and understanding how these quantities can provide the quantum information derived from our model. Specifically, we aim to establish a solid foundation for understanding the impact of these quantum resources and their role as learning regulators within our hybrid scheme. For this reason, we begin by obtaining a density function that contains only the quantum resource of our architecture, and to achieve this, we take the partial trace of (\ref{eq:26}) with respect to the classical components \( \rho_{\mathcal{C}}^{\nu \in \{\mathcal{B}, \mathcal{T}\}} \), obtaining
\begin{equation}\label{eq:51}
    \begin{split}
        \rho_{\mathcal{Q}} &= \operatorname{Tr}_{\mathcal{C}^{\mathcal{B}}, \mathcal{C}^{\mathcal{T}}} \left\{ \bigotimes_{\substack{\mu \in \{ \mathcal{C}, \mathcal{Q} \} \\ \nu \in \{ \mathcal{B}, \mathcal{T} \}}} \rho_{\mu}^{\nu} \right\} \\
        &= \operatorname{Tr}_{\mathcal{C}^{\mathcal{B}}} \left\{ \rho_{\mathcal{C}}^{\mathcal{B}} \right\} \otimes \rho_{\mathcal{Q}}^{\mathcal{B}} \otimes \operatorname{Tr}_{\mathcal{C}^{\mathcal{T}}} \left\{ \rho_{\mathcal{C}}^{\mathcal{T}} \right\} \otimes \rho_{\mathcal{Q}}^{\mathcal{T}}..
    \end{split}
\end{equation}
And since \(\operatorname{Tr}_{\mathcal{C}^{\nu}} \left\{ \rho_{\mathcal{C}}^{\nu} \right\} = 1\) for \(\nu \in \{\mathcal{B},\mathcal{T}\}\), expression (\ref{eq:51}) reduces to
\begin{equation}\label{eq:53}
    \rho_{\mathcal{Q}} = \bigotimes_{\substack{\nu \in \{ \mathcal{B}, \mathcal{T} \}}} \rho_{\mathcal{Q}}^{\nu}.
\end{equation}
This reduction is essential, since unlike the global evolution, which can be described by pure states, the dynamics of the quantum part in our hybrid architecture, obtained through the partial trace, can manifest mixed states due to the entanglement induced with the classical subsystems. This renders the informational measures proposed herein non-trivial and necessary for the control of expressivity.

The foregoing allows us to determine the coherence, which is given by:
\begin{equation}\label{eq:112}
    \mathcal{C}_{p} (\rho_{\mathcal{Q}}) = \frac{1}{N_{\mathcal{B}}N_{\mathcal{T}}} \sum_{(m^{\mathcal{B}}, m^{\mathcal{T}}) \neq (n^{\mathcal{B}}, n^{\mathcal{T}})} \left\Vert \mathcal{P}_{m^{\mathcal{B}}n^{\mathcal{B}}} \left( \hat{O}^{\mathcal{B}}, \boldsymbol{\Theta}_{p}^{\mathcal{B}} \right) \otimes \mathcal{P}_{m^{\mathcal{T}}n^{\mathcal{T}}} \left( \hat{O}^{\mathcal{T}}, \boldsymbol{\Theta}_{p}^{\mathcal{T}} \right) \right\Vert_{1},
\end{equation}
where \( 0 \leq \mathcal{C}_{p} \left( \rho_{\mathcal{Q}} \right) \leq 1 \), \( \Vert \cdot \Vert_{1} \) is the \( l_{1} \) norm, and each term of the summations in (\ref{eq:112}) is nothing more than the norm of each of the off-diagonal entries of (\ref{eq:53}). Now, we can define the measure of the decoherence rate between two subsequent states, for example, \( \rho_{\mathcal{Q}}^{\left( p-1 \right)}\) and \( \rho_{\mathcal{Q}}^{\left( p \right)} \), by the expression
\begin{equation}\label{eq:54}
    \mathcal{D}_{p} \left( \rho_{\mathcal{Q}}^{\left( p-1 \right)}, \rho_{\mathcal{Q}}^{\left( p \right)} \right) \coloneq \mathcal{C}_{p-1} \left( \rho_{\mathcal{Q}}^{\left( p-1 \right)} \right) - \mathcal{C}_{p} \left( \rho_{\mathcal{Q}}^{\left( p \right)} \right),
\end{equation}
with \( 0 \leq \mathcal{D}_{p} \left( \rho_{\mathcal{Q}}^{\left( p-1 \right)}, \rho_{\mathcal{Q}}^{\left( p \right)} \right) \leq 1 \). This would allow monitoring whether the overall state is collapsing into a diagonal state, which implies a loss of expressivity.

Before proceeding with the definitions of interest in this section, we will calculate \( \varrho_{\mathcal{Q}} \). 
The density matrix \( \rho_{\mathcal{Q}} \) acts on \( \mathcal{H}_{\mathcal{Q}}^{\mathcal{B}} \otimes \mathcal{H}_{\mathcal{Q}}^{\mathcal{T}}\), \( \Tr{\rho_{\mathcal{Q}}} = 1 \), \( \rho_{\mathcal{Q}} = \rho_{\mathcal{Q}}^{\dagger} \to \expval{\rho_{\mathcal{Q}}} \geq 0 \), and $\varrho_{\mathcal{Q}}$ is normal, that is \( \left[\rho_{\mathcal{Q}}, \rho_{\mathcal{Q}}^{\dagger}\right] = 0 \). The spectral theorem states that any normal matrix can be diagonalized by a unitary transformation \( \hat{U} \), which in our case is defined by:
\begin{equation}\label{eq:55}
    \hat{U} \coloneq \displaystyle{\bigotimes_{\substack{\nu \in \{ \mathcal{B}, \mathcal{T} \}}} \hat{U}_{\mathcal{Q}}^{\nu}},
\end{equation}
which must satisfy \( \hat{U} = \hat{U}^{\dagger} \), whose transformation matrices \( \hat{U}_{\mu}^{\nu} \) act on \( \mathcal{H}_{\mathcal{Q}}^{\mathcal{B}} \otimes \mathcal{H}_{\mathcal{Q}}^{\mathcal{T}}\) and belong to the set
\begin{equation}\label{eq:56}
    \left\{ \hat{U}_{\mathcal{Q}}^{\nu} \right\} _{\substack{\nu \in \{ \mathcal{B}, \mathcal{T} \}}},
\end{equation}
which are defined in terms of the normalized eigenvectors
\begin{equation}\label{eq:57}
    \left\{ \ket{\lambda_{\mathcal{Q}}^{\nu}}_{j_{\nu}} \right\} _{\substack{j_{\nu} = \{1_{\nu}, \dots, N_{\nu}\}, \; \forall \; \nu \in \{ \mathcal{B}, \mathcal{T} \}}},
\end{equation}
obtained from the equation
\begin{equation}\label{eq:58}
     \left( \rho_{\mathcal{Q}} - \lambda_{\mathcal{Q}_{j_{\nu}}}^{\nu} \mathds{1} \right) \ket{\lambda_{\mathcal{Q}}^{\nu}}_{j_{\nu}} = 0,
\end{equation}
for each eigenvalue obtained from the equation
\begin{equation}\label{eq:59}
     \det\left( \rho_{\mathcal{Q}} - \lambda_{\mathcal{Q}}^{\nu} \mathds{1} \right) = 0.
\end{equation}
Thus, \( \varrho_{\mathcal{Q}} \) can be given by:
\begin{equation}\label{eq:60}
    \begin{split}
        \varrho_{\mathcal{Q}} &= \hat{U}^{\dagger} \rho_{\mathcal{Q}} \hat{U} \\
        &= \displaystyle{\bigotimes_{\substack{\nu \in \{ \mathcal{B}, \mathcal{T} \}}} \hat{U}_{\mathcal{Q}}^{\nu \,\dagger}} \left[ \displaystyle{\bigotimes_{\substack{\nu \in \{ \mathcal{B}, \mathcal{T} \}}} \rho_{\mathcal{Q}}^{\nu}} \right] \displaystyle{\bigotimes_{\substack{\nu \in \{ \mathcal{B}, \mathcal{T} \}}} \hat{U}_{\mathcal{Q}}^{\nu}} \\
        &= \displaystyle{\bigotimes_{\substack{\nu \in \{ \mathcal{B}, \mathcal{T} \}}} \hat{U}_{\mathcal{Q}}^{\nu \, \dagger} \rho_{\mathcal{Q}}^{\nu} \hat{U}_{\mathcal{Q}}^{\nu}} \\
        &= \displaystyle{\bigotimes_{\substack{\nu \in \{ \mathcal{B}, \mathcal{T} \}}} \varrho_{\mathcal{Q}}^{\nu}},
    \end{split}
\end{equation}
and which admits the spectral decomposition
\begin{equation}\label{eq:61}
    \varrho_{\mathcal{Q}} = \displaystyle{\sum_{\substack{k_{\mathcal{B}} \in \mathcal{I}_{N_{\mathcal{B}}} \\ k_{\mathcal{T}} \in \mathcal{I}_{N_{\mathcal{T}}}}} \Lambda_{k_{\mathcal{B}}  k_{\mathcal{T}}} \ketbra{\Lambda_{k_{\mathcal{B}} k_{\mathcal{T}}}}{\Lambda_{k_{\mathcal{B}} k_{\mathcal{T}}}}},
\end{equation}
where \(\Lambda_{k_{\mathcal{B}} k_{\mathcal{T}}} = \lambda_{\mathcal{Q}_{k_{\mathcal{B}}}}^{\mathcal{B}} \lambda_{\mathcal{Q}_{k_{\mathcal{T}}}}^{\mathcal{T}}\) represent the eigenvalues of the composite system and the kets represent the orthonormal basis of eigenvectors in \( \mathcal{H} \), defined by the tensor product \(\ket{\Lambda_{k_{\mathcal{B}} k_{\mathcal{T}}}} = \ket{\lambda_{\mathcal{Q}_{k_{\mathcal{B}}}}^{\mathcal{B}}} \otimes \ket{\lambda_{\mathcal{Q}_{k_{\mathcal{T}}}}^{\mathcal{T}}}\). Therefore, using (\ref{eq:61}), the Von Neumann entropy assumes the form
\begin{equation}\label{eq:62}
    \begin{split}
        S_{p} \left( \varrho_{\mathcal{Q}} \right) &= - \displaystyle{\sum_{\substack{\nu \in \{ \mathcal{B}, \mathcal{T} \} \\ k_{\nu} \in \mathcal{I}_{N_{\nu}}}} \lambda_{\mathcal{Q}_{k_{\nu}}}^{\nu} \ln{\left( \lambda_{\mathcal{Q}_{k_{\nu}}}^{\nu} \right)}} \\
        &= S_{p} \left( \varrho_{\mathcal{Q}}^{\mathcal{B}} \right) + S_{p} \left( \varrho_{\mathcal{Q}}^{\mathcal{T}} \right),
    \end{split}
\end{equation}
which is nothing more than the sum of the entropies of each of the subsystems, and emphasizes that there is no correlation between the subsystems. The aforementioned is expected by the construction of the density matrix \( \varrho_{\mathcal{Q}} \), since said entropy measures the uncertainty and the lack of information about the quantum state of a system, quantifying the degree of mixing of a density matrix.

Next, we will calculate the last of the quantities that we consider relevant for our hybrid model, which is the fidelity, which, for two subsequent states \( \varrho_{\mathcal{Q}}^{\left( p-1 \right)}\), \( \varrho_{\mathcal{Q}}^{\left( p \right)} \) acting on \( \mathcal{H}_{\mathcal{Q}}^{\mathcal{B}} \otimes \mathcal{H}_{\mathcal{Q}}^{\mathcal{T}}\), given its multiplicative nature, and using the property \( \sqrt{A^{\left( 1 \right)} \otimes A^{\left( 2 \right)}} = \sqrt{A^{\left( 1 \right)}} \otimes \sqrt{A^{\left( 2 \right)}} \), can be written as:
\begin{equation}\label{eq:63}
    \begin{split}
        F_{p} \left( \varrho_{\mathcal{Q}}^{\left( p-1 \right)}, \varrho_{\mathcal{Q}}^{\left( p \right)} \right) &= \Tr{\sqrt{\sqrt{\displaystyle{\bigotimes_{\substack{\nu \in \{ \mathcal{B}, \mathcal{T} \}}} \varrho_{\mathcal{Q}}^{\nu \, \left( p-1 \right)}}}  \left[ \displaystyle{\bigotimes_{\substack{\nu \in \{ \mathcal{B}, \mathcal{T} \}}} \varrho_{\mathcal{Q}}^{\nu \, \left( p \right)}} \right] \sqrt{\displaystyle{\bigotimes_{\substack{\nu \in \{ \mathcal{B}, \mathcal{T} \}}} \varrho_{\mathcal{Q}}^{\nu \left( p-1 \right)}}}}} \\
        &= \Tr{\displaystyle{\bigotimes_{\substack{\nu \in \{ \mathcal{B}, \mathcal{T} \}}} \sqrt{\sqrt{\varrho_{\mathcal{Q}}^{\nu \, \left( p-1 \right)}} \varrho_{\mathcal{Q}}^{\nu \, \left( p \right)} \sqrt{\varrho_{\mathcal{Q}}^{\nu \, \left( p-1 \right)}}}}}.
    \end{split}
\end{equation}
And upon using the property \( \Tr{A^{\left( 1 \right)} \otimes A^{\left( 2 \right)}} = \Tr{A^{\left( 1 \right)}} \otimes \Tr{A^{\left( 2 \right)}} \), we obtain:
\begin{equation}\label{eq:64}
    \begin{split}
        F_{p} \left( \varrho_{\mathcal{Q}}^{\left( p-1 \right)}, \varrho_{\mathcal{Q}}^{\left( p \right)} \right) &= \displaystyle{\bigotimes_{\substack{\nu \in \{ \mathcal{B}, \mathcal{T} \}}} \Tr{\sqrt{\sqrt{\varrho_{\mathcal{Q}}^{\nu \, \left( p-1 \right)}} \varrho_{\mathcal{Q}}^{\nu \, \left( p \right)} \sqrt{\varrho_{\mathcal{Q}}^{\nu \, \left( p-1 \right)}}}}} \\
        &= \displaystyle{\bigotimes_{\substack{\nu \in \{ \mathcal{B}, \mathcal{T} \}}} F \left( \varrho_{\mathcal{Q}}^{\nu \, \left( p-1 \right)}, \varrho_{\mathcal{Q}}^{\nu \, \left( p \right)} \right)}.
    \end{split}
\end{equation}
Due to the construction of \( \varrho_{\mathcal{Q}} \), each individual fidelity between the states \( \varrho_{\mathcal{Q}}^{\nu \, \left( p-1 \right)}\), \( \varrho_{\mathcal{Q}}^{\nu \, \left( p \right)} \) reduces to the classical Bhattacharyya fidelity between the eigenvalue distributions:
\begin{equation}\label{eq:65}
    F \left( \varrho_{\mathcal{Q}}^{\nu \, \left( p-1 \right)}, \varrho_{\mathcal{Q}}^{\nu \, \left( p \right)} \right) = \displaystyle{\sum_{\substack{k_{\nu} \in \mathcal{I}_{N_{\nu}}}} \sqrt{\lambda_{\mathcal{Q}_{k_{\nu}}}^{\nu \, \left( p-1 \right)} \lambda_{\mathcal{Q}_{k_{\nu}}}^{\nu \, \left( p \right)}}}.
\end{equation}
Finally, we must provide an application that justifies the calculations of expressions (\ref{eq:54}), (\ref{eq:62}), and (\ref{eq:65}), which in our case would be obtaining information regarding the changes in the quantum state as the training evolves, genuinely supporting our hybrid model. Because of this, it is fundamental to understand that if the quantum system collapses into a system where the entropy is low, we face a loss of information that leads us to less expressive states. For instance, in the case where the quantum components collapse into a state without superposition.

The foregoing can occur for states in which the decoherence rate is maximum, that is, \( \mathcal{D}_{p} \left( \rho_{T}^{\left( p-1 \right)}, \rho_{T}^{\left( p \right)} \right) \approx 1 \), which implies that \(S_{p} \left( \varrho_{\mathcal{Q}} \right) \approx S_{p} \left( \varrho_{\mathcal{Q}}^{\left( p-1 \right)} \right) \). Furthermore, in this regime of low entropy and high initial decoherence, there might exist a case where \( \lambda_{\mathcal{Q}_{k_{\nu}}}^{\nu \left( p-1; p \right)} \to \delta_{k_{\nu} k_{\nu}^{\prime}} \), where it is the Kronecker delta. Therefore, the Bhattacharyya fidelity simplifies to the overlap of the dominant populations:
\begin{equation}\label{eq:66}
        F \left( \varrho_{\mathcal{Q}}^{\nu \, \left( p-1 \right)}, \varrho_{\mathcal{Q}}^{\nu \, \left( p \right)} \right) = \displaystyle{\sum_{\substack{k_{\nu} \in \mathcal{I}_{N_{\nu}}}} \sqrt{\delta_{k_{\nu} k_{\nu}^{\prime}} \delta_{k_{\nu} k_{\nu}^{\prime \prime}}}} 
        = \sqrt{\delta_{k_{\nu}^{\prime \prime} k_{\nu}^{\prime \prime}}}.
\end{equation}
Consequently, from the fidelity, we observe that \( F_{p} \left( \varrho_{\mathcal{Q}}^{\left( p-1 \right)}, \varrho_{\mathcal{Q}}^{\left( p \right)} \right) \to 1 \) denotes that the network has reached a convergence regime, wherein the trajectory in the quantum state space has become stationary. The informational risk does not lie in the stability of this trajectory, but rather in the nature of the final state to which the system has converged. If this convergence occurs towards a state of residual entropy \( S_{p} \left( \varrho_{\mathcal{Q}} \right) \to 0 \), the model has reached a stationary solution that sacrificed quantum superposition in favor of a classical diagonal configuration.

Thus, we can see that in this convergence regime towards low-entropy states, the Bhattacharyya fidelity in (\ref{eq:65}) acts upon a classical probability distribution, losing its quantum superposition character. Consequently, the system is not only stable, but ``frozen'' in a trivial classical configuration. Hence, the expressivity of the hybrid model is reduced, since the network ceases to utilize entanglement and interference as feature space expanders.

For this reason, it becomes necessary to implement a regularization strategy that must guarantee that the stationary point of convergence (\(F_{p} \left( \varrho_{\mathcal{Q}}^{\left( p-1 \right)}, \varrho_{\mathcal{Q}}^{\left( p \right)} \right) \to 1\)) preserves an entropy value \(S_{p} \left( \varrho_{\mathcal{Q}}\right) > S_{\mathrm{threshold}}\) and a coherence \( \mathcal{C}_{p} \left(\varrho_{\mathcal{Q}}^{\left( p \right)} \right) > 0 \), ensuring that the final solution maintains the requisite quantum richness for the approximation of complex operators. Thus, to mitigate this risk and guarantee that the model's convergence occurs along a trajectory rich in quantum information, we propose the integration of regularization terms defined by:
\begin{equation}\label{eq:67}
    \mathcal{L}_{INF} = \lambda_{COE} \displaystyle{\sum_{\substack{p \in \mathcal{I}_{P}}} \left(1 - \mathcal{C}_{p}\left( \varrho_{\mathcal{Q}}^{\left( p \right)} \right)\right)} + \lambda_{FID} \displaystyle{\sum_{\substack{p \in \mathcal{I}_{P}}} \left(1 - F_{p} \left( \varrho_{\mathcal{Q}}^{\left( p-1 \right)}, \varrho_{\mathcal{Q}}^{\left( p \right)} \right)\right)},
\end{equation}
where \(\lambda_{COE}\) and \(\lambda_{FID}\) are hyperparameters that balance the precision of the PDE approximation with the integrity of the quantum state. The term \(\left(1 - \mathcal{C}_{p}\left( \varrho_{\mathcal{Q}}^{\left( p \right)} \right)\right)\) acts as a barrier that penalizes the loss of coherence, forcing the system to maintain active superpositions and preventing the state from collapsing into a classical basis. Complementarily, the term \(\left(1 - F_{p} \left( \varrho_{\mathcal{Q}}^{\left( p-1 \right)}, \varrho_{\mathcal{Q}}^{\left( p \right)} \right)\right)\) regulates the stability of the learning trajectory; by penalizing abrupt variations in fidelity between subsequent steps \(\left(F_{p} \left( \varrho_{\mathcal{Q}}^{\left( p-1 \right)}, \varrho_{\mathcal{Q}}^{\left( p \right)} \right)\approx 1\right)\), we guarantee that the evolution of the quantum state occurs continuously and stably within the Hilbert space. In this way, by summing expressions (\ref{eq:39}) and (\ref{eq:62}), we obtain the total loss function of our hybrid model, given by the expression:
\begin{equation}\label{eq:68}
    \mathcal{L}_{TOT} = \lambda_{IC} \tilde{\mathcal{L}}_{IC} + \lambda_{BC} \tilde{\mathcal{L}}_{BC} + \lambda_{PDE} \tilde{\mathcal{L}}_{PDE} + \lambda_{SOL} \hat{\mathcal{L}}_{SOL} + \mathcal{L}_{INF},
\end{equation}
which enables an optimization process guided not only by the minimization of the residuals of the initial conditions, boundary conditions, and the system of partial differential equations, but also by the preservation of the quantum richness necessary for the hybrid architecture to sustain its universal approximation capability.

\section{Conclusions and future perspectives}\label{sec:5}

In this work, we introduced QPI-DeepONet-MAC, a hybrid
quantum-classical architecture that integrates parameterized quantum
circuits into Physics-Informed Deep Operator Networks through
multiplicative and additive couplings. We established a theoretical
framework for analyzing the expressive power and trainability of the
proposed architecture. In particular, under the stated assumptions, we
demonstrated that QPI-DeepONet-MAC retains universal approximation
capabilities for the class of operators considered. We also derived
upper and lower bounds on the gradients with respect to the quantum
parameters, yielding explicit scaling conditions for maintaining
trainable gradients and providing theoretical conditions for mitigating
barren-plateau effects.

Furthermore, we derived a bound on the minimum number of sensors
required to achieve a prescribed approximation accuracy for a class of
well-posed parametric PDEs. This result relates the required sampling
density to the regularity and dimensionality of the underlying problem
and to the dynamical properties of the corresponding evolution
equation. In addition, we introduced a quantum-informational
regularization framework based on the dynamics of quantum coherence
and state fidelity. This framework provides a mechanism for controlling
the evolution of the quantum component during training and for
preventing excessive loss of quantum-state structure.

Taken together, these analytical results provide a theoretical
foundation for incorporating parameterized quantum circuits into
physics-informed operator learning. Importantly, they characterize
conditions under which the proposed architecture can remain expressive
and trainable as its quantum components increase in size, rather than
establishing performance advantages over classical operator-learning
methods.

Several directions remain for future work. First, numerical
experiments are needed to assess the practical consequences of the
theoretical results, including approximation accuracy, optimization
stability, sensor requirements, and the behavior of the proposed
quantum-informational regularization. We are currently developing
classical simulations and numerical benchmarks for this purpose.
Second, the implementation of QPI-DeepONet-MAC on quantum hardware
will require optimized circuit constructions, parameterizations, and
training procedures adapted to the noise and resource constraints of
current Noisy Intermediate-Scale Quantum (NISQ) devices. Such
experiments will provide an opportunity to test whether the theoretical
trainability conditions derived here translate into practical
advantages on quantum hardware.

\begin{appendices}
\section{Ismailov's Theorem and their consequences}\label{sec:Ap-A}
This appendix is dedicated to demonstrating a generalization of the Universal Approximation Theorem for operators introduced in \cite{Lu202100}. To this end, our intention is to provide a detailed proof of the theorem in question to approximate continuous operators \(\boldsymbol{\mathcal{G}}: \mathscr{V} \to  C_{\mathbb{R}^{M}}\left( \mathcal{K} \right)\), where \(\mathscr{V}\) is a compact subset of continuous functions defined on some compact set, not necessarily Euclidean, and \(\mathcal{K}\) is a compact subset of some \(\mathbb{R}^{n}\). This proof arises in an elegant and natural way from a much more general version proved by Vugar E. Ismailov \cite{Ismailov202600}.

However, even though our proof may not be entirely innovative, we find it necessary to include it in order to highlight its utility in the proof of Theorem \ref{theorem:Teo-3.1}. Furthermore, our proof helps to identify the objects involved in the classical part of the QPI-DeepONet-MAC Hybrid Model.

First of all, we would like to emphasize that in what follows in this appendix, \(\mathscr{X}\) will always denote a real vector space, and when \(\left\Vert  \cdot \right\Vert_{\mathscr{X}}\) is a norm defined on it, we will say that \(\left(\mathscr{X},\left\Vert  \cdot \right\Vert_{\mathscr{X}}\right)\) is a normed space. In this context, we define
\begin{equation}\label{eq:A.1}
    \mathscr{B}_{\mathbb{R}} \left( \mathscr{X} \right) \coloneq \left\lbrace \varphi:\mathscr{X}\to\mathbb{R} \, | \, \varphi \mbox{ is linear and continuous} \right\rbrace,
\end{equation}
as the topological dual space \cite{Brezis201100} of \(\mathscr{X}\). Now, we would like to make the following observation.
\begin{remark}\label{remark:Obs-A.1}
    Let \(\left(\mathbb{B},\left\Vert \cdot \right\Vert_{\mathbb{B}}\right)\) be a Banach space and \(\mathbb{K} \subset \mathbb{B}\) a compact subset. It is well known that
    \begin{equation}\label{eq:A.2}
        \mathcal{M}_{\mathbb{R}} \left(\mathbb{K}\right) \coloneq \left\lbrace \mu : \mathcal{B}_{\mathbb{K}} \to \mathbb{R} \, | \, \mu \mbox{ is a finite signed regular Borel measure} \right\rbrace,
    \end{equation}
    where \(\mathcal{B}_{\mathbb{K}}\) denotes the Borel \(\sigma\)-algebra on \(\mathbb{K}\). Thus, the Riesz-Markov-Kakutani Representation Theorem \cite{Rudin198700} guarantees that
    \begin{equation}\label{eq:A.3}
        \mathscr{B}_{\mathbb{R}} \left( C_{\mathbb{R}}\left(\mathbb{K} \right) \right) = \mathcal{M}_{\mathbb{R}} \left(\mathbb{K}\right)
    \end{equation}
    and for each \(\mu \in \mathscr{B}_{\mathbb{R}} \left( C_{\mathbb{R}}\left(\mathbb{K} \right) \right)\),
    \begin{equation}\label{eq:A.4}
        \mu \left( f \right) \coloneq \int_{\mathbb{K}} f \, \operatorname{d} \mu, \, \forall \, f \in C_{\mathbb{R}} \left( \mathbb{K} \right).
    \end{equation}
    In addition to the above, using the same notation and as a consequence of it, we have that
    \begin{equation}\label{eq:A.5}
        \mathscr{B}_{\mathbb{R}} \left( C_{\mathbb{R}^{M}}\left(\mathbb{K} \right) \right) = \mathcal{M}_{\mathbb{R}^{M}} \left( \mathbb{K} \right) \coloneq \left\lbrace \boldsymbol{\mu} = \left(\mu_{1},\ldots,\mu_{M}\right)\, : \, \mu_{i} \in \mathcal{M}_{\mathbb{R}} \left(\mathbb{K}\right)\right\rbrace
    \end{equation}
    and for each \(\boldsymbol{\mu} = \left(\mu_{1}, \dots, \mu_{M}\right) \in \mathscr{B}_{\mathbb{R}} \left( C_{\mathbb{R}^{M}}\left(\mathbb{K} \right) \right)\),
    \begin{equation}\label{eq:A.6}
        \boldsymbol{\mu} \left(\textbf{f} \right) \coloneq \int_{\mathbb{K}} \textbf{f} \cdot \operatorname{d} \boldsymbol{\mu} \equiv \sum_{i \in \mathcal{I}_{M}} \int_{\mathbb{K}} f_{i} \, \operatorname{d} \mu_{i},
    \end{equation}
    for all \(\textbf{f} = \left(f_{1}, \dots, f_{M}\right) \in C_{\mathbb{R}^{M}} \left( \mathbb{K} \right)\).
\end{remark}
Now, we need to define topological neural networks in a normed space.
\begin{definition}\label{definition:Def-A.1}
    Let \(\left(\mathscr{X},\left\Vert \cdot \right\Vert_{\mathscr{X}}\right)\) be a normed space, let \( \sigma: \mathbb{R} \to \mathbb{R} \) be an activation function, and let \(M \ge 1\). We define a vector-valued topological feedforward neural network in \(\mathscr{X}\) with a single hidden layer as any mapping \( H : \mathscr{X} \to \mathbb{R}^{M} \) of the form
    \begin{equation}\label{eq:A.7}
        H \left(x\right) = A \sigma \left(T\left(x\right)\right),
    \end{equation}
    where
    \begin{equation}\label{eq:A.8}
        T \left(x\right) = \left(\varphi_{1} \left(x\right) + \vartheta_{1}, \dots, \varphi_{R} \left(x\right) + \vartheta_{R}\right),
    \end{equation}
    with \(\varphi_{r} \in \mathscr{B}_{\mathbb{R}}\left(\mathscr{X} \right)\), \(\vartheta_{r} \in \mathbb{R}\), and \(A \in \mathbb{R}^{M \times R}\). Furthermore, we would like to emphasize that here \(\sigma\) acts componentwise on \(\mathbb{R}^{R}\). Each hidden neuron evaluates a continuous linear function on the input element and then applies the activation function.
    
    Now, let \(\left(\mathscr{X},\left\Vert \cdot \right\Vert_{\mathscr{X}}\right)\) be as defined above, with the additional condition of being complete, that is, \(\left(\mathscr{X},\left\Vert \cdot \right\Vert_{\mathscr{X}}\right)\) is a Banach space. Given compact subsets \(\mathscr{V} \subset \mathscr{X}\) and \(\mathcal{K} \subset \mathbb{R}^{n}\), respectively, we consider continuous operators
    \begin{equation}\label{eq:A.9}
        G : \mathscr{V} \longrightarrow C_{\mathbb{R}^{M}} \left(\mathcal{K}\right),
    \end{equation} 
    where \(C_{\mathbb{R}^{M}}\left(\mathcal{K}\right)\) is equipped with the uniform norm \(\left\Vert \cdot \right\Vert_{\infty}\). In the context of DeepONet, one seeks to approximate the mapping
    \begin{equation}\label{eq:A.10}
        \left(u,\mathbf{y}\right) \longmapsto G \left(u\right) \left(\mathbf{y}\right), \, \left(u,\mathbf{y}\right) \in \mathscr{V} \times \mathcal{K},
    \end{equation}
    by a structured network that separates the dependence on the input \(u\) from the dependence on the coordinate \(\mathbf{y}\).
\end{definition}

In this sense, we need to define what topological DeepONets in a normed space are.

\begin{definition}\label{definition:Def-A.2}
    Let \(P \in \mathbb{N}^{\ast}\) and let \(M \in \mathbb{N}^{\ast}\). A topological DeepONet consists of a branch subnetwork defined by:
    \begin{equation}\label{eq:A.11}
        B : \mathscr{X} \longrightarrow \mathbb{R}^{M \times P},
    \end{equation}
    which receives an input \(x \in \mathscr{X}\) and yields a matrix as output
    \begin{equation}\label{A.12}
        B \left(x\right) = \left[\textbf{b}_{1} \left(x\right), \dots, \textbf{b}_{P} \left(x\right)\right],
    \end{equation}
    where each column \(\textbf{b}_{p}: \mathscr{X} \to \mathbb{R}^{M}\) is a topological neural network in \(\mathscr{X}\) constructed via (\ref{eq:A.7})-(\ref{eq:A.8}), and a trunk subnetwork defined by:
    \begin{equation}\label{eq:A.13}
        \boldsymbol{\tau} :\mathbb{R}^{n} \longrightarrow \mathbb{R}^{P}.
    \end{equation}
    This trunk subnetwork receives a point \(\textbf{y} \in \mathbb{R}^{n}\) as input and returns the vector
    \begin{equation}\label{eq:A.14}
        \boldsymbol{\tau} \left(\textbf{y}\right) = \left(\tau_{1} \left(\textbf{y}\right), \dots, \tau_{P} \left(\textbf{y}\right)\right),
    \end{equation}
    represented by a Euclidean neural network (for example, a network with a single hidden layer), whose component functions \(\tau_{p}: \mathbb{R}^{n} \to \mathbb{R}\), \(p \in \mathcal{I}_{P}\), denote the \(P\) outputs of the network.
\end{definition}

Thus, using Definitions \ref{definition:Def-A.1} and \ref{definition:Def-A.2}, the resulting DeepONet defines an operator
\begin{equation}\label{eq:A.15}
    \widehat{G} : \mathscr{X} \longrightarrow C_{\mathbb{R}^{M}} \left(\mathbb{R}^{n}\right),
\end{equation}
given by:
\begin{equation}\label{eq:A.16}
    \widehat{G} \left(x\right) \left(\textbf{y}\right) \coloneq B\left(x\right)\boldsymbol{\tau}\left(\textbf{y}\right) = \sum_{p \in \mathcal{I}_{P}} \tau_{p} \left(\textbf{y}\right) \boldsymbol{b}_{p} \left(x\right).
\end{equation}
Furthermore, according to Definitions (\ref{definition:Def-A.1})-(\ref{definition:Def-A.2}), each \(\boldsymbol{b}_{p}\) is of the form
\begin{equation}\label{eq:A.17}
    \begin{split}
        \boldsymbol{b}_{p} \left(x\right) &= A^{(p)} \sigma \left(\varphi_{1}^{p} \left(x\right) + \vartheta_{1}^{p}, \dots, \varphi_{R}^{p} \left(x\right) + \vartheta_{R}^{p}\right) \\
        &= \left(\sum_{r \in \mathcal{I}_{R}} a^{(p)}_{1r} \sigma \left(\varphi_{r}^{p} \left(x\right) + \vartheta_{r}^{p}\right) , \dots, \sum_{r \in \mathcal{I}_{R}} a^{(p)}_{Mr} \sigma \left(\varphi_{r}^{p} \left(x\right) + \vartheta_{r}^{p}\right)\right),
    \end{split}
\end{equation}
where \(A^{(p)} = \left[a^{(p)}_{jr}\right] \in \mathbb{R}^{M \times R}\), \(\varphi_{r}^{p} \in \mathscr{B}_{\mathbb{R}} \left( \mathscr{X} \right)\), and \(\vartheta_{r}^{p} \in \mathbb{R}\). Consequently, we can write
\begin{equation}\label{eq:A.18}
    b_{jp} \left(x\right) = \sum_{r \in \mathcal{I}_{R}} a^{(p)}_{jr} \sigma \left(\varphi_{r}^{p} \left(x\right) + \vartheta_{r}^{p}\right),
\end{equation}
where \(B \left(x\right) = \left[b_{jp}\left(x\right)\right]\in \mathbb{R}^{M \times P}\). Note also that with this notation we have
\begin{equation}\label{eq:A.19}
    \boldsymbol{b}_{p} \left(x\right) = \left( b_{1p}\left(x\right), \dots, b_{Mp}\left(x\right) \right).
\end{equation}

Next, we introduce the following theorem, whose proof is detailed in \cite{Ismailov202600}. As mentioned at the beginning of this appendix, this result is of vital importance in proving both the main result of this appendix and Theorem \ref{theorem:Teo-3.1}.

\begin{theorem}[Ismailov's theorem \cite{Ismailov202600}]\label{theorem:Teo-A.1}
    Suppose that \(\left(\mathscr{X}, \left\Vert \cdot \right\Vert_{\mathscr{X}}\right)\) is a Banach space and that \(\sigma : \mathbb{R} \to \mathbb{R}\) is a Tauber-Wiener activation function. Let \(\mathscr{V} \subset \mathscr{X}\) and \(\mathcal{K} \subset \mathbb{R}^{n}\) be compact subsets, and let \(G : \mathscr{V} \to C_{\mathbb{R}^{M}} \left(\mathcal{K}\right)\) be continuous. Then, for any \(\epsilon > 0\), there exist \(P \in \mathbb{N}^{\ast}\), functions \(\tau_{p} : \mathcal{K} \to \mathbb{R}\) of the form \(\tau_{p} \left(\textbf{y}\right) = \sigma \left(\textbf{w}_{p} \cdot \textbf{y} + \zeta_{p}\right)\), \(p \in \mathcal{I}_{P}\), and a function \(B : \mathscr{X} \to \mathbb{R}^{M \times P}\) whose columns are topological neural networks in \(\mathscr{X}\) constructed as in Definition \ref{definition:Def-A.2}, such that:
    \begin{equation}\label{eq:A.20}
        \sup_{x \in \mathscr{V}} \sup_{\textbf{y} \in \mathcal{K}} \left\Vert G\left(x\right)\left(\textbf{y}\right) -B\left(x\right)\boldsymbol{\tau}\left(\textbf{y}\right) \right\Vert_{2,M} < \epsilon,
    \end{equation}
    where \(\boldsymbol{\tau}=\left(\tau_{1}, \dots, \tau_{P}\right)\). That is, the operator \(G\) admits an \(\epsilon\)-approximation on \(\mathscr{V}\times \mathcal{K}\) by a topological DeepONet with branch network \(B\) and trunk network \(\boldsymbol{\tau}\).
\end{theorem}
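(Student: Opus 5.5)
The approach follows the classical Chen--Chen strategy for operator approximation, lifted from Euclidean inputs to a Banach space $\mathscr{X}$ by means of the dual $\mathscr{B}_{\mathbb{R}}(\mathscr{X})$. There are three steps: a separable (tensor) approximation of $(x,\mathbf{y})\mapsto G(x)(\mathbf{y})$ on the compact product $\mathscr{V}\times\mathcal{K}$; approximation of the $\mathbf{y}$-factors by ridge functions $\sigma(\mathbf{w}_p\cdot\mathbf{y}+\zeta_p)$; and approximation of the $x$-dependent coefficients by topological networks built from continuous linear functionals. The tolerance $\epsilon$ is split as $\epsilon/3$ across the three steps, with the vector-valued case handled componentwise and the componentwise bounds combined through $\Vert\cdot\Vert_{2,M}\le\sqrt{M}\max_j|\cdot|$.

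\emph{Step 1 (trunk).} Because $G$ is continuous and $\mathscr{V}$ is compact, the image $G(\mathscr{V})$ is a compact subset of $C_{\mathbb{R}^{M}}(\mathcal{K})$. Since $\sigma$ is Tauber--Wiener, the span of $\{\sigma(\mathbf{w}\cdot\mathbf{y}+\zeta)\}$ is dense in $C_{\mathbb{R}}(\mathcal{K})$. Using compactness of the image, I would pick a finite $\delta$-net $\{G(x_k)\}$ of $G(\mathscr{V})$ and a continuous partition of unity on $G(\mathscr{V})$ subordinate to it. Approximating each $G_j(x_k)$ by a finite ridge sum then yields a \emph{single} finite family $\tau_1,\dots,\tau_P$ and continuous coefficient functionals $c_{jp}:\mathscr{V}\to\mathbb{R}$ such that
\begin{equation*}
\sup_{x\in\mathscr{V}}\sup_{\mathbf{y}\in\mathcal{K}}\Big|G_j(x)(\mathbf{y})-\sum_{p\in\mathcal{I}_P}c_{jp}(x)\tau_p(\mathbf{y})\Big|<\epsilon/3 .
\end{equation*}
This is the Chen--Chen uniform-in-$x$ step. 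The $c_{jp}$ are continuous because they are the partition-of-unity weights composed with $G$, multiplied by fixed ridge coefficients.

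\emph{Step 2 (branch).} Each continuous $c_{jp}$ on the compact set $\mathscr{V}\subset\mathscr{X}$ must be approximated uniformly by $\sum_r a_{jr}^{(p)}\sigma(\varphi_r^p(x)+\vartheta_r^p)$ with $\varphi_r^p\in\mathscr{B}_{\mathbb{R}}(\mathscr{X})$. This is the main obstacle, namely the universal approximation theorem for a single hidden layer on a non-Euclidean compact set. My plan here is the following. The algebra generated by functions $g\circ\varphi$, with $\varphi\in\mathscr{B}_{\mathbb{R}}(\mathscr{X})$ and $g$ continuous, separates points of $\mathscr{V}$ by Hahn--Banach, so Stone--Weierstrass makes it dense in $C(\mathscr{V})$. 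Products reduce to sums of ridge functions via the identity $\cos a\cos b=\tfrac12[\cos(a+b)+\cos(a-b)]$, which uses the linearity of $\varphi$; equivalently, one can invoke the density of $\operatorname{span}\{\cos(\varphi(\cdot))\}$. Each univariate continuous $g$ on the compact interval $\varphi(\mathscr{V})$ is then approximated by $\sum_i a_i\sigma(\lambda_i t+\vartheta_i)$, using the Tauber--Wiener property on $\mathbb{R}$. Composing with $\varphi$ gives the required form $\sigma(\lambda_i\varphi(x)+\vartheta_i)$ with $\lambda_i\varphi\in\mathscr{B}_{\mathbb{R}}(\mathscr{X})$. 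Alternatively, I can cite this reduction directly from \cite{Ismailov202600}.

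\emph{Step 3 (assembly).} Set $b_{jp}$ to be the network from Step 2 with accuracy $\eta$, where $\eta\sum_p\sup_{\mathcal{K}}|\tau_p|<\epsilon/3$; the sup is finite because $\tau_p$ is continuous on a compact set. The triangle inequality gives the componentwise error below $2\epsilon/3$. Taking the maximum over $j\in\mathcal{I}_M$ and rescaling $\epsilon$ by $\sqrt{M}$ then yields the bound (\ref{eq:A.20}), with $B(x)=[b_{jp}(x)]$ and columns of the form of Definition \ref{definition:Def-A.2}.
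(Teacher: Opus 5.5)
The paper gives no argument for Theorem~\ref{theorem:Teo-A.1}. It quotes the result from \cite{Ismailov202600} and uses it as a black box, through Theorem~\ref{theorem:Teo-A.2}, where the abstract functionals $\boldsymbol{\mu}_{r}^{p}$ are then replaced by point evaluations via Riesz--Markov and Lemma~\ref{lemma:Lem-A.2}. You instead give a self-contained Chen--Chen style proof, and it is correct.

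\begin{itemize}
\item Your partition of unity on the compact image $G(\mathscr{V})$ yields one trunk family $\tau_{1},\dots,\tau_{P}$ serving every $x$ and every $j$. Its coefficients $c_{jp}$ are $\psi_{k}\circ G$ times fixed constants, hence continuous.
\item Your branch step reduces to density of single-hidden-layer topological networks in $C(\mathscr{V})$. You get this from Hahn--Banach separation, Stone--Weierstrass and the one-dimensional Tauber--Wiener property.
\end{itemize}

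The citation buys brevity and whatever generality Ismailov's paper has. Your route buys independence from the external reference and shows which hypotheses actually matter. You only need that $\mathscr{B}_{\mathbb{R}}(\mathscr{X})$ separates points of the compact set $\mathscr{V}$, so completeness of $\mathscr{X}$ is never used. You also need $\sigma$ to be Tauber--Wiener and, as you assume, continuous, so that $\sup_{\mathcal{K}}|\tau_{p}|<\infty$ in Step 3. The paper makes the same continuity assumption tacitly when it invokes uniform continuity of $\sigma$ in the proof of Theorem~\ref{theorem:Teo-A.2}.

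Two small repairs are needed.

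First, your parenthetical ``equivalently, the density of $\operatorname{span}\{\cos(\varphi(\cdot))\}$'' is false as stated. Since $\cos(\varphi(-x))=\cos(\varphi(x))$ for every linear $\varphi$, this span never separates $x$ from $-x$; for $\mathscr{V}=[-1,1]\subset\mathbb{R}$ it contains only even functions. Use $\operatorname{span}\{\cos(\varphi(\cdot)+\theta):\varphi\in\mathscr{B}_{\mathbb{R}}(\mathscr{X}),\ \theta\in\mathbb{R}\}$ instead:
\begin{itemize}
\item your product-to-sum identity still closes it under multiplication, since the phases add and subtract;
\item it contains the constants;
\item it separates $x\neq x'$: take $\varphi$ with $\varphi(x)\neq\varphi(x')$ and $\lambda>0$ small enough that $\lambda\varphi(\mathscr{V})\subset(-\pi/2,\pi/2)$. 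Then $\cos(\lambda\varphi(\cdot)-\pi/2)=\sin(\lambda\varphi(\cdot))$ takes different values at $x$ and $x'$.
\end{itemize}
Applying Stone--Weierstrass directly to this span is also cleaner than first passing through the algebra generated by all $g\circ\varphi$, because your identity only converts products of cosines.

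Second, to match Definition~\ref{definition:Def-A.2} literally, stack the hidden units of $b_{1p},\dots,b_{Mp}$ into a single hidden layer with a block-structured matrix $A^{(p)}$.
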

Now, we apply the above theorem to our problem of interest, starting by introducing some results that will be of great utility, in which \(\mathbb{K} \subset \mathbb{B}\) is as defined in Observation \ref{remark:Obs-A.1}.
\begin{lemma}\label{lemma:Lem-A.1}
    If \(\mathbb{V}\subset C_{\mathbb{R}}\left(\mathbb{K}\right)\) is a compact subset, then \(\mathbb{V}\) is uniformly bounded and equicontinuous. Consequently, \(\mathbb{V}^{M}\), for \(M \ge 1\), is uniformly bounded and equicontinuous.
\end{lemma}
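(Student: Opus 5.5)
The plan is to prove the two assertions separately. The first, that $\mathbb{V}$ is uniformly bounded and equicontinuous, follows from compactness in the metric space $\left(C_{\mathbb{R}}\left(\mathbb{K}\right), \Vert\cdot\Vert_{\infty}\right)$ by a finite $\epsilon$-net argument, i.e.\ the easy direction of Arzelà--Ascoli. The second, the statement for $\mathbb{V}^{M}$, follows by combining the $M$ components with a maximum over finitely many indices. Throughout, $\mathbb{K}$ is a compact subset of the Banach space $\mathbb{B}$ as in Remark \ref{remark:Obs-A.1}, so it is a compact metric space and every $f \in C_{\mathbb{R}}\left(\mathbb{K}\right)$ is bounded and uniformly continuous.

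\textbf{Uniform boundedness.} The evaluation-free map $f \mapsto \Vert f\Vert_{\infty}$ is continuous on $C_{\mathbb{R}}\left(\mathbb{K}\right)$. Its image on the compact set $\mathbb{V}$ is therefore bounded, which gives a constant $C$ with $\Vert f\Vert_{\infty} \le C$ for all $f\in\mathbb{V}$.

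\textbf{Equicontinuity.} Fix $\epsilon>0$. By compactness, $\mathbb{V}$ is totally bounded, so there exist $f_{1},\dots,f_{N}\in\mathbb{V}$ with $\mathbb{V}\subset\bigcup_{i} B_{\infty}(f_{i},\epsilon/3)$. Each $f_{i}$ is uniformly continuous on the compact $\mathbb{K}$ (Heine--Cantor), so there is $\delta_{i}>0$ with $\vert f_{i}(x)-f_{i}(x')\vert<\epsilon/3$ whenever $\Vert x-x'\Vert_{\mathbb{B}}<\delta_{i}$. Set $\delta=\min_{i}\delta_{i}>0$; the minimum is positive because it is taken over finitely many indices. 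For any $f\in\mathbb{V}$, choose $i$ with $\Vert f-f_{i}\Vert_{\infty}<\epsilon/3$. Then the triangle inequality
\begin{equation*}
\vert f(x)-f(x')\vert\le\vert f(x)-f_{i}(x)\vert+\vert f_{i}(x)-f_{i}(x')\vert+\vert f_{i}(x')-f(x')\vert<\epsilon
\end{equation*}
yields uniform equicontinuity, hence equicontinuity.

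\textbf{The product $\mathbb{V}^{M}$.} For $\mathbf{f}=(f_{1},\dots,f_{M})\in\mathbb{V}^{M}$, I would use the norm $\Vert\mathbf{f}\Vert_{\infty,M}$ built from the Euclidean (or max) norm on $\mathbb{R}^{M}$. Since all norms on $\mathbb{R}^{M}$ are equivalent, it suffices to work with $\sup_{x}\max_{j}\vert f_{j}(x)\vert$.
\begin{itemize}
\item Boundedness: $\Vert\mathbf{f}(x)\Vert_{2}\le\sqrt{M}\,C$ for all $x$.
\item Equicontinuity: for the given $\epsilon$, take the $\delta$ obtained for $\mathbb{V}$ with $\epsilon/\sqrt{M}$. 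Then $\Vert\mathbf{f}(x)-\mathbf{f}(x')\Vert_{2}<\epsilon$ whenever $\Vert x-x'\Vert_{\mathbb{B}}<\delta$, uniformly over $\mathbf{f}\in\mathbb{V}^{M}$.
\end{itemize}
Alternatively, one may note that $\mathbb{V}^{M}$ is compact in $C_{\mathbb{R}^{M}}\left(\mathbb{K}\right)$ by Tychonoff for finite products and rerun the $\epsilon$-net argument there.

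The only delicate step is the equicontinuity argument. There one must take $\delta$ as a minimum over a \emph{finite} net, which is exactly where compactness of $\mathbb{V}$ (rather than mere boundedness) is used. Everything else is routine.
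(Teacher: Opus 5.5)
Your proof is correct and follows the paper's argument essentially verbatim. Boundedness comes from compactness in the sup norm, and equicontinuity comes from a finite \(\epsilon/3\)-net, uniform continuity of each net element, \(\delta=\min_i\delta_i\), and the three-term triangle inequality. The one difference is that you actually carry out the passage to \(\mathbb{V}^{M}\) with the \(\sqrt{M}\) factors, whereas the paper's proof only treats \(\mathbb{V}\) and merely asserts the generalization in Remark~\ref{remark:Obs-A.2}.
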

\begin{proof}
    We begin the proof by verifying that \(\mathbb{V}\) is uniformly bounded. Since every compact set is closed and bounded, and since we are considering the uniform norm, it follows that there exists \(S > 0\) such that:
    \begin{equation}\label{A.21}
        \sup_{x \in \mathbb{K}} \left\Vert f\left(x\right)\right\Vert = \left\Vert f\right\Vert_{\infty} \le S,\, \forall \, f \in \mathbb{V}.
    \end{equation}
    That is, \(\mathbb{V}\) is uniformly bounded. Furthermore, to verify that \(\mathbb{V}\) is equicontinuous, since \(\mathbb{V}\) is compact, it is totally bounded; that is, for any \(\epsilon > 0\), there exist \( f_{1}, \dots, f_{N} \in C_{\mathbb{R}}\left(\mathbb{K}\right)\), with \( N = N \left(\epsilon\right)\), such that
    \begin{equation}\label{eq:A.22}
        \mathbb{V} \subset \bigcup_{j \in \mathcal{I}_{N}} B \left(f_{j} ; \frac{\epsilon}{3}\right),
    \end{equation}
    with \(\mathcal{I}_{N} = \{1, \dots, N\}\). Thus, from the inclusion above, we have that for each \( f \in \mathbb{V} \) there exists a \(1 \le j^{\prime} \le N\) such that:
    \begin{equation}\label{eq:A.23}
        \left\Vert f_{j^{\prime}} - f\right\Vert_{\infty} < \frac{\epsilon}{3}.
    \end{equation}
    Now, since each \(f_{j}\) is continuous on the compact set \(\mathbb{K}\), \(f_{j}\) is uniformly continuous; consequently, there exists \(\delta_{j} > 0\) such that
    \begin{equation}\label{eq:A.24}
        \left\Vert x - u \right\Vert_{\mathbb{B}} < \delta_{j} \, \Longrightarrow \, \left| f_{j} \left(x\right) - f_{j} \left(u\right) \right| < \frac{\epsilon}{3}.
    \end{equation}
    Let \(\delta = \min \left\lbrace \delta_{1}, \dots, \delta_{N} \right\rbrace > 0\). Then, from (\ref{eq:A.24}), we have that
    \begin{equation}\label{eq:A.25}
        \left\Vert x - u \right\Vert_{\mathbb{B}} < \delta \, \Longrightarrow \, \left| f_{j} \left(x\right) - f_{j} \left(u\right) \right| < \frac{\epsilon}{3}, \, \forall \, j \in \mathcal{I}_{N}.
    \end{equation}
    From the above, we have that if \(\left\Vert x - u \right\Vert_{\mathbb{B}} < \delta\), then
    \begin{equation}\label{eq:A.26}
        \begin{split}
            \left| f\left(x\right) - f\left(u\right) \right| &\le \left| f \left(x\right) - f_{j^{\prime}} \left(x\right) \right| + \left| f_{j^{\prime}} \left(x\right) - f_{j^{\prime}} \left(u\right) \right| + \left| f_{j^{\prime}} \left(u\right) - f \left(u\right) \right| \\
            &\overset{(\ref{eq:A.23})}{<} \frac{2\epsilon}{3} + \left| f_{j^{\prime}} \left(x\right) -f_{j^{\prime}} \left(u\right) \right| \\
            &\overset{(\ref{eq:A.25})}{<} \, \epsilon,
        \end{split}
    \end{equation}
    for all \(f \in \mathbb{V}\). That is, \(\mathbb{V}\) is a family of equicontinuous functions.
\end{proof} However, it should be pointed out that:
\begin{remark}\label{remark:Obs-A.2}
    Lemma \ref{lemma:Lem-A.1} can be easily generalized to the case where we consider any compact subset of \(C_{\mathbb{R}^{M}} \left(\mathbb{K}\right)\), \(M \ge 1\), taking the uniform norm into account.
\end{remark}

\begin{lemma}\label{lemma:Lem-A.2}
    Under the same conditions and notations defined in Observation \ref{remark:Obs-A.1}, let \(\mathbb{V}\) be a compact subset of \(C_{\mathbb{R}}\left(\mathbb{K}\right)\). Given \(\mu \in \mathscr{B}_{\mathbb{R}} \left( C_{\mathbb{R}}\left(\mathbb{K} \right) \right)\) and \(\epsilon > 0\), there exist \(x_{1}, \dots, x_{L} \in \mathbb{K}\) and scalars \(\xi_{1}, \dots, \xi_{L} \in \mathbb{R}\), with \(L = L \left(\epsilon\right) \in \mathbb{N}^{\ast}\) and each \(\xi_{j}\) depending on \(\mu\) and \(\epsilon\), such that:
    \begin{equation}\label{eq:A.27}
        \sup_{f \in \mathbb{V}} \left| \mu \left(f\right) - \sum_{j \in \mathcal{I}_{L}} \xi_{j} f \left(x_{j}\right) \right| < \epsilon.
    \end{equation}
\end{lemma}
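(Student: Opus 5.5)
The strategy is to approximate the measure $\mu$ by a finite linear combination of point evaluations, uniformly over the compact family $\mathbb{V}$. The approximation is built from a partition of unity subordinate to a fine cover of $\mathbb{K}$. Lemma \ref{lemma:Lem-A.1} supplies the two facts we need about $\mathbb{V}$: it is uniformly bounded by some $S>0$ and it is equicontinuous.

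\textbf{Step 1 (the cover).} By Remark \ref{remark:Obs-A.1}, $\mu$ is a finite signed regular Borel measure on $\mathbb{K}$, so its total variation $|\mu|(\mathbb{K})$ is finite. Fix $\epsilon>0$. Equicontinuity gives a $\delta>0$ such that $\Vert x-u\Vert_{\mathbb{B}}<\delta$ implies $|f(x)-f(u)|<\epsilon/(2(|\mu|(\mathbb{K})+1))$ for every $f\in\mathbb{V}$. Since $\mathbb{K}$ is compact, we choose points $x_{1},\dots,x_{L}\in\mathbb{K}$ such that the balls $B(x_{j};\delta)$ cover $\mathbb{K}$. The number $L$ depends only on $\epsilon$ (through $\delta$ and $\mathbb{K}$).

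\textbf{Step 2 (partition of unity).} $\mathbb{K}$ is a compact metric space, so we can construct continuous functions $\phi_{1},\dots,\phi_{L}\ge 0$ on $\mathbb{K}$ with $\sum_{j}\phi_{j}\equiv 1$ and $\operatorname{supp}\phi_{j}\subset B(x_{j};\delta)$. Explicitly, take $\psi_{j}(x)=\max\{0,\delta-\Vert x-x_{j}\Vert_{\mathbb{B}}\}$ and $\phi_{j}=\psi_{j}/\sum_{i}\psi_{i}$; the denominator is strictly positive because the balls cover $\mathbb{K}$. For $f\in\mathbb{V}$ we set $f_{\delta}:=\sum_{j}f(x_{j})\phi_{j}\in C_{\mathbb{R}}(\mathbb{K})$. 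Since $\sum_{j}\phi_{j}\equiv 1$, we can write $f(x)-f_{\delta}(x)=\sum_{j}\phi_{j}(x)\,(f(x)-f(x_{j}))$. In each nonzero term we have $\Vert x-x_{j}\Vert_{\mathbb{B}}<\delta$, so $\Vert f-f_{\delta}\Vert_{\infty}<\epsilon/(2(|\mu|(\mathbb{K})+1))$ uniformly in $f\in\mathbb{V}$.

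\textbf{Step 3 (conclusion).} Define $\xi_{j}:=\mu(\phi_{j})=\int_{\mathbb{K}}\phi_{j}\,\mathrm{d}\mu$. These scalars depend only on $\mu$ and $\epsilon$, through the cover and the $\phi_{j}$, and not on $f$. By linearity, $\mu(f_{\delta})=\sum_{j}\xi_{j}f(x_{j})$. Continuity of $\mu$ then gives
\[
\Bigl|\mu(f)-\sum_{j\in\mathcal{I}_{L}}\xi_{j}f(x_{j})\Bigr| = |\mu(f-f_{\delta})| \le |\mu|(\mathbb{K})\,\Vert f-f_{\delta}\Vert_{\infty} < \epsilon/2.
\]
Taking the supremum over $f\in\mathbb{V}$ yields (\ref{eq:A.27}).

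The main point requiring care is uniformity. The same points $x_{j}$ and weights $\xi_{j}$ must work for every $f\in\mathbb{V}$, and this is exactly what equicontinuity from Lemma \ref{lemma:Lem-A.1} provides: a single $\delta$ for the whole family. The other delicate ingredient is the partition of unity on the possibly infinite-dimensional compact set $\mathbb{K}$. The distance-function construction in Step 2 handles it, since it uses only the metric. Uniform boundedness of $\mathbb{V}$ is not strictly needed in this argument, because $\mu$ acts on differences $f-f_{\delta}$ that are already uniformly small.
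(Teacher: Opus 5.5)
Your proof is correct, but it takes a different route from the paper's. The paper also starts from the uniform $\delta$ given by equicontinuity, but then argues measure-theoretically:
\begin{itemize}
\item it covers $\mathbb{K}$ by finitely many balls $\mathcal{A}_{j}$ of radius $\delta/2$;
\item it disjointifies them into Borel cells $\mathcal{V}_{j}=\mathcal{A}_{j}\setminus\bigcup_{k<j}\mathcal{A}_{k}$ and picks $x_{j}\in\mathcal{V}_{j}$;
\item it takes the Riemann-sum weights $\xi_{j}=\mu(\mathcal{V}_{j}\cap\mathbb{K})$, bounds the error by $\sum_{j}\int_{\mathcal{V}_{j}}|f(x)-f(x_{j})|\,\mathrm{d}|\mu|(x)\le\epsilon\,|\mu|(\mathbb{K})$, and finally rescales $\epsilon$.
\end{itemize}

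You replace the indicators $\mathbf{1}_{\mathcal{V}_{j}}$ by a continuous partition of unity $\phi_{j}$ and set $\xi_{j}=\mu(\phi_{j})$. Then $\sum_{j}\xi_{j}f(x_{j})=\mu(f_{\delta})$, and the whole error is $|\mu(f-f_{\delta})|\le|\mu|(\mathbb{K})\,\Vert f-f_{\delta}\Vert_{\infty}$.

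What your version buys:
\begin{itemize}
\item You never evaluate $\mu$ outside $C_{\mathbb{R}}(\mathbb{K})$. The Riesz--Markov--Kakutani representation is used only to name the functional norm $|\mu|(\mathbb{K})$, so the argument holds verbatim for any bounded linear functional.
\item Your points $x_{j}$ are the ball centres, so you avoid the small bookkeeping issue in the paper where some $\mathcal{V}_{j}$ may be empty and no $x_{j}\in\mathcal{V}_{j}$ can be chosen. That case is harmless, since the weight is then $0$, but it has to be handled.
\item Your factor $1/2$ gives the strict inequality in (\ref{eq:A.27}) cleanly.
\end{itemize}

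The paper's version needs no auxiliary continuous functions, and it makes the quadrature reading of the weights (masses of cells) transparent. In both proofs the points $x_{j}$ depend only on $\delta$. That is what later allows common sensors across the finitely many functionals $\boldsymbol{\mu}_{r}^{p}$ in Theorem~\ref{theorem:Teo-A.2}: one takes the smallest $\delta$ among them.

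Two small remarks.
\begin{itemize}
\item Because your $\delta$ is chosen for the tolerance $\epsilon/(2(|\mu|(\mathbb{K})+1))$, $L$ depends on $|\mu|(\mathbb{K})$ as well as on $\epsilon$, not on $\epsilon$ alone. The paper's rescaling has the same feature, so $L=L(\epsilon)$ is informal in both.
\item The closed support of $\phi_{j}$ lies in the closed ball, not the open one. You only use the fact that $\phi_{j}(x)>0$ forces $\Vert x-x_{j}\Vert_{\mathbb{B}}<\delta$, which is true.
\end{itemize}
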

\begin{proof}
    The proof begins by considering \(\epsilon > 0\) and \(\mu \in \mathscr{B}_{\mathbb{R}} \left( C_{\mathbb{R}}\left(\mathbb{K} \right) \right)\), chosen arbitrarily. Now, recalling that \(\mathbb{V}\) is compact, it follows from Lemma \ref{lemma:Lem-A.1} that \(\mathbb{V}\) is equicontinuous, that is, there exists \(\delta = \delta \left(\epsilon\right) > 0\) such that:
    \begin{equation}\label{eq:A.28}
        \left\Vert x - y \right\Vert_{\mathbb{B}} < \delta \, \Longrightarrow \, \left| f \left(x\right) - f \left(y\right) \right| < \epsilon, 
    \end{equation} 
    for all \(f \in \mathbb{V}\). Also, since every compact set is totally bounded, we have that \(\mathbb{K}\) is totally bounded; that is, there exist \(\mathcal{A}_{1}, \dots, \mathcal{A}_{L}\) open balls of radius \(\delta/2\) that form a cover of \(\mathbb{K}\). These allow us to construct the following disjoint Borel sets:
    \begin{equation}\label{eq:A.29}
        \mathcal{V}_{1} = \mathcal{A}_{1} \quad \mbox{and} \quad \mathcal{V}_{j} = \mathcal{A}_{j} \setminus \left(\bigcup_{k=1}^{j-1} \mathcal{A}_{k} \right),
    \end{equation}
    for \(j = 2, \dots, L\). Note that:
    \begin{equation}\label{eq:A.30}
        \operatorname{diam} \left(\mathcal{V}_j\right) \le \operatorname{diam} \left(\mathcal{A}_j\right)<\delta,
    \end{equation}
    for each \(j\) and
    \begin{equation}\label{eq:A.31}
        \mathbb{K} \subset \bigcup_{j \in \mathcal{I}_{L}} \mathcal{A}_{j} = \bigsqcup_{j \in \mathcal{I}_{L}} \mathcal{V}_{j}.
    \end{equation}
    Thus, for each \(\mathcal{V}_{j}\), we choose an \(x_{j} \in  \mathcal{V}_{j}\). Hence, for any \( f \in \mathbb{V}\),
    \begin{equation}\label{eq:A.32}
        \begin{split}
            \left| \mu \left(f\right) - \sum_{j \in \mathcal{I}_{L}} \mu \left(\mathcal{V}_{j} \cap \mathbb{K}\right) f \left(x_{j}\right) \right| &= \left| \int_{\mathbb{K}} f \, \operatorname{d} \mu - \sum_{j \in \mathcal{I}_{L}} \int_{\mathbb{K} \cap \mathcal{V}_{j}} f \left(x_{j}\right) \operatorname{d} \mu \right| \\
            &= \left| \sum_{j \in \mathcal{I}_{L}} \int_{\mathbb{K} \cap \mathcal{V}_{j}} f \, \operatorname{d} \mu -\sum_{j \in \mathcal{I}_{L}} \int_{\mathbb{K} \cap \mathcal{V}_{j}} f \left(x_{j}\right) \operatorname{d} \mu \right| \\
            &\le \sum_{j \in \mathcal{I}_{L}} \int_{\mathbb{K} \cap \mathcal{V}_{j}} \left| f \left(x\right) - f \left(x_{j}\right) \right| \operatorname{d} \left| \mu \right| \left(x\right) \\
            &\overset{(\ref{eq:A.28})}{\le} \epsilon \sum_{j \in \mathcal{I}_{L}} \int_{\mathbb{K} \cap \mathcal{V}_{j}} \operatorname{d} \left| \mu \right| \left(x\right) \\
            &= \, \epsilon \left| \mu \right| \left(\mathbb{K}\right) \\
            &< \infty.
        \end{split}
    \end{equation}
    Now, let us consider two cases. In the first case, if \(\left| \mu \right| \left(\mathbb{K}\right) = 0\), which is equivalent to stating that \(\mathbb{K}\) is \(\mu\)-null, then from (\ref{eq:A.32}) we obtain:
    \begin{equation}\label{eq:A.33}
        \mu \left(f\right) = \sum_{j \in \mathcal{I}_{N}} \mu \left(\mathcal{V}_{j} \cap \mathbb{K}\right) f \left(x_{j}\right) = 0,
    \end{equation}
    for all \(f \in \mathbb{V}\). In the second case, if \(\left| \mu \right| \left(\mathbb{K}\right) > 0\), we can redefine \(\epsilon\) as \(\epsilon/\left[ \left| \mu \right| \left(\mathbb{K}\right)\right]\) in (\ref{eq:A.28}). That is, in either case, from (\ref{eq:A.32}) we can conclude that
    \begin{equation}\label{eq:A.34}
        \left| \mu \left(f\right) - \sum_{j \in \mathcal{I}_{L}} \mu \left(\mathcal{V}_{j} \cap \mathbb{K}\right) f \left(x_{j}\right) \right| < \epsilon,
    \end{equation}
    for all \(f \in \mathbb{V}\). Therefore, it suffices to take \(\xi_{j} = \mu \left(\mathcal{V}_{j} \cap \mathbb{K}\right)\) for each \(j\), and the lemma is proven.
\end{proof}
\begin{corollary}\label{corollary:Col-A.1}
    Under the same conditions and notations of Definition \ref{definition:Def-A.1}, let \(\mathscr{F}\) now be a compact subset of \(C_{\mathbb{R}^{M}} \left(\mathbb{K}\right)\). Given \(\boldsymbol{\mu} = \left(\mu_{1}, \dots, \mu_{M}\right) \in \mathscr{B}_{\mathbb{R}} \left( C_{\mathbb{R}^{M}}\left(\mathbb{K} \right) \right)\) and \(\epsilon > 0\), there exist \(x_{jl} \in \mathbb{K}\) and scalars \(\xi_{jl} \in \mathbb{R}\), for \(j \in \mathcal{I}_{M}= \{1, \dots, M\}\) and \(l \in \mathcal{I}_{L} = \{1, \dots, L\}\), with \(M, L = L \left(\epsilon\right) \in \mathbb{N}^{\ast}\) and each \(\xi_{jl}\) depending on \(\mu_{j}\) and \(\epsilon\), such that:
    \begin{equation}\label{eq:A.35}
        \left|\boldsymbol{\mu} \left(\textbf{f}\right) - \sum_{\substack{j \in \mathcal{I}_{M} \\ l \in \mathcal{I}_{L}}} \xi_{jl} f_{j} \left(x_{jl}\right) \right| < \epsilon,
    \end{equation}
    for all \(\textbf{f} = \left(f_{1}, \dots, f_{M}\right) \in \mathscr{F}\).
\end{corollary}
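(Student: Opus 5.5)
The plan is to reduce Corollary \ref{corollary:Col-A.1} to $M$ applications of Lemma \ref{lemma:Lem-A.2}, one for each component, and then combine the component errors with the triangle inequality. By Remark \ref{remark:Obs-A.1}, in particular (\ref{eq:A.5})--(\ref{eq:A.6}), the functional splits as
\begin{equation*}
    \boldsymbol{\mu}\left(\textbf{f}\right) = \sum_{j \in \mathcal{I}_{M}} \mu_{j}\left(f_{j}\right), \qquad \mu_{j} \in \mathcal{M}_{\mathbb{R}}\left(\mathbb{K}\right) = \mathscr{B}_{\mathbb{R}}\left(C_{\mathbb{R}}\left(\mathbb{K}\right)\right).
\end{equation*}
So it suffices to approximate each scalar functional $\mu_{j}$ uniformly on a suitable compact subset of $C_{\mathbb{R}}\left(\mathbb{K}\right)$.

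First I will build the compact sets to which Lemma \ref{lemma:Lem-A.2} applies. For $j \in \mathcal{I}_{M}$, let $\pi_{j} : C_{\mathbb{R}^{M}}\left(\mathbb{K}\right) \to C_{\mathbb{R}}\left(\mathbb{K}\right)$ be the coordinate projection $\textbf{f} \mapsto f_{j}$. For any pointwise norm on $\mathbb{R}^{M}$ in the uniform-norm setting, $\left\Vert f_{j} - g_{j}\right\Vert_{\infty} \le C \left\Vert \textbf{f} - \textbf{g}\right\Vert_{\infty}$, so $\pi_{j}$ is continuous. Hence $\mathbb{V}_{j} \coloneq \pi_{j}\left(\mathscr{F}\right)$ is compact in $C_{\mathbb{R}}\left(\mathbb{K}\right)$. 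Alternatively, Remark \ref{remark:Obs-A.2} with Lemma \ref{lemma:Lem-A.1} shows directly that each $\mathbb{V}_{j}$ is uniformly bounded and equicontinuous, which is all the proof of Lemma \ref{lemma:Lem-A.2} really uses.

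Next, I apply Lemma \ref{lemma:Lem-A.2} to $\mu_{j}$ on $\mathbb{V}_{j}$ with tolerance $\epsilon/M$. This gives points $x_{j1}, \dots, x_{jL_{j}} \in \mathbb{K}$ and scalars $\xi_{jl} = \mu_{j}\left(\mathcal{V}_{l}^{(j)} \cap \mathbb{K}\right)$, depending only on $\mu_{j}$ and $\epsilon$, such that
\begin{equation*}
    \sup_{g \in \mathbb{V}_{j}} \left| \mu_{j}\left(g\right) - \sum_{l \in \mathcal{I}_{L_{j}}} \xi_{jl}\, g\left(x_{jl}\right)\right| < \frac{\epsilon}{M}.
\end{equation*}
The lemma allows a different number of points for each $j$, while the statement uses a common $L$. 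To obtain a common $L$, I take $L = \max_{j} L_{j}$ and pad with dummy nodes: for $l > L_{j}$, set $x_{jl} = x_{j1}$ and $\xi_{jl} = 0$.

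An alternative that avoids padding is to use a single $\delta = \min_{j}\delta_{j}$ from the equicontinuity of all the $\mathbb{V}_{j}$ (equivalently of $\mathscr{F}$) and one common Borel partition $\mathcal{V}_{1}, \dots, \mathcal{V}_{L}$ of $\mathbb{K}$. Then $x_{jl} = x_{l}$ and $\xi_{jl} = \mu_{j}\left(\mathcal{V}_{l}\right)$. For the tolerance one rescales by $\epsilon/\left(M \max_{j}\left(\left|\mu_{j}\right|\left(\mathbb{K}\right) + 1\right)\right)$, which handles the null-measure case treated separately in the lemma's proof.

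Finally, for every $\textbf{f} \in \mathscr{F}$ the triangle inequality gives
\begin{equation*}
    \left| \boldsymbol{\mu}\left(\textbf{f}\right) - \sum_{j,l} \xi_{jl}\, f_{j}\left(x_{jl}\right)\right| \le \sum_{j \in \mathcal{I}_{M}} \left| \mu_{j}\left(f_{j}\right) - \sum_{l} \xi_{jl}\, f_{j}\left(x_{jl}\right)\right| < M \cdot \frac{\epsilon}{M} = \epsilon,
\end{equation*}
which is (\ref{eq:A.35}). No step is a real obstacle. The only points that need care are the bookkeeping ones: checking that the components of a compact family in $C_{\mathbb{R}^{M}}\left(\mathbb{K}\right)$ form compact (or at least equicontinuous) families, and making the index set $\mathcal{I}_{L}$ uniform in $j$ while $\xi_{jl}$ still depends only on $\mu_{j}$ and $\epsilon$.
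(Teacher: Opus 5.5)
Your proposal is correct and follows the paper's proof: you split $\boldsymbol{\mu}(\textbf{f})=\sum_{j}\mu_{j}(f_{j})$, apply Lemma \ref{lemma:Lem-A.2} to each component on a compact family of component functions, and sum the errors with the triangle inequality (the paper uses tolerance $\epsilon/(2M)$ where you use $\epsilon/M$). You also justify that $\pi_{j}(\mathscr{F})$ is compact and pad to a common $L$, two bookkeeping points the paper passes over silently.
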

\begin{proof}
    Consider \( \epsilon > 0\) and \(\boldsymbol{\mu} = \left(\mu_{1}, \dots, \mu_{M}\right) \in \mathscr{B}_{\mathbb{R}} \left( C_{\mathbb{R}^{M}}\left(\mathbb{K} \right) \right)\), chosen arbitrarily. Thus, for any \(\textbf{f} = \left(f_{1}, \dots, f_{M}\right) \in \mathscr{F}\), each \(f_{j}\) belongs to some compact subset \(\mathscr{F}_{j}\) of \(C_{\mathbb{R}}\left(\mathbb{K}\right)\). Therefore, thanks to Lemma \ref{lemma:Lem-A.2}, for any \(j \in \mathcal{I}_{M}\), there exists \(L=L\left(\epsilon\right) \in \mathbb{N}^{\ast}\) for which there exist scalars \(\xi_{jl} \in  \mathbb{R}\), depending on \(\mu_{j}\) and \(\epsilon\), where \(l \in \mathcal{I}_{L}\), such that:
    \begin{equation}\label{eq:A.36}
        \left| \mu_{j} \left(f_{j}\right) - \sum_{l \in \mathcal{I}_{L}} \xi_{jl} f_{j} \left(x_{jl}\right) \right| < \frac{\epsilon}{2M}.
    \end{equation}
    Hence,
    \begin{equation}\label{eq:A.37}
        \begin{split}
            \left| \boldsymbol{\mu} \left(\textbf{f}\right) - \sum_{\substack{j \in \mathcal{I}_{M} \\ l \in \mathcal{I}_{L}}} \xi_{jl} f_{j} \left(x_{jl}\right) \right| &= \left| \int_{\mathbb{K}}\textbf{f} \cdot \boldsymbol{\mu} - \sum_{\substack{j \in \mathcal{I}_{M} \\ l \in \mathcal{I}_{L}}} \xi_{jl} f_{j} \left(x_{jl}\right) \right| \\
            &= \left| \sum_{j \in \mathcal{I}_{M}} \int_{\mathbb{K}} f_{j} \, \operatorname{d} \mu_{i} - \sum_{\substack{j \in \mathcal{I}_{M} \\ l \in \mathcal{I}_{L}}} \xi_{jl} f_{j} \left(x_{jl}\right) \right| \\
            &\le \sum_{j \in \mathcal{I}_{M}} \left| \int_{\mathbb{K}} f_{j} \, \operatorname{d} \mu_{j} - \sum_{l \in \mathcal{I}_{L}} \xi_{jl} f_{j} \left(x_{jl}\right) \right| \\
            &\overset{(\ref{eq:A.36})}{<} \epsilon
        \end{split}
    \end{equation}
    for any \(\textbf{f} = \left(f_{1}, \dots, f_{M}\right) \in \mathscr{F}\).
\end{proof}

Finally, we will contextualize Theorem \ref{theorem:Teo-A.1} for the case where \(\mathscr{X} = C_{\mathbb{R}^{M}} \left(\mathbb{K}\right)\) is endowed with the uniform norm.
\begin{theorem}\label{theorem:Teo-A.2}
    Suppose that \(\sigma : \mathbb{R} \to \mathbb{R}\) is a Tauber-Wiener activation function. Let \(\mathscr{V} \subset C_{\mathbb{R}^{M}} \left(\mathbb{K}\right)\) and \(\mathcal{K} \subset D\) be compact subsets, and let \(\boldsymbol{\mathcal{G}} : \mathscr{V} \to C_{\mathbb{R}^{M}} \left(\mathcal{K}\right)\) be continuous. Then, for any \(\epsilon > 0\), there exist \(P \in \mathbb{N}^{\ast}\), functions \(\tau_{p} : \mathcal{K} \to \mathbb{R}\) of the form \( \tau_{p} \left(\textbf{y}\right) = \sigma \left(\textbf{w}_{p} \cdot \textbf{y} + \zeta_{p}\right)\), \(p \in \mathcal{I}_{P} = \{1, \dots, P\}\), and a function \(B : C_{\mathbb{R}^{M}} \left(\mathbb{K}\right) \to \mathbb{R}^{M \times P}\) of the form
    \begin{equation}\label{eq:A.38}
        b_{jp} \left(\textbf{f}\right) = \sum_{r \in \mathcal{I}_{R}} a^{(p)}_{jr} \sigma \left(\sum_{\substack{\nu \in \mathcal{I}_{M} \\ l \in \mathcal{I}_{L}}} \xi_{\nu l}^{pr}f_{\nu} \left(x_{\nu l}\right) + \vartheta_{r}^{p}\right),
    \end{equation}
    for all \(\textbf{f} = \left(f_{1}, \dots, f_{M}\right) \in C_{\mathbb{R}^{M}} \left(\mathbb{K}\right)\), where \(B \left(\textbf{f}\right) = \left[b_{jp}\left(\textbf{f}\right)\right] \in \mathbb{R}^{M \times P}\), such that:
    \begin{equation}\label{eq:A.39}
        \sup_{\textbf{f} \in \mathscr{V}} \left\lbrace \sup_{\textbf{y} \in \mathcal{K}} \left\Vert \boldsymbol{\mathcal{G}} \left(\textbf{f}\right) \left(\textbf{y}\right) - B \left(\textbf{f}\right) \boldsymbol{\tau} \left(\textbf{y}\right) \right\Vert_{2,M} \right\rbrace < \epsilon,
    \end{equation}
    with \(\boldsymbol{\tau} = \left(\tau_{1}, \dots, \tau_{P}\right)\).
\end{theorem}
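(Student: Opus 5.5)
The plan is to obtain Theorem \ref{theorem:Teo-A.2} as a specialization of Ismailov's Theorem \ref{theorem:Teo-A.1}, with the abstract continuous linear functionals in the branch replaced by point evaluations through Corollary \ref{corollary:Col-A.1}.

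First, take \(\mathscr{X} = C_{\mathbb{R}^{M}}(\mathbb{K})\) with the uniform norm. This is a Banach space, since \(\mathbb{K}\) is compact and uniform limits of continuous functions are continuous. Then \(\mathscr{V}\subset\mathscr{X}\) is compact, and \(\mathcal{K}\subset D\subset\mathbb{R}^{n+1}\) is compact. Applying Theorem \ref{theorem:Teo-A.1} with tolerance \(\epsilon/2\) gives the following data:
\begin{itemize}
\item an integer \(P\);
\item trunk functions \(\tau_p(\mathbf{y})=\sigma(\mathbf{w}_p\cdot\mathbf{y}+\zeta_p)\);
\item branch entries of the form (\ref{eq:A.18}), namely \(\tilde b_{jp}(\mathbf{f})=\sum_r a^{(p)}_{jr}\sigma(\varphi^p_r(\mathbf{f})+\vartheta^p_r)\) with \(\varphi^p_r\in\mathscr{B}_{\mathbb{R}}(C_{\mathbb{R}^M}(\mathbb{K}))\).
\end{itemize}
These satisfy \(\sup_{\mathbf{f}\in\mathscr{V}}\sup_{\mathbf{y}}\|\boldsymbol{\mathcal{G}}(\mathbf{f})(\mathbf{y})-\tilde B(\mathbf{f})\boldsymbol{\tau}(\mathbf{y})\|_{2,M}<\epsilon/2\). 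By Remark \ref{remark:Obs-A.1}, each \(\varphi^p_r\) is a vector of signed regular Borel measures \(\boldsymbol{\mu}^{pr}=(\mu^{pr}_1,\dots,\mu^{pr}_M)\), acting as in (\ref{eq:A.6}).

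Second, replace each \(\varphi^p_r\) by a finite sensor sum. Only finitely many pairs \((p,r)\) occur. Apply Corollary \ref{corollary:Col-A.1} to each \(\boldsymbol{\mu}^{pr}\) on the compact set \(\mathscr{F}=\mathscr{V}\) with a tolerance \(\eta>0\) fixed below. This yields points \(x^{pr}_{\nu l}\) and weights \(\xi^{pr}_{\nu l}\) with \(|\varphi^p_r(\mathbf{f})-\sum_{\nu,l}\xi^{pr}_{\nu l}f_\nu(x^{pr}_{\nu l})|<\eta\) uniformly on \(\mathscr{V}\).

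To obtain sensor points \(x_{\nu l}\) shared by all \((p,r)\), as (\ref{eq:A.38}) requires, I would take the union of all the point sets. Any coefficient for a point not used by a given \((p,r)\) is set to zero. Alternatively, one can note that in the proof of Lemma \ref{lemma:Lem-A.2} the points come from a \(\delta\)-cover of \(\mathbb{K}\) that can be chosen independently of \(\mu\), using the minimum \(\delta\). This defines \(b_{jp}\) exactly as in (\ref{eq:A.38}).

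Third, control the error introduced by this replacement, which I expect to be the main obstacle. The issue is that \(\sigma\) is assumed only Tauber–Wiener and continuous, not Lipschitz. I would argue as follows.
\begin{itemize}
\item The set \(\{\varphi^p_r(\mathbf{f})+\vartheta^p_r:\mathbf{f}\in\mathscr{V}\}\) is compact, as the continuous image of a compact set, so it lies in some interval \([-R_0,R_0]\).
\item For \(\eta\le 1\), every perturbed argument lies in \([-R_0-1,R_0+1]\), on which \(\sigma\) is uniformly continuous.
\item Hence, given \(\gamma>0\), there is \(\eta\) with \(|\sigma(s)-\sigma(s')|<\gamma\) whenever \(|s-s'|<\eta\) on that interval.
\end{itemize}
This gives \(|b_{jp}(\mathbf{f})-\tilde b_{jp}(\mathbf{f})|\le\gamma\sum_r|a^{(p)}_{jr}|\) for all \(\mathbf{f}\in\mathscr{V}\). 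The trunk functions \(\tau_p\) are continuous on the compact set \(\mathcal{K}\), so \(T:=\max_p\sup_{\mathcal{K}}|\tau_p|<\infty\). Therefore
\[
\|(B(\mathbf{f})-\tilde B(\mathbf{f}))\boldsymbol{\tau}(\mathbf{y})\|_{2,M}\le \sqrt{M}\,P\,T\,\gamma\max_{j,p}\sum_r|a^{(p)}_{jr}|.
\]
Choosing \(\gamma\) so that the right-hand side is below \(\epsilon/2\) fixes \(\eta\).

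Finally, the triangle inequality combines the two halves and gives (\ref{eq:A.39}).
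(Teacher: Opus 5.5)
Your proposal is correct and follows essentially the same route as the paper: apply Ismailov's theorem with tolerance $\epsilon/2$, replace each continuous functional in the branch by a finite point-evaluation sum via Corollary~\ref{corollary:Col-A.1}, bound the resulting branch perturbation using uniform continuity of $\sigma$ on a compact interval together with $\max_p \sup_{\mathcal{K}} |\tau_p|$, and conclude with the triangle inequality. Your argument also makes explicit two details the paper leaves implicit: building a common sensor set by taking the union of points with zero coefficients, and showing that the arguments of $\sigma$ stay in a fixed compact interval.
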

\begin{proof}
    Consider \(\epsilon > 0\) chosen arbitrarily. Theorem \ref{theorem:Teo-A.1} guarantees the existence of \(P \in \mathbb{N}^{\ast}\), functions \(\tau_{p} : \mathcal{K} \to \mathbb{R}\) of the form \(\tau_{p} \left(\textbf{y}\right) = \sigma \left(\textbf{w}_{p} \cdot \textbf{y} + \zeta_{p}\right)\), \(p \in \mathcal{I}_{P}\), and a function \( Q : C_{\mathbb{R}^{M}} \left(\mathbb{K}\right) \to \mathbb{R}^{M \times P}\) whose columns are topological neural networks in \(C_{\mathbb{R}^{M}} \left(\mathbb{K}\right)\) constructed as in (\ref{eq:A.16}), such that:
    \begin{equation}\label{eq:A.40}
        \sup_{\textbf{f} \in \mathscr{V}} \left\lbrace \sup_{\textbf{y} \in \mathcal{K}} \left\Vert \boldsymbol{\mathcal{G}} \left(\textbf{f}\right) \left(\textbf{y}\right) - Q \left(\textbf{f}\right) \boldsymbol{\tau} \left(\textbf{y}\right) \right\Vert_{2,M} \right\rbrace < \frac{\epsilon}{2},
    \end{equation}
    where \(\boldsymbol{\tau} = \left(\tau_{1}, \dots, \tau_{P}\right)\). Furthermore, according to (\ref{eq:A.18}), each entry of \(Q \left(\textbf{f}\right)\) is of the form
    \begin{equation}\label{eq:A.41}
        q_{jp} \left(\textbf{f}\right) = \sum_{r \in \mathcal{I}_{R}} a^{(p)}_{jr} \sigma \left(\boldsymbol{\mu}_{r}^{p} \left(\textbf{f}\right) + \vartheta_{r}^{p}\right),
    \end{equation}
    where each \(\boldsymbol{\mu}_{r}^{p}\) belongs to \(\mathscr{B}_{\mathbb{R}} \left( C_{\mathbb{R}^{M}}\left(\mathbb{K} \right) \right)\). On the other hand, thanks to Corollary \ref{corollary:Col-A.1}, given any \(\delta > 0\), we have the following approximation:
    \begin{equation}\label{eq:A.42}
        \left| \boldsymbol{\mu}_{r}^{p} \left(\textbf{f}\right) - \sum_{\substack{\nu \in \mathcal{I}_{M} \\ l \in \mathcal{I}_{L}}} \xi_{\nu l}^{pr} f_{\nu} \left(x_{\nu l}\right)\right| < \delta,
    \end{equation}
    for all \(\textbf{f} = \left(f_{1}, \dots, f_{M}\right) \in \mathscr{V}\). Then, since \(\sigma\) is uniformly continuous on compact subsets of \(\mathbb{R}\), we can choose \(0 < \delta = \delta_{0} = \delta_{0} \left(\epsilon\right) \ll 1\), in (\ref{eq:A.42}), such that
    \begin{equation}\label{eq:A.43}
        \sup_{\textbf{f} \in \mathscr{V}} \left|\sum_{r \in \mathcal{I}_{R}} a^{(p)}_{jp} \sigma \left(\boldsymbol{\mu}_{r}^{p} \left(\textbf{f}\right) + \vartheta_{r}^{p}\right) - \sum_{r \in \mathcal{I}_{R}} a^{(p)}_{jr} \sigma \left(\sum_{\substack{\nu \in \mathcal{I}_{M} \\  l \in \mathcal{I}_{L}}} \xi_{\nu l}^{pr} f_{\nu} \left(x_{\nu l}\right) + \vartheta_{r}^{p}\right)\right| < \frac{\epsilon}{2PM\tilde{S}},
    \end{equation} 
    where
    \begin{equation}\label{eq:A.44}
        \tilde{S} = \max_{p \in \mathcal{I}_{P}} \left\lbrace \sup_{\textbf{y} \in \mathcal{K}} \left|\tau_{p} \left(\textbf{y}\right)\right|\right\rbrace.
    \end{equation}
    Finally, if \(B : C_{\mathbb{R}^{M}} \left(\mathbb{K}\right) \to \mathbb{R}^{M \times P}\) is defined by
    \begin{equation}\label{eq:A.45}
        b_{jp} \left(\textbf{f}\right) = \sum_{r \in \mathcal{I}_{R}} a^{(p)}_{jr} \sigma \left(\sum_{\substack{\nu \in \mathcal{I}_{M} \\ l \in \mathcal{I}_{L}}} \xi_{\nu l}^{pr} f_{\nu} \left(x_{\nu l}\right) + \vartheta_{r}^{p}\right),
    \end{equation}
    for all \(\textbf{f} = \left(f_{1}, \dots, f_{M}\right) \in C_{\mathbb{R}^{M}} \left(\mathbb{K}\right)\), where \(B \left(\textbf{f}\right) = \left[b_{jp}\left(\textbf{f}\right)\right] \in \mathbb{R}^{M \times P}\), we have
    \begin{equation}\label{eq:A.46}
        \begin{split}
            \left\Vert \boldsymbol{\mathcal{G}}\left(\textbf{f}\right) \left(\textbf{y}\right) - B\left(\textbf{f}\right)\boldsymbol{\tau}\left(\textbf{y}\right) \right\Vert_{2,M} &\le \left\Vert \boldsymbol{\mathcal{G}}\left(\textbf{f}\right)\left(\textbf{y}\right)-Q\left(\textbf{f}\right)\boldsymbol{\tau}\left(\textbf{y}\right)\right\Vert_{2,M} \\
            &+ \left\Vert Q\left(\textbf{f}\right)\boldsymbol{\tau}\left(\textbf{y}\right)-B\left(\textbf{f}\right)\boldsymbol{\tau}\left(\textbf{y}\right)\right\Vert_{2,M} \\
            &\overset{(\ref{eq:A.40})}{<} \frac{\epsilon}{2} + \left\Vert Q\left(\textbf{f}\right)\boldsymbol{\tau}\left(\textbf{y}\right)-B\left(\textbf{f}\right)\boldsymbol{\tau}\left(\textbf{y}\right) \right\Vert_{2,M} \\
            &\overset{(\ref{eq:A.43})}{<} \frac{\epsilon}{2} + \frac{\epsilon}{2} \\
            &= \epsilon,
        \end{split}
    \end{equation}
    for all \(\textbf{f} \in \mathscr{V}\) and \(\textbf{y} \in \mathcal{K}\). Thus, the theorem is proven.
\end{proof}
\begin{corollary}\label{corollary:Col-A.2}
    Under the same conditions of Theorem \ref{theorem:Teo-A.2}, we write
    \begin{equation}\label{eq:A.47}
        \boldsymbol{\mathcal{G}} = \left(\mathcal{G}_{1}, \dots, \mathcal{G}_{M}\right)
    \end{equation}
    that is, \(\mathcal{G}_{j} : \mathscr{V} \to C_{\mathbb{R}}\left(\mathcal{K}\right)\) is continuous for each \(j \in \mathcal{I}_{M} = \{1, \dots, M\}\). Then, for any \(\epsilon > 0\), there exist \(P \in \mathbb{N}^{\ast}\), functions \(\tau_{p} : \mathcal{K} \to \mathbb{R}\) of the form \(\tau_{p} \left(\textbf{y}\right) = \sigma \left(\textbf{w}_{p} \cdot \textbf{y} + \zeta_{p}\right)\), \(p \in \mathcal{I}_{P} = \{1, \dots, P\}\), and functions \(\boldsymbol{\hat{\beta}}_{j} : C_{\mathbb{R}^{M}} \left(\mathbb{K}\right) \to \mathbb{R}^{P}\) of the form
    \begin{equation}\label{eq:A.48}
        \begin{split}
            &\boldsymbol{\hat{\beta}}_{j} \left(\textbf{f}\right) = \left( \sum_{r \in \mathcal{I}_{R}} a^{(1)}_{jr} \sigma \left(\sum_{\substack{\nu \in \mathcal{I}_{M} \\ l \in \mathcal{I}_{L}}} \xi_{\nu l}^{1r} f_{\nu} \left(x_{\nu l}\right) + \vartheta_{r}^{1}\right), \dots \right. \\
            &\left. \hspace{5.0cm} \dots, \sum_{r \in \mathcal{I}_{R}} a^{(p)}_{jr} \sigma \left(\sum_{\substack{\nu \in \mathcal{I}_{M} \\ l \in \mathcal{I}_{L}}} \xi_{\nu l}^{Pr} f_{\nu} \left(x_{\nu l}\right) + \vartheta_{r}^{P}\right)\right),
        \end{split}
    \end{equation}
    for \(\textbf{f} = \left(f_{1}, \dots, f_{M}\right) \in C_{\mathbb{R}^{M}} \left(\mathbb{K}\right)\), such that for each \(j\) we have
    \begin{equation}\label{eq:A.49}
        \left|\boldsymbol{\hat{\beta}}_{j} \left(\textbf{f}\right) \cdot \boldsymbol{\tau} \left(\textbf{y}\right) - \mathcal{G}_{j} \left(\textbf{f}\right) \left(\textbf{y}\right)\right| < \epsilon,
    \end{equation}
    for all \(\textbf{f} \in \mathscr{V}\), \(\textbf{y} \in \mathcal{K}\), and where \(\boldsymbol{\tau} = \left(\tau_{1}, \dots, \tau_{P}\right)\).
\end{corollary}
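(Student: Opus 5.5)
This corollary is essentially a componentwise rereading of Theorem \ref{theorem:Teo-A.2}, so I would derive it directly from that theorem rather than from Theorem \ref{theorem:Teo-A.1}.

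\textbf{Step 1 (apply the vector-valued theorem).} Fix $\epsilon>0$. Since each $\mathcal{G}_{j}$ is a coordinate of $\boldsymbol{\mathcal{G}}$, and $\boldsymbol{\mathcal{G}}$ is continuous from $\mathscr{V}$ into $C_{\mathbb{R}^{M}}(\mathcal{K})$ with the uniform norm, each $\mathcal{G}_{j}$ is continuous into $C_{\mathbb{R}}(\mathcal{K})$. This is because $\sup_{\mathbf{y}}|g_{j}(\mathbf{y})|\le \sup_{\mathbf{y}}\Vert \mathbf{g}(\mathbf{y})\Vert_{2,M}$. Apply Theorem \ref{theorem:Teo-A.2} with this same $\epsilon$. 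It yields $P$, trunk functions $\tau_{p}(\mathbf{y})=\sigma(\mathbf{w}_{p}\cdot\mathbf{y}+\zeta_{p})$, and a branch matrix $B(\mathbf{f})=[b_{jp}(\mathbf{f})]$ with entries of the form (\ref{eq:A.38}), such that (\ref{eq:A.39}) holds uniformly on $\mathscr{V}\times\mathcal{K}$.

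\textbf{Step 2 (extract the $j$-th row).} Define $\boldsymbol{\hat{\beta}}_{j}(\mathbf{f}):=(b_{j1}(\mathbf{f}),\dots,b_{jP}(\mathbf{f}))$, the $j$-th row of $B(\mathbf{f})$. By (\ref{eq:A.38}), its $p$-th entry is exactly the expression displayed in (\ref{eq:A.48}), with the coefficients $a^{(p)}_{jr}$, $\xi^{pr}_{\nu l}$ and $\vartheta^{p}_{r}$ inherited from the theorem. The $j$-th component of $B(\mathbf{f})\boldsymbol{\tau}(\mathbf{y})$ is $\sum_{p} b_{jp}(\mathbf{f})\tau_{p}(\mathbf{y})=\boldsymbol{\hat{\beta}}_{j}(\mathbf{f})\cdot\boldsymbol{\tau}(\mathbf{y})$.

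\textbf{Step 3 (pass from the vector norm to a component).} The absolute value of any coordinate is bounded by the Euclidean norm, $|z_{j}|\le\Vert\mathbf{z}\Vert_{2,M}$. Hence
\begin{equation*}
\left|\boldsymbol{\hat{\beta}}_{j}(\mathbf{f})\cdot\boldsymbol{\tau}(\mathbf{y})-\mathcal{G}_{j}(\mathbf{f})(\mathbf{y})\right|\le\left\Vert B(\mathbf{f})\boldsymbol{\tau}(\mathbf{y})-\boldsymbol{\mathcal{G}}(\mathbf{f})(\mathbf{y})\right\Vert_{2,M}<\epsilon
\end{equation*}
for all $\mathbf{f}\in\mathscr{V}$, $\mathbf{y}\in\mathcal{K}$ and every $j$. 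Because the trunk $\boldsymbol{\tau}$ and the sensor points $x_{\nu l}$ are shared by all $j$, a single choice of $P$, $\boldsymbol{\tau}$ and sensors works simultaneously for every $j\in\mathcal{I}_{M}$.

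\textbf{Main obstacle.} There is no real analytic obstacle. The point requiring care is bookkeeping: checking that the sensor locations $x_{\nu l}$ and the common $L$ can be taken independent of $p$ and $r$. If Corollary \ref{corollary:Col-A.1} produced different point sets for different functionals $\boldsymbol{\mu}^{p}_{r}$, I would take the union of all these finitely many sets and pad the coefficients $\xi$ with zeros, so that (\ref{eq:A.48}) holds with one common index set.
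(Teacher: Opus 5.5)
Your proposal is correct and follows the paper's proof exactly: the paper observes that $\boldsymbol{\hat{\beta}}_{j}(\mathbf{f})$ is the $j$-th row of the matrix $B(\mathbf{f})$ furnished by Theorem~\ref{theorem:Teo-A.2}, and it leaves implicit the coordinate bound $|z_{j}|\le\Vert\mathbf{z}\Vert_{2,M}$ that you spell out in Step 3. Your remark about common sensor points is a fair observation about the proof of Theorem~\ref{theorem:Teo-A.2}, where the points produced by Lemma~\ref{lemma:Lem-A.2} may depend on the functional $\boldsymbol{\mu}_{r}^{p}$; it is not needed for the corollary itself, because the statement of Theorem~\ref{theorem:Teo-A.2} already asserts the form (\ref{eq:A.38}) with shared points $x_{\nu l}$.
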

\begin{proof}
    It is a straightforward reformulation of Theorem \ref{theorem:Teo-A.2}; indeed, \(\boldsymbol{\hat{\beta}}_{j}\left(\textbf{f}\right)\) is the \(j\)-th row of the matrix \(B \left(\textbf{f}\right) \in \mathbb{R}^{M \times P}\) from the conclusion of that theorem.
\end{proof}

Finally, the proof for the case of deep architectures is derived from the function composition property. Since each hidden layer refines the representation space, the approximation result established for shallow networks (Theorem \ref{theorem:Teo-A.2} and Corollary \ref{corollary:Col-A.2}) is extended and amplified, thereby ensuring the effectiveness of more complex architectures.
\section{Universal approximation lemma via Stone-Weierstrass}\label{sec:Ap-B}
In this appendix, we will prove that the functions defined in (\ref{eq:31}) are effectively universal approximators. To achieve this, it is necessary to state and prove the following lemma.
\begin{lemma}\label{lemma:Lem-B.1}
    Let \(\mbox{OBS} \coloneq \{ \hat{O}^{\nu} \in \mathbb{O}^{\nu} \}_{\nu \in \{\mathcal{B},\mathcal{T}\}}\) be the set of Hermitian quantum observables acting on the Hilbert spaces \(\mathcal{H}_{\mathcal{Q}}^{\nu \in \{\mathcal{B},\mathcal{T}\}}\). Let \(\mathcal{K} \subset D\) be a compact set. Let \(\mathbb{V} \subset C_{\mathbb{R}} \left( \mathbb{K} \right)\) be a compact subset as in theorem \ref{theorem:Teo-3.1}, such that \(\mathbb{V}^{M}\) is a compact subset of \(C_{\mathbb{R}^{M}}^{k} \left( \mathbb{K} \right)\). Let \(\sigma : \mathbb{R} \to \mathbb{R}\) be a bijective Tauber-Wiener activation function, and let \(\mbox{MAC}\) be the set formed by all functions defined by (\ref{eq:31}), whose domain is the compact set \(\mathcal{K}\). Then, \(\mbox{MAC}\) satisfies the following properties:
    \begin{enumerate}
        \item{\(\mbox{MAC}\) is closed under scalar multiplication, addition, and function multiplication.}
        \item{\(\mbox{MAC}\) separates points on \(\mathcal{K}\).}
        \item{\(\mbox{MAC}\) contains all constant functions.}
    \end{enumerate}
    Consequently, by the Stone-Weierstrass theorem, \(\mbox{MAC}\) is dense in \(L_{\mathbb{R}}^{\infty} \left(\mathcal{K}\right)\). Therefore, for any function \(\mathfrak{g} \in L_{\mathbb{R}}^{\infty} \left(\mathcal{K}\right)\) and \(\epsilon_{j} > 0\), \( \boldsymbol{\mathfrak{v}} \in \mathbb{V}^{M} \), \( \boldsymbol{\tilde{\mathfrak{v}}}_{\flat} \in \mathbb{R}^{m M \times 1} \), \( \mathbf{y} \in \mathcal{K} \), there exist \(\hat{O}^{\nu} \in \mbox{OBS}\), \(\boldsymbol{\mathcal{W}}_{\flat}^{\mathcal{B} \ast} \), \( \boldsymbol{\mathcal{W}}^{\mathcal{T} \ast} \), and \( \boldsymbol{\Theta}_{p}^{\nu \ast \in \{\mathcal{B},\mathcal{T}\}} \), such that:
    \begin{equation}\label{eq:B.50}
        \left\Vert \mathfrak{G}_{j} \left( \boldsymbol{\tilde{\mathfrak{v}}}_{\flat}, \hat{O}^{\mathcal{B}}, \boldsymbol{\mathcal{W}}_{\flat}^{\mathcal{B} \ast}, \boldsymbol{\Theta}^{\mathcal{B} \ast}_{p} \right) \left( \cdot, \hat{O}^{\mathcal{T}}, \boldsymbol{\mathcal{W}}^{\mathcal{T} \ast}, \boldsymbol{\Theta}^{\mathcal{T} \ast}_{p} \right) - \mathfrak{g} \left( \cdot \right) \right\Vert_{\infty} \leq \epsilon_{j}.
    \end{equation}
\end{lemma}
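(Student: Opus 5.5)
The plan is to verify the three hypotheses of the Stone--Weierstrass theorem for $\mbox{MAC}$, regarded as a family of real functions of $\mathbf{y}\in\mathcal{K}$ with the branch input $\boldsymbol{\tilde{\mathfrak{v}}}_{\flat}$ held fixed. On $\mathcal{K}$ the branch factors $\mathfrak{B}_{jp'}$ and the additive term $\mathfrak{b}_{j}$ in (\ref{eq:31}) are then real constants. So every element of $\mbox{MAC}$ has the form
\[
\mathbf{y}\mapsto \sum_{p'} c_{p'}\,\big[\tau_{p'}(\mathbf{y})+1\big]\,e^{\mathcal{T}}_{p'} + c_{0},
\]
where $e^{\mathcal{T}}_{p'}=\mathcal{P}_{m^{\mathcal{T}}n^{\mathcal{T}}}\bra{\Psi_{m^{\mathcal{T}}}}\hat{O}^{\mathcal{T}}\ket{\Psi_{n^{\mathcal{T}}}}$ are real expectation values (real because $\hat{O}^{\mathcal{T}}$ is Hermitian). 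The conclusion can only hold in the space $C_{\mathbb{R}}(\mathcal{K})$ with the supremum norm. Every element of $\mbox{MAC}$ is continuous, and uniform limits of continuous functions are continuous, so the density statement must be read as density in $C_{\mathbb{R}}(\mathcal{K})$, with $\mathfrak{g}$ continuous, rather than in $L^{\infty}_{\mathbb{R}}(\mathcal{K})$. I will prove it in that form.

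\emph{Constants and scalar multiplication.} First I set all trunk weights to zero, so that $\tau_{p'}\equiv\sigma(b)$ is constant. Then $\mathfrak{G}_{j}$ is a constant $\kappa$, and it can be made nonzero by choosing $\Theta$ with a nonzero expectation value. Next I replace the observables by rescaled ones, $\hat{O}^{\mathcal{T}}\mapsto\lambda\hat{O}^{\mathcal{T}}$ with $\lambda\in\mathbb{R}$, which are still Hermitian. The product term and $\mathfrak{b}_{j}$ are both linear in $\hat{O}^{\mathcal{T}}$, so this multiplies the whole of $\mathfrak{G}_{j}$ by $\lambda$. This gives closure under scalar multiplication, and every constant $\lambda\kappa$ is attained, including zero.

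\emph{Addition and multiplication.} I will treat $\mbox{MAC}$ as the union over all node counts $P$ and all classical widths. Addition is then handled by concatenating the node families of two MAC functions, since the index $p'$ in (\ref{eq:31}) is summed. The two additive terms $\mathfrak{b}^{(1)}_{j}+\mathfrak{b}^{(2)}_{j}$ combine into a single constant, which I absorb by adding one extra constant node as in the previous paragraph. Multiplication is the step I expect to be the main obstacle. Expanding the product yields cross terms of the form $[\tau_{p_1}+1][\tau_{p_2}+1]\,e_{p_1}e_{p_2}$, together with lower-order terms that the addition step already covers. I would handle these cross terms in two parts:
\begin{itemize}
\item the quantum part: the product $e_{p_1}e_{p_2}$ is realized as a single expectation value by running the two registers in parallel and measuring $\hat{O}_{1}\otimes\hat{O}_{2}$, which is Hermitian;
\item the classical part: $\tau_{p_1}\tau_{p_2}+\tau_{p_1}+\tau_{p_2}$ is realized inside a deeper trunk. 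Because the trunk is universal (Appendix \ref{sec:Ap-A}), a new trunk output $\tilde{\tau}$ can approximate it uniformly on $\mathcal{K}$.
\end{itemize}
If exact closure fails at this point, I would fall back on a weaker route. Let $\mathcal{A}$ be the closure of the algebra generated by $\mbox{MAC}$. The previous steps show that $\mathcal{A}$ is a uniformly closed subalgebra, and the above shows that each product is a uniform limit of MAC elements, so $\mathcal{A}$ lies in the uniform closure $\overline{\mbox{MAC}}$. Density of $\mathcal{A}$ then gives density of $\mbox{MAC}$.

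\emph{Separation of points and conclusion.} Take $\mathbf{y}_{1}\neq\mathbf{y}_{2}$ in $\mathcal{K}$. Choose $\mathbf{w}$ with $\mathbf{w}\cdot\mathbf{y}_{1}\neq\mathbf{w}\cdot\mathbf{y}_{2}$. Since $\sigma$ is injective, $\tau(\mathbf{y})=\sigma(\mathbf{w}\cdot\mathbf{y}+b)$ takes different values at the two points. Choosing $\Theta$ with nonzero expectation value and a nonzero branch constant, the function $c[\tau+1]e+c_{0}$ therefore also differs at $\mathbf{y}_{1}$ and $\mathbf{y}_{2}$. The set $\mathcal{K}$ is compact Hausdorff, so Stone--Weierstrass applies and gives density in $C_{\mathbb{R}}(\mathcal{K})$. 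Hence for each continuous $\mathfrak{g}$ and each $\epsilon_{j}>0$ there are parameters and observables satisfying (\ref{eq:B.50}).
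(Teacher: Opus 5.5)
Your argument is correct, and you read the conclusion correctly. Stone--Weierstrass gives density in $C_{\mathbb{R}}(\mathcal{K})$ only; the paper's density claim in $L^{\infty}_{\mathbb{R}}(\mathcal{K})$ does not follow from it and is false in general. Nothing is lost, since the later applications only involve continuous targets.

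Your skeleton matches the paper's: verify the Stone--Weierstrass hypotheses, and get scalars and constants by rescaling an observable and zeroing weights. The product step, however, differs in substance.
\begin{itemize}
\item The paper asserts that $\mbox{MAC}$ is \emph{exactly} closed under multiplication, by relabelling the expanded product with a composite index $p''=(p',q')$ and zero padding. That relabelling does not turn $[\tau_{p'}+1][\tau_{q'}+1]$ into $[\tilde{\tau}+1]$ for an actual trunk output $\tilde{\tau}=\sigma(\cdots)$, so exact closure is claimed rather than shown.
\item You prove only what is true and sufficient: $\overline{\mbox{MAC}}$ is a subalgebra, via $f_n g_n \to fg$ uniformly on compact $\mathcal{K}$. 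The price is importing uniform approximability by trunk outputs. The paper's route, had it worked, would have been purely algebraic.
\item Your addition step (concatenated node families plus a constant node) repairs the paper's version, which only adds summands sharing the same $\mathfrak{T}_{p'}$.
\item Your explicit separating element repairs the paper's separation step, which asserts that every element of $\mbox{MAC}$ separates points; this is false for constants.
\end{itemize}

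Two small fixes are needed.
\begin{itemize}
\item The trunk universality you need does not come from Appendix~\ref{sec:Ap-A}. It is the classical theorem applied to $\sigma^{-1}\circ h$ in the last pre-activation, and this is where continuity and bijectivity of $\sigma$ enter.
\item For trunks deeper than one layer, route the separating output through injective hidden layers, i.e.\ hidden weights of full column rank.
\end{itemize}

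Finally, the ingredient you import already proves the lemma outright. Fix $\boldsymbol{\tilde{\mathfrak{v}}}_{\flat}$; since $\sigma$ is onto $\mathbb{R}$, each $[\hat{\beta}_{jp'}+1]$ can be any real number. Given nonzero expectation values, $\mbox{MAC}$ is then just the span of the trunk outputs plus constants, so your tensor-product register is unnecessary. Its density is the classical universal approximation theorem, which makes the Stone--Weierstrass scaffolding, yours and the paper's, optional.
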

\begin{proof}
    The proof of lemma \ref{lemma:Lem-B.1}, although not difficult, will be divided into five parts. In the first part, we verify that \(\mbox{MAC}\) is closed under multiplication by any scalar \(a \in \mathbb{R}\), which emerges directly from definitions (\ref{eq:31})--(\ref{eq:34}) by considering \(\hat{O}^{\prime \nu} \coloneq a \hat{O}^{\nu} \in \mbox{OBS}\), allowing us to obtain the expressions
    \begin{equation}\label{eq:B.51}
        \mathfrak{B}_{jp^{\prime}}^{\prime} \left( \cdot, \hat{O}^{\prime\mathcal{B}}, \boldsymbol{\mathcal{W}}_{\flat}^{\mathcal{B}}, \boldsymbol{\Theta}^{\mathcal{B}}_{p} \right) =        a \mathfrak{B}_{jp^{\prime}} \left( \cdot, \hat{O}^{\mathcal{B}}, \boldsymbol{\mathcal{W}}_{\flat}^{\mathcal{B}}, \boldsymbol{\Theta}^{\mathcal{B}}_{p} \right),
    \end{equation}
    \begin{equation}\label{eq:B.52}
        \mathfrak{T}_{p^{\prime}}^{\prime \prime} \left( \cdot, \hat{O}^{\prime \mathcal{T}}, \boldsymbol{\mathcal{W}}^{\mathcal{T}}, \boldsymbol{\Theta}^{\mathcal{T}}_{p} \right) =        a \mathfrak{T}_{p^{\prime}}\left( \cdot, \hat{O}^{\mathcal{T}}, \boldsymbol{\mathcal{W}}^{\mathcal{T}}, \boldsymbol{\Theta}^{\mathcal{T}}_{p} \right),
    \end{equation}
    \begin{equation}\label{eq:B.53}
        \mathfrak{b}_{j}^{\prime} \left( \hat{O}^{\prime \mathcal{B}}, \hat{O}^{\mathcal{T}}, \boldsymbol{\Theta}^{\mathcal{B}}_{p}, \boldsymbol{\Theta}^{\mathcal{T}}_{p} \right) =        a \mathfrak{b}_{j} \left( \hat{O}^{\mathcal{B}}, \hat{O}^{\mathcal{T}}, \boldsymbol{\Theta}^{\mathcal{B}}_{p}, \boldsymbol{\Theta}^{\mathcal{T}}_{p} \right),
    \end{equation}
    \begin{equation}\label{eq:B.54}
        \mathfrak{b}_{j}^{\prime \prime} \left( \hat{O}^{\mathcal{B}}, \hat{O}^{\prime \mathcal{T}}, \boldsymbol{\Theta}^{\mathcal{B}}_{p}, \boldsymbol{\Theta}^{\mathcal{T}}_{p} \right) =        a \mathfrak{b}_{j} \left( \hat{O}^{\mathcal{B}}, \hat{O}^{\mathcal{T}}, \boldsymbol{\Theta}^{\mathcal{B}}_{p}, \boldsymbol{\Theta}^{\mathcal{T}}_{p} \right).
    \end{equation}
    Therefore, it suffices to write
    \begin{equation}\label{eq:B.55}
        \begin{split}
            &a \mathfrak{G}_{j} \left( \cdot, \hat{O}^{\mathcal{B}}, \boldsymbol{\mathcal{W}}_{\flat}^{\mathcal{B}}, \boldsymbol{\Theta}^{\mathcal{B}}_{p} \right) \left( \cdot, \hat{O}^{\mathcal{T}}, \boldsymbol{\mathcal{W}}^{\mathcal{T}}, \boldsymbol{\Theta}^{\mathcal{T}}_{p} \right) = \left[a \mathfrak{B}_{jp^{\prime}} \left( \cdot, \hat{O}^{\mathcal{B}}, \boldsymbol{\mathcal{W}}_{\flat}^{\mathcal{B}}, \boldsymbol{\Theta}^{\mathcal{B}}_{p} \right)\right] \\
            &\hspace{4.0cm} \times \mathfrak{T}_{p^{\prime}}\left( \cdot, \hat{O}^{\mathcal{T}}, \boldsymbol{\mathcal{W}}^{\mathcal{T}}, \boldsymbol{\Theta}^{\mathcal{T}}_{p} \right) + a \mathfrak{b}_{j} \left( \hat{O}^{\mathcal{B}}, \hat{O}^{\mathcal{T}}, \boldsymbol{\Theta}^{\mathcal{B}}_{p}, \boldsymbol{\Theta}^{\mathcal{T}}_{p} \right) \\
            &= \mathfrak{B}_{jp^{\prime}}^{\prime} \left( \cdot, \hat{O}^{\prime \mathcal{B}}, \boldsymbol{\mathcal{W}}_{\flat}^{\mathcal{B}}, \boldsymbol{\Theta}^{\mathcal{B}}_{p} \right) \\
            &\hspace{4.0cm} \times \mathfrak{T}_{p^{\prime}}\left( \cdot, \hat{O}^{\mathcal{T}}, \boldsymbol{\mathcal{W}}^{\mathcal{T}}, \boldsymbol{\Theta}^{\mathcal{T}}_{p} \right) + \mathfrak{b}_{j}^{\prime} \left( \hat{O}^{\prime \mathcal{B}}, \hat{O}^{\mathcal{T}}, \boldsymbol{\Theta}^{\mathcal{B}}_{p}, \boldsymbol{\Theta}^{\mathcal{T}}_{p} \right),
        \end{split}
    \end{equation}
    from which it is clear that \( a \mathfrak{G}_{j} \in \mbox{MAC}\). 
    
    In the second part, we verify that \(\mbox{MAC}\) is closed under the addition of functions \(\mathfrak{G}_{j}^{\left( 1 \right)} \in \mbox{MAC}\) and \(\mathfrak{G}_{j}^{\left( 2 \right)} \in \mbox{MAC}\). To achieve this, we define the functions
    \begin{equation}\label{eq:B.56}
        \begin{split}
            &\mathfrak{G}_{j}^{\left( 1 \right)} \left( \cdot, \hat{O}^{\mathcal{B}}, \boldsymbol{\mathcal{W}}_{\flat}^{\mathcal{B}}, \boldsymbol{\Theta}^{\mathcal{B}}_{p} \right) \left( \cdot, \hat{O}^{\mathcal{T}}, \boldsymbol{\mathcal{W}}^{\mathcal{T}}, \boldsymbol{\Theta}^{\mathcal{T}}_{p} \right) = \mathfrak{B}_{jp^{\prime}}^{\left( 1 \right)} \left( \cdot, \hat{O}^{\mathcal{B}}, \boldsymbol{\mathcal{W}}_{\flat}^{\mathcal{B}}, \boldsymbol{\Theta}^{\mathcal{B}}_{p^{\prime}} \right) \\
            &\hspace{4.0cm} \times \mathfrak{T}_{p^{\prime}} \left( \cdot, \hat{O}^{\mathcal{T}}, \boldsymbol{\mathcal{W}}^{\mathcal{T}}, \boldsymbol{\Theta}^{\mathcal{T}}_{p} \right) + \mathfrak{b}_{j}^{\left( 1 \right)} \left( \hat{O}^{\mathcal{B}}, \hat{O}^{\mathcal{T}}, \boldsymbol{\Theta}^{\mathcal{B}}_{p}, \boldsymbol{\Theta}^{\mathcal{T}}_{p} \right)
        \end{split}
    \end{equation}
    and
    \begin{equation}\label{eq:B.57}
        \begin{split}
            &\mathfrak{G}_{j}^{\left( 2 \right)} \left( \cdot, \hat{O}^{\mathcal{B}}, \boldsymbol{\mathcal{W}}_{\flat}^{\mathcal{B}}, \boldsymbol{\Theta}^{\mathcal{B}}_{p} \right) \left( \cdot, \hat{O}^{\mathcal{T}}, \boldsymbol{\mathcal{W}}^{\mathcal{T}}, \boldsymbol{\Theta}^{\mathcal{T}}_{p} \right) = \mathfrak{B}_{jp^{\prime}}^{\left( 2 \right)} \left( \cdot, \hat{O}^{\mathcal{B}}, \boldsymbol{\mathcal{W}}_{\flat}^{\mathcal{B}}, \boldsymbol{\Theta}^{\mathcal{B}}_{p^{\prime}} \right) \\
            &\hspace{4.0cm} \times \mathfrak{T}_{p^{\prime}} \left( \cdot, \hat{O}^{\mathcal{T}}, \boldsymbol{\mathcal{W}}^{\mathcal{T}}, \boldsymbol{\Theta}^{\mathcal{T}}_{p} \right) + \mathfrak{b}_{j}^{\left( 2 \right)} \left( \hat{O}^{\mathcal{B}}, \hat{O}^{\mathcal{T}}, \boldsymbol{\Theta}^{\mathcal{B}}_{p}, \boldsymbol{\Theta}^{\mathcal{T}}_{p} \right).
        \end{split}
    \end{equation}
    Which, upon summing them, yields
    \begin{equation}\label{eq:B.58}
        \begin{split}
            &\mathfrak{G}_{j}^{\left( 1 \right)} \left( \cdot, \hat{O}^{\mathcal{B}}, \boldsymbol{\mathcal{W}}_{\flat}^{\mathcal{B}}, \boldsymbol{\Theta}^{\mathcal{B}}_{p} \right) \left( \cdot, \hat{O}^{\mathcal{T}}, \boldsymbol{\mathcal{W}}^{\mathcal{T}}, \boldsymbol{\Theta}^{\mathcal{T}}_{p} \right) + \mathfrak{G}_{j}^{\left( 2 \right)} \left( \cdot, \hat{O}^{\mathcal{B}}, \boldsymbol{\mathcal{W}}_{\flat}^{\mathcal{B}}, \boldsymbol{\Theta}^{\mathcal{B}}_{p} \right) \left( \cdot, \hat{O}^{\mathcal{T}}, \boldsymbol{\mathcal{W}}^{\mathcal{T}}, \boldsymbol{\Theta}^{\mathcal{T}}_{p} \right) \\
            &= \left[\mathfrak{B}_{jp^{\prime}}^{\left( 1 \right)} \left( \cdot, \hat{O}^{\mathcal{B}}, \boldsymbol{\mathcal{W}}_{\flat}^{\mathcal{B}}, \boldsymbol{\Theta}^{\mathcal{B}}_{p^{\prime}} \right) + \mathfrak{B}_{jp^{\prime}}^{\left( 2 \right)} \left( \cdot, \hat{O}^{\mathcal{B}}, \boldsymbol{\mathcal{W}}_{\flat}^{\mathcal{B}}, \boldsymbol{\Theta}^{\mathcal{B}}_{p} \right)\right]\mathfrak{T}_{p^{\prime}} \left( \cdot, \hat{O}^{\mathcal{T}}, \boldsymbol{\mathcal{W}}^{\mathcal{T}}, \boldsymbol{\Theta}^{\mathcal{T}}_{p} \right) \\
            &\hspace{4.0cm}+ \left[ \mathfrak{b}_{j}^{\left( 1 \right)} \left( \hat{O}^{\mathcal{B}}, \hat{O}^{\mathcal{T}}, \boldsymbol{\Theta}^{\mathcal{B}}_{p}, \boldsymbol{\Theta}^{\mathcal{T}}_{p} \right) + \mathfrak{b}_{j}^{\left( 2 \right)} \left( \hat{O}^{\mathcal{B}}, \hat{O}^{\mathcal{T}}, \boldsymbol{\Theta}^{\mathcal{B}}_{p}, \boldsymbol{\Theta}^{\mathcal{T}}_{p} \right)\right],
        \end{split}
    \end{equation}
    where it is clear that the first term on the right-hand side of (\ref{eq:B.58}) corresponds to the form of a stacked classical-quantum hybrid DeepONet architecture. Furthermore, since the sum of biases is also a bias, we see that (\ref{eq:B.58}) reduces to 
    \begin{equation}\label{eq:B.59}
        \begin{split}
            &\mathfrak{G}_{j}^{\left( 3 \right)} \left( \cdot, \hat{O}^{\mathcal{B}}, \boldsymbol{\mathcal{W}}_{\flat}^{\mathcal{B}}, \boldsymbol{\Theta}^{\mathcal{B}}_{p} \right) \left( \cdot, \hat{O}^{\mathcal{T}}, \boldsymbol{\mathcal{W}}^{\mathcal{T}}, \boldsymbol{\Theta}^{\mathcal{T}}_{p} \right) = \mathfrak{B}_{jp^{\prime}}^{\left( 3 \right)} \left( \cdot, \hat{O}^{\mathcal{B}}, \boldsymbol{\mathcal{W}}_{\flat}^{\mathcal{B}}, \boldsymbol{\Theta}^{\mathcal{B}}_{p} \right) \\
            &\hspace{4.0cm} \times \mathfrak{T}_{p^{\prime}} \left( \cdot, \hat{O}^{\mathcal{T}}, \boldsymbol{\mathcal{W}}^{\mathcal{T}}, \boldsymbol{\Theta}^{\mathcal{T}}_{p} \right) + \mathfrak{b}_{j}^{\left( 3 \right)} \left( \hat{O}^{\mathcal{B}}, \hat{O}^{\mathcal{T}}, \boldsymbol{\Theta}^{\mathcal{B}}_{p}, \boldsymbol{\Theta}^{\mathcal{T}}_{p} \right).
        \end{split}
    \end{equation}
    Therefore, it is clear that \( \mathfrak{G}_{j}^{\left( 3 \right)} \in \mbox{MAC}\). In the third part, we verify that \(\mbox{MAC}\) is closed under the multiplication of functions \(\mathfrak{G}_{j}^{\left( 1 \right)} \in \mbox{MAC}\) and \(\mathfrak{G}_{j}^{\left( 2 \right)} \in \mbox{MAC}\). To achieve this, we write the expression for the product:
    \begin{equation}\label{eq:B.60}
        \begin{split}
            &\mathfrak{G}_{j}^{\left( 1 \right)} \left( \cdot, \hat{O}^{\mathcal{B}}, \boldsymbol{\mathcal{W}}_{\flat}^{\mathcal{B}}, \boldsymbol{\Theta}^{\mathcal{B}}_{p} \right) \left( \cdot, \hat{O}^{\mathcal{T}}, \boldsymbol{\mathcal{W}}^{\mathcal{T}}, \boldsymbol{\Theta}^{\mathcal{T}}_{p} \right) \mathfrak{G}_{j}^{\left( 2 \right)} \left( \cdot, \hat{O}^{\mathcal{B}}, \boldsymbol{\mathcal{W}}_{\flat}^{\mathcal{B}}, \boldsymbol{\Theta}^{\mathcal{B}}_{p} \right) \left( \cdot, \hat{O}^{\mathcal{T}}, \boldsymbol{\mathcal{W}}^{\mathcal{T}}, \boldsymbol{\Theta}^{\mathcal{T}}_{p} \right) \\
            &= \displaystyle{\sum_{\substack{p^{\prime} \in \mathcal{I}_{P^{\prime}} \\ q^{\prime} \in \mathcal{I}_{Q^{\prime}}}}} \left[\mathfrak{B}_{jp^{\prime}}^{\left( 1 \right)} \left( \cdot, \hat{O}^{\mathcal{B}}, \boldsymbol{\mathcal{W}}_{\flat}^{\mathcal{B}}, \boldsymbol{\Theta}^{\mathcal{B}}_{p} \right) \mathfrak{B}_{jq^{\prime}}^{\left( 2 \right)} \left( \cdot, \hat{O}^{\mathcal{B}}, \boldsymbol{\mathcal{W}}_{\flat}^{\mathcal{B}}, \boldsymbol{\Theta}^{\mathcal{B}}_{p} \right)\right] \\
            &\times \left[ \mathfrak{T}_{p^{\prime}} \left( \cdot, \hat{O}^{\mathcal{T}}, \boldsymbol{\mathcal{W}}^{\mathcal{T}}, \boldsymbol{\Theta}^{\mathcal{T}}_{p} \right) \mathfrak{T}_{q^{\prime}} \left( \cdot, \hat{O}^{\mathcal{T}}, \boldsymbol{\mathcal{W}}^{\mathcal{T}}, \boldsymbol{\Theta}^{\mathcal{T}}_{p} \right)\right] \\
            &+ \left[\mathfrak{b}_{j}^{\left( 2 \right)} \left( \hat{O}^{\mathcal{B}}, \hat{O}^{\mathcal{T}}, \boldsymbol{\Theta}^{\mathcal{B}}_{p}, \boldsymbol{\Theta}^{\mathcal{T}}_{p} \right) \displaystyle{\sum_{p^{\prime} \in \mathcal{I}_{P^{\prime}}}} \mathfrak{B}_{jp^{\prime}}^{\left( 1 \right)} \left( \cdot, \hat{O}^{\mathcal{B}}, \boldsymbol{\mathcal{W}}_{\flat}^{\mathcal{B}}, \boldsymbol{\Theta}^{\mathcal{B}}_{p} \right)  \mathfrak{T}_{p^{\prime}} \left( \cdot, \hat{O}^{\mathcal{T}}, \boldsymbol{\mathcal{W}}^{\mathcal{T}}, \boldsymbol{\Theta}^{\mathcal{T}}_{p} \right) \right. \\
            &\left. + \mathfrak{b}_{j}^{\left( 1 \right)} \left( \hat{O}^{\mathcal{B}}, \hat{O}^{\mathcal{T}}, \boldsymbol{\Theta}^{\mathcal{B}}_{p}, \boldsymbol{\Theta}^{\mathcal{T}}_{p} \right) \displaystyle{\sum_{q^{\prime} \in \mathcal{I}_{Q^{\prime}}}} \mathfrak{B}_{jq^{\prime}}^{\left( 2 \right)} \left( \cdot, \hat{O}^{\mathcal{B}}, \boldsymbol{\mathcal{W}}_{\flat}^{\mathcal{B}}, \boldsymbol{\Theta}^{\mathcal{B}}_{p} \right)  \mathfrak{T}_{q^{\prime}} \left( \cdot, \hat{O}^{\mathcal{T}}, \boldsymbol{\mathcal{W}}^{\mathcal{T}}, \boldsymbol{\Theta}^{\mathcal{T}}_{p} \right)\right] \\
            &+ \left[ \mathfrak{b}_{j}^{\left( 1 \right)} \left( \hat{O}^{\mathcal{B}}, \hat{O}^{\mathcal{T}}, \boldsymbol{\Theta}^{\mathcal{B}}_{p}, \boldsymbol{\Theta}^{\mathcal{T}}_{p} \right) \mathfrak{b}_{j}^{\left( 2 \right)} \left( \hat{O}^{\mathcal{B}}, \hat{O}^{\mathcal{T}}, \boldsymbol{\Theta}^{\mathcal{B}}_{p}, \boldsymbol{\Theta}^{\mathcal{T}}_{p} \right)\right],
        \end{split}
    \end{equation}
    where \(\mathcal{I}_{P^{\prime}} = \{1, \dots, P^{\prime}\}\) and \(\mathcal{I}_{Q^{\prime}} = \{1, \dots, Q^{\prime}\}\). Following a systematic algebraic expansion, it is verified that the product in (\ref{eq:B.60}) decomposes into terms that can be rigorously unified within a single expanded latent space. First, the main product of the summations generates a bilinear structure indexed by the Cartesian product of the original spaces, whose indexing is formally mapped via a composite index \(p^{\prime \prime} = (p^{\prime}, q^{\prime})\). Second, the cross terms contain the original summations modulated by bias scalars (\(\mathfrak{b}_{j}^{(1)}\) and \(\mathfrak{b}_{j}^{(2)}\)); owing to linearity, such terms correspond to structurally equivalent sub-networks with scaled amplitudes. To formally unify these different contributions---whose original index spaces are distinct (\(\mathcal{I}_{P^{\prime}}\), \(\mathcal{I}_{Q^{\prime}}\), and \(\mathcal{I}_{P^{\prime}} \times \mathcal{I}_{Q^{\prime}}\))---a unified index set (\(q^{\prime \prime}\)) is defined via a structural disjoint union, that is, a direct sum with embedding and zero-padding for dimensions not belonging to each block, ensuring the dimensional rigor of the sum. Finally, the trailing term results directly from the multiplication of two real constants, defining a new globally re-indexed bias denoted as \(\mathfrak{b}_{j}^{(4)}\). Thus, by grouping all terms under this unified and algebraically closed latent space, the new re-indexed components form precisely the functions \(\mathfrak{B}_{jq^{\prime \prime}}^{(4)}\) and \(\mathfrak{T}_{q^{\prime \prime}}^{(4)}\), allowing (\ref{eq:B.60}) to be rewritten in a formally impeccable manner in the standard \(\mbox{MAC}\) form:
    \begin{equation}\label{eq:B.61}
        \begin{split}
            &\mathfrak{G}_{j}^{\left( 4 \right)} \left( \cdot, \hat{O}^{\mathcal{B}}, \boldsymbol{\mathcal{W}}_{\flat}^{\mathcal{B}}, \boldsymbol{\Theta}^{\mathcal{B}}_{p} \right) \left( \cdot, \hat{O}^{\mathcal{T}}, \boldsymbol{\mathcal{W}}^{\mathcal{T}}, \boldsymbol{\Theta}^{\mathcal{T}}_{p} \right) = \mathfrak{B}_{jq^{\prime \prime}}^{\left( 4 \right)} \left( \cdot, \hat{O}^{\mathcal{B}}, \boldsymbol{\mathcal{W}}_{\flat}^{\mathcal{B}}, \boldsymbol{\Theta}^{\mathcal{B}}_{p} \right) \\
            &\hspace{4.0cm} \times \mathfrak{T}_{q^{\prime \prime}}^{\left( 4 \right)} \left( \cdot, \hat{O}^{\mathcal{T}}, \boldsymbol{\mathcal{W}}^{\mathcal{T}}, \boldsymbol{\Theta}^{\mathcal{T}}_{p} \right) + \mathfrak{b}_{j}^{\left( 4 \right)} \left( \hat{O}^{\mathcal{B}}, \hat{O}^{\mathcal{T}}, \boldsymbol{\Theta}^{\mathcal{B}}_{p}, \boldsymbol{\Theta}^{\mathcal{T}}_{p} \right).
        \end{split}
    \end{equation}
    That is, the total algebraic structure falls exactly into the form of a single \(\mathfrak{G}_{j}^{\left( 4 \right)}\) with an expanded latent dimension. Consequently, \(\mathfrak{G}_{j}^{\left( 4 \right)} \in \mbox{MAC}\). 
    
    In the fourth part, we verify that \(\mathfrak{G}_{j} \in \mbox{MAC}\) separates points, which is immediate because, by hypothesis, every function \(\mathfrak{G}_{j}\) in \(\mbox{MAC}\) is written in terms of injective activation functions \(\sigma : \mathbb{R} \to \mathbb{R}\). Because of this, for any \(\textbf{y}_{1} \neq \textbf{y}_{2}\) with \(\textbf{y}_{1}, \textbf{y}_{2} \in \mathcal{K}\), we obtain:
    \begin{equation}\label{eq:B.62}
        \begin{split}
            &\mathfrak{G}_{j} \left( \cdot, \hat{O}^{\mathcal{B}}, \boldsymbol{\mathcal{W}}_{\flat}^{\mathcal{B}}, \boldsymbol{\Theta}^{\mathcal{B}}_{p} \right) \left( \textbf{y}_{1}, \hat{O}^{\mathcal{T}}, \boldsymbol{\mathcal{W}}^{\mathcal{T}}, \boldsymbol{\Theta}^{\mathcal{T}}_{p} \right) \neq \\
            &\hspace{5.0cm} \mathfrak{G}_{j} \left( \cdot, \hat{O}^{\mathcal{B}}, \boldsymbol{\mathcal{W}}_{\flat}^{\mathcal{B}}, \boldsymbol{\Theta}^{\mathcal{B}}_{p} \right) \left( \textbf{y}_{2}, \hat{O}^{\mathcal{T}}, \boldsymbol{\mathcal{W}}^{\mathcal{T}}, \boldsymbol{\Theta}^{\mathcal{T}}_{p} \right).
        \end{split}
    \end{equation}
    Therefore, \(\mbox{MAC}\) separates points on \(\mathcal{K}\). 
    
    Finally, in the fifth part, we must verify that the set \(\mbox{MAC}\) contains all constant functions. Indeed, analogously to what was proved in \cite{Lantigua202600}, given \(c \in \mathbb{R}\), we can always write \(\hat{O}^{\nu} = c \mathds{1} \in \mbox{OBS}\) for a set of weights \(\boldsymbol{\mathcal{W}}_{\flat}^{\mathcal{B} \ast} \), \( \boldsymbol{\mathcal{W}}^{\mathcal{T} \ast} \), chosen to be zero, for instance, such that for any \( \boldsymbol{\mathfrak{v}} \in \mathbb{V}^{M} \), \( \boldsymbol{\tilde{\mathfrak{v}}}_{\flat} \in \mathbb{R}^{m M \times 1} \), \( \mathbf{y} \in \mathcal{K} \), and \( \boldsymbol{\Theta}_{p}^{\nu \ast \in \{\mathcal{B},\mathcal{T}\}} \), we obtain:
    \begin{equation}\label{eq:B.63}
        \mathfrak{G}_{j}^{\left( 1 \right)}\left( \cdot, \hat{O}^{\mathcal{B}}, \boldsymbol{\mathcal{W}}_{\flat}^{\mathcal{B}}, \boldsymbol{\Theta}^{\mathcal{B}}_{p} \right) \left( \cdot, \hat{O}^{\mathcal{T}}, \boldsymbol{\mathcal{W}}^{\mathcal{T}}, \boldsymbol{\Theta}^{\mathcal{T}}_{p} \right) = c.
    \end{equation}
    Thus, \(\mbox{MAC}\) contains all constant functions. This demonstrates that \(\mbox{MAC}\) is dense in \(L_{\mathbb{R}}^{\infty} \left(\mathcal{K}\right)\) and satisfies (\ref{eq:B.50}). 
\end{proof}

Furthermore, from lemma \ref{lemma:Lem-B.1}, it follows that for any \( \boldsymbol{\mathfrak{v}} \in \mathbb{V}^{M} \), \( \boldsymbol{\tilde{\mathfrak{v}}}_{\flat} \in \mathbb{R}^{m M \times 1} \), \( \mathbf{y} \in \mathcal{K} \), there exist \(\hat{O}^{\nu} \in \mbox{OBS}\), \(\boldsymbol{\mathcal{W}}_{\flat}^{\mathcal{B} \ast} \), \( \boldsymbol{\mathcal{W}}^{\mathcal{T} \ast} \), and \( \boldsymbol{\Theta}_{p}^{\nu \ast \in \{\mathcal{B},\mathcal{T}\}} \) that satisfy relations (\ref{eq:C.69}) and (\ref{eq:C.70}).

\section{Proof of the Theorem \ref{theorem:Teo-3.1}}\label{sec:Ap-C}
In this appendix, we present the detailed proof of the theorem \ref{theorem:Teo-3.1} guaranteeing the universal approximation capabilities of our hybrid QPI-DeepONet-MAC model.

\begin{proof}
    The proof of theorem \ref{theorem:Teo-3.1} begins by rewriting the subtraction \( \mathfrak{G}_{j} - \mathcal{G}_{j} \) in the form
    \begin{equation}\label{eq:C.64}
        \begin{split}
            &\mathfrak{G}_{j} \left( \boldsymbol{\tilde{\mathfrak{v}}}_{\flat}, \hat{O}^{\mathcal{B}}, \boldsymbol{\mathcal{W}}_{\flat}^{\mathcal{B} \ast}, \boldsymbol{\Theta}^{\mathcal{B} \ast}_{p} \right) \left( \cdot, \hat{O}^{\mathcal{T}}, \boldsymbol{\mathcal{W}}^{\mathcal{T} \ast}, \boldsymbol{\Theta}^{\mathcal{T} \ast}_{p} \right) - \mathcal{G}_{j} \left( \boldsymbol{\mathfrak{v}} \right) \left( \cdot \right) = \\
            &\hspace{0.75cm} \mathfrak{B}_{jp^{\prime}} \left( \boldsymbol{\tilde{\mathfrak{v}}}_{\flat}, \hat{O}^{\mathcal{B}}, \boldsymbol{\mathcal{W}}_{\flat}^{\mathcal{B} \ast}, \boldsymbol{\Theta}^{\mathcal{B} \ast}_{p} \right) \mathfrak{T}_{p^{\prime}}\left( \cdot, \hat{O}^{\mathcal{T}}, \boldsymbol{\mathcal{W}}^{\mathcal{T} \ast}, \boldsymbol{\Theta}^{\mathcal{T} \ast}_{p} \right) + \mathfrak{b}_{j} \left( \hat{O}^{\mathcal{B}}, \hat{O}^{\mathcal{T}}, \boldsymbol{\Theta}^{\mathcal{B} \ast}_{p}, \boldsymbol{\Theta}^{\mathcal{T} \ast}_{p} \right) \\
            &\hspace{0.75cm} - \mathcal{G}_{j} \left( \boldsymbol{\mathfrak{v}} \right) \left( \cdot \right) - b_{0} + b_{0} - \hat{\beta}_{jp^{\prime}} \left( \boldsymbol{\tilde{\mathfrak{v}}}_{\flat}, \boldsymbol{\mathcal{W}}_{\flat}^{\mathcal{B} \ast} \right) \tau_{p^{\prime}} \left( \cdot, \boldsymbol{\mathcal{W}}^{\mathcal{T} \ast} \right) + \hat{\beta}_{jp^{\prime}} \left( \boldsymbol{\tilde{\mathfrak{v}}}_{\flat}, \boldsymbol{\mathcal{W}}_{\flat}^{\mathcal{B} \ast} \right) \\
            &\hspace{0.75cm} \times \tau_{p^{\prime}} \left( \cdot, \boldsymbol{\mathcal{W}}^{\mathcal{T} \ast} \right) + \mathfrak{B}_{jp^{\prime}} \left( \boldsymbol{\tilde{\mathfrak{v}}}_{\flat}, \hat{O}^{\mathcal{B}}, \boldsymbol{\mathcal{W}}_{\flat}^{\mathcal{B} \ast}, \boldsymbol{\Theta}^{\mathcal{B} \ast}_{p} \right) \tau_{p^{\prime}} \left( \cdot, \boldsymbol{\mathcal{W}}^{\mathcal{T} \ast} \right) \\
            &\hspace{0.75cm} - \mathfrak{B}_{jp^{\prime}} \left( \boldsymbol{\tilde{\mathfrak{v}}}_{\flat}, \hat{O}^{\mathcal{B}}, \boldsymbol{\mathcal{W}}_{\flat}^{\mathcal{B} \ast}, \boldsymbol{\Theta}^{\mathcal{B} \ast}_{p} \right) \tau_{p^{\prime}} \left( \cdot, \boldsymbol{\mathcal{W}}^{\mathcal{T} \ast} \right),
        \end{split}
    \end{equation}
    and by regrouping terms we obtain
    \begin{equation}\label{eq:C.65}
        \begin{split}
            &\mathfrak{G}_{j} \left( \boldsymbol{\tilde{\mathfrak{v}}}_{\flat}, \hat{O}^{\mathcal{B}}, \boldsymbol{\mathcal{W}}_{\flat}^{\mathcal{B} \ast}, \boldsymbol{\Theta}^{\mathcal{B} \ast}_{p} \right) \left( \cdot, \hat{O}^{\mathcal{T}}, \boldsymbol{\mathcal{W}}^{\mathcal{T} \ast}, \boldsymbol{\Theta}^{\mathcal{T} \ast}_{p} \right) - \mathcal{G}_{j} \left( \boldsymbol{\mathfrak{v}} \right) \left( \cdot \right) = \\
            &\hspace{0.25cm} \left[ \mathfrak{B}_{jp^{\prime}} \left( \boldsymbol{\tilde{\mathfrak{v}}}_{\flat}, \hat{O}^{\mathcal{B}}, \boldsymbol{\mathcal{W}}_{\flat}^{\mathcal{B} \ast}, \boldsymbol{\Theta}^{\mathcal{B} \ast}_{p} \right) - \hat{\beta}_{jp^{\prime}} \left( \boldsymbol{\tilde{\mathfrak{v}}}_{\flat}, \boldsymbol{\mathcal{W}}_{\flat}^{\mathcal{B} \ast} \right) \right] \tau_{p^{\prime}} \left( \cdot, \boldsymbol{\mathcal{W}}^{\mathcal{T} \ast} \right) \\
            &\hspace{0.5cm} + \left[ \mathfrak{T}_{p^{\prime}}\left( \cdot, \hat{O}^{\mathcal{T}}, \boldsymbol{\mathcal{W}}^{\mathcal{T} \ast}, \boldsymbol{\Theta}^{\mathcal{T} \ast}_{p} \right) - \tau_{p^{\prime}} \left( \cdot, \boldsymbol{\mathcal{W}}^{\mathcal{T} \ast} \right) \right] \mathfrak{B}_{jp^{\prime}} \left( \boldsymbol{\tilde{\mathfrak{v}}}_{\flat}, \hat{O}^{\mathcal{B}}, \boldsymbol{\mathcal{W}}_{\flat}^{\mathcal{B} \ast}, \boldsymbol{\Theta}^{\mathcal{B} \ast}_{p} \right) \\
            &\hspace{3.75cm} + \left[ \hat{\beta}_{jp^{\prime}} \left( \boldsymbol{\tilde{\mathfrak{v}}}_{\flat}, \boldsymbol{\mathcal{W}}_{\flat}^{\mathcal{B} \ast} \right) \tau_{p^{\prime}} \left( \cdot, \boldsymbol{\mathcal{W}}^{\mathcal{T} \ast} \right) - \mathcal{G}_{j} \left( \boldsymbol{\mathfrak{v}} \right) \left( \cdot \right) + b_{0} \right] \\
            &\hspace{6.5cm} + \left[ \mathfrak{b}_{j} \left( \hat{O}^{\mathcal{B}}, \hat{O}^{\mathcal{T}}, \boldsymbol{\Theta}^{\mathcal{B} \ast}_{p}, \boldsymbol{\Theta}^{\mathcal{T} \ast}_{p} \right) - b_{0} \right].
        \end{split}
    \end{equation}
    Thus, by taking the norm of (\ref{eq:C.65}) and using the properties of the triangle inequality, we obtain
    \begin{equation}\label{eq:C.66}
        \begin{split}
            &\left\Vert \mathfrak{G}_{j} \left( \boldsymbol{\tilde{\mathfrak{v}}}_{\flat}, \hat{O}^{\mathcal{B}}, \boldsymbol{\mathcal{W}}_{\flat}^{\mathcal{B} \ast}, \boldsymbol{\Theta}^{\mathcal{B} \ast}_{p} \right) \left( \cdot, \hat{O}^{\mathcal{T}}, \boldsymbol{\mathcal{W}}^{\mathcal{T} \ast}, \boldsymbol{\Theta}^{\mathcal{T} \ast}_{p} \right) - \mathcal{G}_{j} \left( \boldsymbol{\mathfrak{v}} \right) \left( \cdot \right) \right\Vert_{\infty} \leq \\
            &\left\Vert \mathfrak{B}_{jp^{\prime}} \left( \boldsymbol{\tilde{\mathfrak{v}}}_{\flat}, \hat{O}^{\mathcal{B}}, \boldsymbol{\mathcal{W}}_{\flat}^{\mathcal{B} \ast}, \boldsymbol{\Theta}^{\mathcal{B} \ast}_{p} \right) - \hat{\beta}_{jp^{\prime}} \left( \boldsymbol{\tilde{\mathfrak{v}}}_{\flat}, \boldsymbol{\mathcal{W}}_{\flat}^{\mathcal{B} \ast} \right) \right\Vert_{\infty} \left\Vert \tau_{p^{\prime}} \left( \cdot, \boldsymbol{\mathcal{W}}^{\mathcal{T} \ast} \right) \right\Vert_{\infty} \\
            &+ \left\Vert \mathfrak{T}_{p^{\prime}}\left( \cdot, \hat{O}^{\mathcal{T}}, \boldsymbol{\mathcal{W}}^{\mathcal{T} \ast}, \boldsymbol{\Theta}^{\mathcal{T} \ast}_{p} \right) - \tau_{p^{\prime}} \left( \cdot, \boldsymbol{\mathcal{W}}^{\mathcal{T} \ast} \right) \right\Vert_{\infty} \left\Vert \mathfrak{B}_{jp^{\prime}} \left( \boldsymbol{\tilde{\mathfrak{v}}}_{\flat}, \hat{O}^{\mathcal{B}}, \boldsymbol{\mathcal{W}}_{\flat}^{\mathcal{B} \ast}, \boldsymbol{\Theta}^{\mathcal{B} \ast}_{p} \right) \right\Vert_{\infty} \\
            &+ \left\Vert \hat{\beta}_{jp^{\prime}} \left( \boldsymbol{\tilde{\mathfrak{v}}}_{\flat}, \boldsymbol{\mathcal{W}}_{\flat}^{\mathcal{B} \ast} \right) \tau_{p^{\prime}} \left( \cdot, \boldsymbol{\mathcal{W}}^{\mathcal{T} \ast} \right) - \mathcal{G}_{j} \left( \boldsymbol{\mathfrak{v}} \right) \left( \cdot \right) + b_{0} \right\Vert_{\infty} \\
            &+ \left\Vert \mathfrak{b}_{j} \left( \hat{O}^{\mathcal{B}}, \hat{O}^{\mathcal{T}}, \boldsymbol{\Theta}^{\mathcal{B} \ast}_{p}, \boldsymbol{\Theta}^{\mathcal{T} \ast}_{p} \right) - b_{0} \right\Vert_{\infty},
        \end{split}
    \end{equation}
    from which we can establish the following four facts. The first of these is that, by the hypotheses of theorem \ref{theorem:Teo-3.1}, the functions \( \mathfrak{B}_{jp^{\prime}} \left( \boldsymbol{\tilde{\mathfrak{v}}}_{\flat}, \hat{O}^{\mathcal{B}}, \boldsymbol{\mathcal{W}}_{\flat}^{\mathcal{B} \ast}, \boldsymbol{\Theta}^{\mathcal{B} \ast}_{p} \right) \) and \( \tau_{p^{\prime}} \left( \cdot, \boldsymbol{\mathcal{W}}^{\mathcal{T} \ast} \right) \) are equicontinuous and bounded. Therefore, we can write without loss of generality the expressions
    \begin{equation}\label{eq:C.67}
        M_{\tau_{p^{\prime}}} = \displaystyle{\max_{\substack{p^{\prime} \in \mathcal{I}_{N_{l}}}} \left\Vert \tau_{p^{\prime}} \left( \cdot, \boldsymbol{\mathcal{W}}^{\mathcal{T} \ast} \right) \right\Vert_{\infty}},
    \end{equation}
    and
    \begin{equation}\label{eq:C.68}
        M_{\mathfrak{B}_{jp^{\prime}}} = \displaystyle{\max_{\substack{p^{\prime} \in \mathcal{I}_{N_{l}} \\ j \in \mathcal{I}_{M}}} \left\Vert \mathfrak{B}_{jp^{\prime}} \left( \boldsymbol{\tilde{\mathfrak{v}}}_{\flat}, \hat{O}^{\mathcal{B}}, \boldsymbol{\mathcal{W}}_{\flat}^{\mathcal{B} \ast}, \boldsymbol{\Theta}^{\mathcal{B} \ast}_{p} \right) \right\Vert_{\infty}}.
    \end{equation}
    The second of these is that, also by the hypotheses of theorem \ref{theorem:Teo-3.1}, \( \mathfrak{B}_{jp^{\prime}} \left( \boldsymbol{\tilde{\mathfrak{v}}}_{\flat}, \hat{O}^{\mathcal{B}}, \boldsymbol{\mathcal{W}}_{\flat}^{\mathcal{B} \ast}, \boldsymbol{\Theta}^{\mathcal{B} \ast}_{p} \right) \) and \( \mathfrak{T}_{p^{\prime}}\left( \cdot, \hat{O}^{\mathcal{T}}, \boldsymbol{\mathcal{W}}^{\mathcal{T} \ast}, \boldsymbol{\Theta}^{\mathcal{T} \ast}_{p} \right) \) are QPINN-MAC type universal approximators, which have the form of equations (\ref{eq:32}), (\ref{eq:33}); therefore, they can approximate \( \hat{\beta}_{jp^{\prime}} \left( \boldsymbol{\tilde{\mathfrak{v}}}_{\flat}, \boldsymbol{\mathcal{W}}_{\flat}^{\mathcal{B} \ast} \right) \) and \( \tau_{p^{\prime}} \left( \cdot, \boldsymbol{\mathcal{W}}^{\mathcal{T} \ast} \right) \) with arbitrary precision. That is, for any \( \epsilon_{j}^{\nu \in \{ \mathcal{B}, \mathcal{T} \}} > 0 \), \( \boldsymbol{\mathfrak{v}} \in \mathbb{V}^{M} \), \( \boldsymbol{\tilde{\mathfrak{v}}}_{\flat} \in \mathbb{R}^{m M \times 1} \), \( \mathbf{y} \in D \), there exist \( \hat{O}^{\mathcal{B}} \in \mathbb{O}^{\mathcal{B}} \), \( \hat{O}^{\mathcal{T}} \in \mathbb{O}^{\mathcal{T}} \), and a set of weights and biases \( \boldsymbol{\mathcal{W}}_{\flat}^{\mathcal{B} \ast} \), \( \boldsymbol{\mathcal{W}}^{\mathcal{T} \ast} \), \( \boldsymbol{\Theta}_{p}^{\mathcal{B} \ast} \), \( \boldsymbol{\Theta}_{p}^{\mathcal{T} \ast} \) such that:
    \begin{equation}\label{eq:C.69}
        \left\Vert \mathfrak{B}_{jp^{\prime}} \left( \boldsymbol{\tilde{\mathfrak{v}}}_{\flat}, \hat{O}^{\mathcal{B}}, \boldsymbol{\mathcal{W}}_{\flat}^{\mathcal{B} \ast}, \boldsymbol{\Theta}^{\mathcal{B} \ast}_{p} \right) - \hat{\beta}_{jp^{\prime}} \left( \boldsymbol{\tilde{\mathfrak{v}}}_{\flat}, \boldsymbol{\mathcal{W}}_{\flat}^{\mathcal{B} \ast} \right) \right\Vert_{\infty} \leq \frac{\epsilon_{j}^{\mathcal{B}}}{3 p M_{\tau_{p^{\prime}}}}
    \end{equation}
    and
    \begin{equation}\label{eq:C.70}
        \left\Vert \mathfrak{T}_{p^{\prime}}\left( \cdot, \hat{O}^{\mathcal{T}}, \boldsymbol{\mathcal{W}}^{\mathcal{T} \ast}, \boldsymbol{\Theta}^{\mathcal{T} \ast}_{p} \right) - \tau_{p^{\prime}} \left( \cdot, \boldsymbol{\mathcal{W}}^{\mathcal{T} \ast} \right) \right\Vert_{\infty} \leq \frac{\epsilon_{j}^{\mathcal{T}}}{3 p M_{\mathfrak{B}_{jp^{\prime}}}}.
    \end{equation}
    The third of these is that the third term of (\ref{eq:C.66}) emerges from Theorem \ref{theorem:Teo-A.2}, from which we observe that taking the norm of (\ref{eq:16}) yields
    \begin{equation}\label{eq:C.71}
        \left\Vert \hat{\beta}_{jp^{\prime}} \left( \boldsymbol{\tilde{\mathfrak{v}}}_{\flat}, \boldsymbol{\mathcal{W}}_{\flat}^{\mathcal{B} \ast} \right) \tau_{p^{\prime}} \left( \cdot, \boldsymbol{\mathcal{W}}^{\mathcal{T} \ast} \right) - \mathcal{G}_{j} \left( \boldsymbol{\mathfrak{v}} \right) \left( \cdot \right) + b_{0} \right\Vert_{\infty} \leq \frac{\epsilon_{j}^{\mathcal{C}}}{3 p}.
    \end{equation}
    Lastly, the fourth of these is obtained from \cite{Lantigua202600}, where it was proved that indeed, for any \( \epsilon_{j}^{\mathcal{Q}} > 0 \), there exist \( \hat{O}^{\mathcal{B}} \in \mathbb{O}^{\mathcal{B}} \), \( \hat{O}^{\mathcal{T}} \in \mathbb{O}^{\mathcal{T}} \), \( \boldsymbol{\Theta}_{p}^{\mathcal{B} \ast} \), \( \boldsymbol{\Theta}_{p}^{\mathcal{T} \ast} \) such that \( \mathfrak{b}_{j} \left( \hat{O}^{\mathcal{B}}, \hat{O}^{\mathcal{T}}, \boldsymbol{\Theta}^{\mathcal{B} \ast}_{p}, \boldsymbol{\Theta}^{\mathcal{T} \ast}_{p} \right) \approx - b_{0} \) with arbitrary precision, and therefore it is satisfied that:
    \begin{equation}\label{eq:C.72}
        \left\Vert \mathfrak{b}_{j} \left( \hat{O}^{\mathcal{B}}, \hat{O}^{\mathcal{T}}, \boldsymbol{\Theta}^{\mathcal{B} \ast}_{p}, \boldsymbol{\Theta}^{\mathcal{T} \ast}_{p} \right) - b_{0} \right\Vert_{\infty} \leq \frac{\epsilon_{j}^{\mathcal{Q}}}{3 p}.
    \end{equation}
    Hence, it suffices to substitute (\ref{eq:C.67}), (\ref{eq:C.68}), (\ref{eq:C.69}), (\ref{eq:C.70}), (\ref{eq:C.71}), and (\ref{eq:C.72}) into (\ref{eq:C.66}), and choose \( \epsilon_{j}^{\mathcal{B}}/3 p  = \epsilon_{j}^{\mathcal{T}}/3 p = \epsilon_{j}^{\mathcal{C}}/3 p = \epsilon_{j}^{\mathcal{Q}}/3 p = \epsilon_{j}/4 \) to obtain (\ref{eq:35}). Then, considering
    \begin{equation}\label{eq:C.73}
        \begin{split}
            &\left\Vert \boldsymbol{\mathfrak{G}} \left( \boldsymbol{\tilde{\mathfrak{v}}}_{\flat}, \hat{O}^{\mathcal{B}}, \boldsymbol{\mathcal{W}}_{\flat}^{\mathcal{B} \ast}, \boldsymbol{\Xi}^{\mathcal{B} \ast} \right) \left( \cdot, \hat{O}^{\mathcal{T}}, \boldsymbol{\mathcal{W}}^{\mathcal{T} \ast}, \boldsymbol{\Xi}^{\mathcal{T} \ast} \right) - \boldsymbol{\mathcal{G}} \left( \boldsymbol{\mathfrak{v}} \right) \left( \cdot \right) \right\Vert_{\infty, M} \leq \\
            &\hspace{1.0cm} \displaystyle{\sum_{\substack{j \in \mathcal{I}_{M}}}} \left\Vert \mathfrak{G}_{j} \left( \boldsymbol{\tilde{\mathfrak{v}}}_{\flat}, \hat{O}^{\mathcal{B}}, \boldsymbol{\mathcal{W}}_{\flat}^{\mathcal{B} \ast}, \boldsymbol{\Theta}^{\mathcal{B} \ast}_{p} \right) \left( \cdot, \hat{O}^{\mathcal{T}}, \boldsymbol{\mathcal{W}}^{\mathcal{T} \ast}, \boldsymbol{\Theta}^{\mathcal{T} \ast}_{p} \right) - \mathcal{G}_{j} \left( \boldsymbol{\mathfrak{v}} \right) \left( \cdot \right) \right\Vert_{\infty} \leq \displaystyle{\sum_{\substack{j \in \mathcal{I}_{M}}}} \epsilon_{j}
        \end{split}
    \end{equation}
    and \( \epsilon = \epsilon_{1} + \dots + \epsilon_{M} \), we finally obtain (\ref{eq:36}), as we wanted to show.
\end{proof}
\section{Proof of the Corollary \ref{corollary:Col-3.1}}\label{sec:Ap-D}
In this appendix, we present a fundamental result that makes this type of architecture viable for application to real-world problems—namely, the way these networks handle the problem of barren plateaus.

\begin{proof}
    The proof of the corollary begins by computing the gradient of the \(j\)-th term of the loss function (\ref{eq:44}) with respect to the quantum parameters \( \boldsymbol{\Theta}_{p \in \mathcal{I}_{P}}^{\nu \ast \in \{\mathcal{B}, \mathcal{T}\}} \), yielding:
    \begin{equation}\label{eq:D.74}
        \begin{split}
            &\nabla_{\boldsymbol{\Theta}_{p}^{\nu \ast}} \mathfrak{L}_{j} \left( \boldsymbol{\tilde{\mathfrak{v}}}_{\flat}, \mathbf{y}, \hat{O}^{\mathcal{B}}, \hat{O}^{\mathcal{T}}, \boldsymbol{\mathcal{W}}_{\flat}^{\mathcal{B} \ast}, \boldsymbol{\mathcal{W}}^{\mathcal{T} \ast}, \boldsymbol{\Theta}^{\mathcal{B} \ast}_{p}, \boldsymbol{\Theta}^{\mathcal{T} \ast}_{p} \right) = \\
            &\hspace{2.5cm}2 \left[ \mathfrak{G}_{j} \left( \boldsymbol{\tilde{\mathfrak{v}}}_{\flat}, \hat{O}^{\mathcal{B}}, \boldsymbol{\mathcal{W}}_{\flat}^{\mathcal{B} \ast}, \boldsymbol{\Theta}^{\mathcal{B} \ast}_{p} \right) \left( \mathbf{y}, \hat{O}^{\mathcal{T}}, \boldsymbol{\mathcal{W}}^{\mathcal{T} \ast}, \boldsymbol{\Theta}^{\mathcal{T} \ast}_{p} \right) - u_{j} \left( \mathbf{y} \right) \right] \\
            &\hspace{4.0cm} \times \nabla_{\boldsymbol{\Theta}_{p}^{\nu \ast}} \mathfrak{G}_{j} \left( \boldsymbol{\tilde{\mathfrak{v}}}_{\flat}, \hat{O}^{\mathcal{B}}, \boldsymbol{\mathcal{W}}_{\flat}^{\mathcal{B} \ast}, \boldsymbol{\Theta}^{\mathcal{B} \ast}_{p} \right) \left( \mathbf{y}, \hat{O}^{\mathcal{T}}, \boldsymbol{\mathcal{W}}^{\mathcal{T} \ast}, \boldsymbol{\Theta}^{\mathcal{T} \ast}_{p} \right).
        \end{split}
    \end{equation}
    Now, it is important to emphasize that only the expressions derived from (\ref{eq:D.74}) are susceptible to suffering from the barren plateaus problem. Therefore, by rewriting each of the branch and trunk components of (\ref{eq:D.74}), taking the \( \left\Vert \cdot \right\Vert_{\infty} \) norm in both expressions, and utilizing the properties of the Cauchy-Schwarz inequality, we obtain:
    \begin{equation}\label{eq:D.75}
        \begin{split}
            &\left\Vert \nabla_{\boldsymbol{\Theta}_{p}^{\mathcal{B} \ast}} \mathfrak{L}_{j} \left( \boldsymbol{\tilde{\mathfrak{v}}}_{\flat}, \cdot, \hat{O}^{\mathcal{B}}, \hat{O}^{\mathcal{T}}, \boldsymbol{\mathcal{W}}_{\flat}^{\mathcal{B} \ast}, \boldsymbol{\mathcal{W}}^{\mathcal{T} \ast}, \boldsymbol{\Theta}^{\mathcal{B} \ast}_{p}, \boldsymbol{\Theta}^{\mathcal{T} \ast}_{p} \right) \right\Vert_{\infty} \leq \\
            &2 \left\Vert \mathfrak{G}_{j} \left( \boldsymbol{\tilde{\mathfrak{v}}}_{\flat}, \hat{O}^{\mathcal{B}}, \boldsymbol{\mathcal{W}}_{\flat}^{\mathcal{B} \ast}, \boldsymbol{\Theta}^{\mathcal{B} \ast}_{p} \right) \left( \cdot, \hat{O}^{\mathcal{T}}, \boldsymbol{\mathcal{W}}^{\mathcal{T} \ast}, \boldsymbol{\Theta}^{\mathcal{T} \ast}_{p} \right) - u_{j} \left( \cdot \right) \right\Vert_{\infty} \left[ \left\Vert \hat{\beta}_{jp^{\prime}} \left( \cdot, \boldsymbol{\mathcal{W}}_{\flat}^{\mathcal{B} \ast} \right) + 1 \right\Vert_{\infty} \right. \\
            &\left. \times \left\Vert \tau_{p^{\prime}} \left( \cdot, \boldsymbol{\mathcal{W}}^{\mathcal{T} \ast} \right) + 1 \right\Vert_{\infty} + 1 \right] \left\Vert \mathcal{P}_{m^{\mathcal{T}}n^{\mathcal{T}}} \bra{\Psi_{m^{\mathcal{T}}} \left( \boldsymbol{\Theta}^{\mathcal{T} \ast}_{p} \right) } \hat{O}^{\mathcal{T}} \ket{\Psi_{n^{\mathcal{T}}} \left( \boldsymbol{\Theta}^{\mathcal{T} \ast}_{p} \right) } \right\Vert_{\infty} \\
            &\times \left\Vert \nabla_{\boldsymbol{\Theta}_{p}^{\mathcal{B} \ast}} \left[ \mathcal{P}_{m^{\mathcal{B}}n^{\mathcal{B}}} \bra{\Psi_{m^{\mathcal{B}}} \left( \boldsymbol{\Theta}^{\mathcal{B} \ast}_{p} \right) } \hat{O}^{\mathcal{B}} \ket{\Psi_{n^{\mathcal{B}}} \left( \boldsymbol{\Theta}^{\mathcal{B} \ast}_{p} \right) } \right] \right\Vert_{\infty}
        \end{split}
    \end{equation}
    and
    \begin{equation}\label{eq:D.76}
        \begin{split}
            &\left\Vert \nabla_{\boldsymbol{\Theta}_{p}^{\mathcal{T} \ast}} \mathfrak{L}_{j} \left( \boldsymbol{\tilde{\mathfrak{v}}}_{\flat}, \cdot, \hat{O}^{\mathcal{B}}, \hat{O}^{\mathcal{T}}, \boldsymbol{\mathcal{W}}_{\flat}^{\mathcal{B} \ast}, \boldsymbol{\mathcal{W}}^{\mathcal{T} \ast}, \boldsymbol{\Theta}^{\mathcal{B} \ast}_{p}, \boldsymbol{\Theta}^{\mathcal{T} \ast}_{p} \right) \right\Vert_{\infty} \leq \\
            &2 \left\Vert \mathfrak{G}_{j} \left( \boldsymbol{\tilde{\mathfrak{v}}}_{\flat}, \hat{O}^{\mathcal{B}}, \boldsymbol{\mathcal{W}}_{\flat}^{\mathcal{B} \ast}, \boldsymbol{\Theta}^{\mathcal{B} \ast}_{p} \right) \left( \cdot, \hat{O}^{\mathcal{T}}, \boldsymbol{\mathcal{W}}^{\mathcal{T} \ast}, \boldsymbol{\Theta}^{\mathcal{T} \ast}_{p} \right) - u_{j} \left( \cdot \right) \right\Vert_{\infty} \left[ \left\Vert \hat{\beta}_{jp^{\prime}} \left( \cdot, \boldsymbol{\mathcal{W}}_{\flat}^{\mathcal{B} \ast} \right) + 1 \right\Vert_{\infty} \right. \\
            &\left. \times \left\Vert \tau_{p^{\prime}} \left( \cdot, \boldsymbol{\mathcal{W}}^{\mathcal{T} \ast} \right) + 1 \right\Vert_{\infty} + 1 \right] \left\Vert \mathcal{P}_{m^{\mathcal{B}}n^{\mathcal{B}}} \bra{\Psi_{m^{\mathcal{B}}} \left( \boldsymbol{\Theta}^{\mathcal{B} \ast}_{p} \right) } \hat{O}^{\mathcal{B}} \ket{\Psi_{n^{\mathcal{B}}} \left( \boldsymbol{\Theta}^{\mathcal{B} \ast}_{p} \right) } \right\Vert_{\infty} \\
            &\times \left\Vert \nabla_{\boldsymbol{\Theta}_{p}^{\mathcal{T} \ast}} \left[ \mathcal{P}_{m^{\mathcal{T}}n^{\mathcal{T}}} \bra{\Psi_{m^{\mathcal{T}}} \left( \boldsymbol{\Theta}^{\mathcal{T} \ast}_{p} \right) } \hat{O}^{\mathcal{T}} \ket{\Psi_{n^{\mathcal{T}}} \left( \boldsymbol{\Theta}^{\mathcal{T} \ast}_{p} \right) } \right] \right\Vert_{\infty}.
        \end{split}
    \end{equation}
    However, from the hypotheses we know that the functions \( \mathfrak{B}_{jp^{\prime}} \), \( \mathfrak{T}_{p^{\prime}} \) are equicontinuous and bounded from above, which allow us to write the expressions
    \begin{equation}\label{eq:D.77}
        M_{\hat{\beta}} = \displaystyle{\max_{\substack{j \in \mathcal{I}_{M} \\ p^{\prime} \in \mathcal{I}_{N_{l}}}} \left\Vert \hat{\beta}_{jp^{\prime}} \left( \cdot, \boldsymbol{\mathcal{W}}_{\flat}^{\mathcal{B} \ast} \right) + 1 \right\Vert_{\infty}},
    \end{equation}
    \begin{equation}\label{eq:D.78}
        M_{\tau} = \displaystyle{\max_{\substack{p^{\prime} \in \mathcal{I}_{N_{l}}}} \left\Vert \tau_{p^{\prime}} \left( \cdot, \boldsymbol{\mathcal{W}}^{\mathcal{T} \ast} \right) + 1 \right\Vert_{\infty}}.
    \end{equation}
    Similarly, the functions \(\left\{\mathcal{P}_{m^{\nu}n^{\nu}} \bra{\Psi_{m^{\nu}} \left( \boldsymbol{\Theta}^{\nu \ast}_{p} \right) } \hat{O}^{\nu} \ket{\Psi_{n^{\nu}} \left( \boldsymbol{\Theta}^{\nu \ast}_{p} \right) }\right\}_{\nu \in \{\mathcal{B},\mathcal{T}\}}\) allow us to write the expressions
    \begin{equation}\label{eq:D.79}
        M_{\hat{O}^{\mathcal{B}}} = \displaystyle{\max_{\substack{p \in \mathcal{I}_{P}}} \left\Vert \mathcal{P}_{m^{\mathcal{B}}n^{\mathcal{B}}} \bra{\Psi_{m^{\mathcal{B}}} \left( \boldsymbol{\Theta}^{\mathcal{B} \ast}_{p} \right) } \hat{O}^{\mathcal{B}} \ket{\Psi_{n^{\mathcal{B}}} \left( \boldsymbol{\Theta}^{\mathcal{B} \ast}_{p} \right) } \right\Vert_{\infty}},
    \end{equation}
    \begin{equation}\label{eq:D.80}
        M_{\hat{O}^{\mathcal{T}}} = \displaystyle{\max_{\substack{p \in \mathcal{I}_{P}}} \left\Vert \mathcal{P}_{m^{\mathcal{T}}n^{\mathcal{T}}} \bra{\Psi_{m^{\mathcal{T}}} \left( \boldsymbol{\Theta}^{\mathcal{T} \ast}_{p} \right) } \hat{O}^{\mathcal{T}} \ket{\Psi_{n^{\mathcal{T}}} \left( \boldsymbol{\Theta}^{\mathcal{T} \ast}_{p} \right) } \right\Vert_{\infty}}.
    \end{equation}
    On the other hand, from \cite{Lantigua202600, Schuld201900} we know that
    \begin{equation}\label{eq:D.81}
        \left\Vert \nabla_{\boldsymbol{\Theta}_{p}^{\mathcal{B} \ast}} \left[ \mathcal{P}_{m^{\mathcal{B}}n^{\mathcal{B}}} \bra{\Psi_{m^{\mathcal{B}}} \left( \boldsymbol{\Theta}^{\mathcal{B} \ast}_{p} \right) } \hat{O}^{\mathcal{B}} \ket{\Psi_{n^{\mathcal{B}}} \left( \boldsymbol{\Theta}^{\mathcal{B} \ast}_{p} \right) } \right] \right\Vert_{\infty} \leq M_{\hat{O}^{\mathcal{B}}}^{\prime} \sqrt{ \mathcal{N}^{\mathcal{B}} N^{\mathcal{B}} },
    \end{equation}
    \begin{equation}\label{eq:D.82}
        \left\Vert \nabla_{\boldsymbol{\Theta}_{p}^{\mathcal{T} \ast}} \left[ \mathcal{P}_{m^{\mathcal{T}}n^{\mathcal{T}}} \bra{\Psi_{m^{\mathcal{T}}} \left( \boldsymbol{\Theta}^{\mathcal{T} \ast}_{p} \right) } \hat{O}^{\mathcal{T}} \ket{\Psi_{n^{\mathcal{T}}} \left( \boldsymbol{\Theta}^{\mathcal{T} \ast}_{p} \right) } \right] \right\Vert_{\infty} \leq M_{\hat{O}^{\mathcal{T}}}^{\prime} \sqrt{ \mathcal{N}^{\mathcal{T}} N^{\mathcal{T}} }
    \end{equation}
    and that it suffices to choose the refinement
    \begin{equation}\label{eq:D.83}
        \epsilon_{j} = \frac{M_{\mathfrak{L}_{j}}}{ \mathcal{N}^{\mathcal{B}} N^{\mathcal{B}} \mathcal{N}^{\mathcal{T}} N^{\mathcal{T}} },
    \end{equation}
    to ensure that the inequality
    \begin{equation}\label{eq:D.84}
        \left\Vert \mathfrak{G}_{j} \left( \boldsymbol{\tilde{\mathfrak{v}}}_{\flat}, \hat{O}^{\mathcal{B}}, \boldsymbol{\mathcal{W}}_{\flat}^{\mathcal{B} \ast}, \boldsymbol{\Theta}^{\mathcal{B} \ast}_{p} \right) \left( \cdot, \hat{O}^{\mathcal{T}}, \boldsymbol{\mathcal{W}}^{\mathcal{T} \ast}, \boldsymbol{\Theta}^{\mathcal{T} \ast}_{p} \right) - u_{j} \left( \cdot \right) \right\Vert_{\infty} \leq \epsilon_{j}
    \end{equation}
    is satisfied. Where \( M_{\mathfrak{L}_{j}} \) is a positive constant that encapsulates the asymptotic decay behavior and guarantees both \( \epsilon_{j} > 0 \) and the model's efficiency. Hence, by substituting (\ref{eq:D.77}), (\ref{eq:D.78}), (\ref{eq:D.80}), and (\ref{eq:D.81}) into (\ref{eq:D.75}), and analogously (\ref{eq:D.77}), (\ref{eq:D.78}), (\ref{eq:D.79}), and (\ref{eq:D.82}) into (\ref{eq:D.76}), we obtain the expressions
    \begin{align}
        &\left\Vert  \nabla_{\boldsymbol{\Theta}_{p}^{\mathcal{B} \ast}} \mathfrak{L}_{j} \left( \boldsymbol{\tilde{\mathfrak{v}}}_{\flat}, \cdot, \hat{O}^{\mathcal{B}}, \hat{O}^{\mathcal{T}}, \boldsymbol{\mathcal{W}}_{\flat}^{\mathcal{B} \ast}, \boldsymbol{\mathcal{W}}^{\mathcal{T} \ast}, \boldsymbol{\Theta}^{\mathcal{B} \ast}_{p}, \boldsymbol{\Theta}^{\mathcal{T} \ast}_{p} \right) \right\Vert_{\infty} \leq \frac{\mathfrak{M}_{j}^{\mathcal{B}}}{\mathcal{N}^{\mathcal{T}} N^{\mathcal{T}} \sqrt{ \mathcal{N}^{\mathcal{B}} N^{\mathcal{B}} } }, \label{eq:D.85} \\
        &\left\Vert \nabla_{\boldsymbol{\Theta}_{p}^{\mathcal{T} \ast}} \mathfrak{L}_{j} \left( \boldsymbol{\tilde{\mathfrak{v}}}_{\flat}, \cdot, \hat{O}^{\mathcal{B}}, \hat{O}^{\mathcal{T}}, \boldsymbol{\mathcal{W}}_{\flat}^{\mathcal{B} \ast}, \boldsymbol{\mathcal{W}}^{\mathcal{T} \ast}, \boldsymbol{\Theta}^{\mathcal{B} \ast}_{p}, \boldsymbol{\Theta}^{\mathcal{T} \ast}_{p} \right) \right\Vert_{\infty} \leq \frac{\mathfrak{M}_{j}^{\mathcal{T}}}{\mathcal{N}^{\mathcal{B}} N^{\mathcal{B}} \sqrt{ \mathcal{N}^{\mathcal{T}} N^{\mathcal{T}} } }. \label{eq:D.86}
    \end{align}
    Where the constants \( \mathfrak{M}_{j}^{\nu \in \{\mathcal{B}, \mathcal{T}\}} \) are given by the expression
    \begin{equation}\label{eq:D.87}
        \mathfrak{M}_{j}^{\mathcal{B}} = M_{\mathfrak{L}_{j}} \left[ M_{\hat{\beta}}M_{\hat{\tau}} + 1 \right]  M_{\hat{O}^{\mathcal{T}}} M_{\hat{O}^{\mathcal{B}}}^{\prime}
    \end{equation}
    and
    \begin{equation}\label{eq:D.88}
        \mathfrak{M}_{j}^{\mathcal{T}} = M_{\mathfrak{L}_{j}} \left[ M_{\hat{\beta}}M_{\hat{\tau}} + 1 \right]  M_{\hat{O}^{\mathcal{B}}} M_{\hat{O}^{\mathcal{T}}}^{\prime}.
    \end{equation}
    Thus, after summing (\ref{eq:D.85}) with (\ref{eq:D.86}), summing over all \( p \), summing over all \( j \), considering that \( \mathfrak{M}^{\nu} = \mathfrak{M}_{1}^{\nu} + \dots + \mathfrak{M}_{M}^{\nu} \), and using the properties of the triangle inequality, we obtain
    \begin{equation}\label{eq:D.89}
        \begin{split}
            &\left\Vert \nabla_{\left( \boldsymbol{\Xi}^{\mathcal{B} \ast}, \boldsymbol{\Xi}^{\mathcal{T} \ast} \right)} \mathfrak{L} \left( \boldsymbol{\tilde{\mathfrak{v}}}_{\flat}, \mathbf{y}, \hat{O}^{\mathcal{B}}, \hat{O}^{\mathcal{T}}, \boldsymbol{\mathcal{W}}_{\flat}^{\mathcal{B} \ast}, \boldsymbol{\mathcal{W}}^{\mathcal{T} \ast}, \boldsymbol{\Xi}^{\mathcal{B} \ast}, \boldsymbol{\Xi}^{\mathcal{T} \ast} \right) \right\Vert_{\infty, M} \leq \frac{\mathfrak{M}^{\mathcal{B}} P}{ \mathcal{N}^{\mathcal{T}} N^{\mathcal{T}} \sqrt{ \mathcal{N}^{\mathcal{B}} N^{\mathcal{B}}}} \\
            &\hspace{8.75cm} + \frac{\mathfrak{M}^{\mathcal{T}} P}{ \mathcal{N}^{\mathcal{B}} N^{\mathcal{B}} \sqrt{ \mathcal{N}^{\mathcal{T}} N^{\mathcal{T}}}}.
        \end{split}
    \end{equation}
    Which reduces to (\ref{eq:46}), and finally considering that
    \begin{equation}\label{eq:D.90}
        \mathcal{O} \left( \frac{1}{ \mathcal{N}^{\mathcal{T}} N^{\mathcal{T}} \sqrt{ \mathcal{N}^{\mathcal{B}} N^{\mathcal{B}} } } \right) + \mathcal{O} \left( \frac{1}{ \mathcal{N}^{\mathcal{B}} N^{\mathcal{B}} \sqrt{ \mathcal{N}^{\mathcal{T}} N^{\mathcal{T}} } } \right) \leq \epsilon_{\mathrm{grad}}
    \end{equation}
    we finally obtain (\ref{eq:47}) as we wanted to show.
\end{proof}
\section{Proof of the Corollary \ref{corollary:Col-3.2}}\label{sec:Ap-E}
In this appendix, we present a fundamental result showing the lower bound on the gradient for quantum circuits constructed using local operators whose decay avoids the barren plateau problem.

\begin{proof}
    The proof of this corollary is quite straightforward, since (\ref{eq:D.74}) allows us to write each of the components of the gradient of \(\mathfrak{L}_{j}\) associated with the branch and trunk sub-networks, whose square of the norm \(\left\Vert \cdot \right\Vert_{2}\) leads to the expressions
    \begin{equation}\label{eq:E.91}
        \begin{split}
            &\left\Vert \nabla_{\boldsymbol{\Theta}_{p}^{\mathcal{B} \ast}} \mathfrak{L}_{j} \left( \boldsymbol{\tilde{\mathfrak{v}}}_{\flat}, \mathbf{y}, \hat{O}^{\mathcal{B}}, \hat{O}^{\mathcal{T}}, \boldsymbol{\mathcal{W}}_{\flat}^{\mathcal{B} \ast}, \boldsymbol{\mathcal{W}}^{\mathcal{T} \ast}, \boldsymbol{\Xi}^{\mathcal{B} \ast}, \boldsymbol{\Xi}^{\mathcal{T} \ast} \right) \right\Vert_{2}^{2} \geq \left(2 \epsilon_{\mbox{hib}} \epsilon_{\mbox{cla}}\right)^{2} \\
            &\times \left\Vert \mathcal{P}_{m^{\mathcal{T}}n^{\mathcal{T}}} \bra{\Psi_{m^{\mathcal{T}}} \left( \boldsymbol{\Theta}^{\mathcal{T} \ast}_{p} \right) } \hat{O}^{\mathcal{T}} \ket{\Psi_{n^{\mathcal{T}}} \left( \boldsymbol{\Theta}^{\mathcal{T} \ast}_{p} \right) } \right\Vert_{2}^{2} \\
            &\times \left\Vert \nabla_{\boldsymbol{\Theta}_{p}^{\mathcal{B} \ast}} \left[ \mathcal{P}_{m^{\mathcal{B}}n^{\mathcal{B}}} \bra{\Psi_{m^{\mathcal{B}}} \left( \boldsymbol{\Theta}^{\mathcal{B} \ast}_{p} \right) } \hat{O}^{\mathcal{B}} \ket{\Psi_{n^{\mathcal{B}}} \left( \boldsymbol{\Theta}^{\mathcal{B} \ast}_{p} \right) } \right] \right\Vert_{2}^{2}
        \end{split}
    \end{equation}
    and
    \begin{equation}\label{eq:E.92}
        \begin{split}
            &\left\Vert \nabla_{\boldsymbol{\Theta}_{p}^{\mathcal{T} \ast}} \mathfrak{L}_{j} \left( \boldsymbol{\tilde{\mathfrak{v}}}_{\flat}, \mathbf{y}, \hat{O}^{\mathcal{B}}, \hat{O}^{\mathcal{T}}, \boldsymbol{\mathcal{W}}_{\flat}^{\mathcal{B} \ast}, \boldsymbol{\mathcal{W}}^{\mathcal{T} \ast}, \boldsymbol{\Xi}^{\mathcal{B} \ast}, \boldsymbol{\Xi}^{\mathcal{T} \ast} \right) \right\Vert_{2}^{2} \geq \left(2 \epsilon_{\mbox{hib}} \epsilon_{\mbox{cla}}\right)^{2} \\
            &\times \left\Vert \mathcal{P}_{m^{\mathcal{B}}n^{\mathcal{B}}} \bra{\Psi_{m^{\mathcal{B}}} \left( \boldsymbol{\Theta}^{\mathcal{B} \ast}_{p} \right) } \hat{O}^{\mathcal{B}} \ket{\Psi_{n^{\mathcal{B}}} \left( \boldsymbol{\Theta}^{\mathcal{B} \ast}_{p} \right) } \right\Vert_{2}^{2} \\
            &\times \left\Vert \nabla_{\boldsymbol{\Theta}_{p}^{\mathcal{T} \ast}} \left[ \mathcal{P}_{m^{\mathcal{T}}n^{\mathcal{T}}} \bra{\Psi_{m^{\mathcal{T}}} \left( \boldsymbol{\Theta}^{\mathcal{T} \ast}_{p} \right) } \hat{O}^{\mathcal{T}} \ket{\Psi_{n^{\mathcal{T}}} \left( \boldsymbol{\Theta}^{\mathcal{T} \ast}_{p} \right) } \right] \right\Vert_{2}^{2}.
        \end{split}
    \end{equation}
    Where \(\epsilon_{\mbox{hib}}\), \(\epsilon_{\mbox{cla}}\) are positive constants, which, by Theorem \ref{theorem:Teo-3.1} we can verify that
    \begin{equation}\label{eq:E.93}
        0 < \inf_{\mathbf{y} \in \mathcal{K}} \left\lbrace \inf_{\boldsymbol{\tilde{\mathfrak{v}}}_{\flat} \in \mathbb{R}^{m M \times 1}} \left\lbrace \left\Vert \mathfrak{G}_{j} \left( \boldsymbol{\tilde{\mathfrak{v}}}_{\flat}, \hat{O}^{\mathcal{B}}, \boldsymbol{\mathcal{W}}_{\flat}^{\mathcal{B} \ast}, \boldsymbol{\Theta}^{\mathcal{B} \ast}_{p} \right) \left( \mathbf{y}, \hat{O}^{\mathcal{T}}, \boldsymbol{\mathcal{W}}^{\mathcal{T} \ast}, \boldsymbol{\Theta}^{\mathcal{T} \ast}_{p} \right) - u_{j} \left( \mathbf{y} \right) \right\Vert_{2}\right\rbrace \right\rbrace \leq \epsilon_{\mbox{hib}}
    \end{equation}
    and
    \begin{equation}\label{eq:E.94}
        0 < \inf_{\mathbf{y} \in \mathcal{K}} \left\lbrace \inf_{\boldsymbol{\tilde{\mathfrak{v}}}_{\flat} \in \mathbb{R}^{m M \times 1}} \left\lbrace \left\Vert \left[ \hat{\beta}_{jp^{\prime}} \left( \boldsymbol{\tilde{\mathfrak{v}}}_{\flat}, \boldsymbol{\mathcal{W}}_{\flat}^{\mathcal{B} \ast} \right) + 1\right] \left[\tau_{p^{\prime}} \left( \mathbf{y}, \boldsymbol{\mathcal{W}}^{\mathcal{T} \ast} \right) + 1\right] + 1\right\Vert_{2} \right\rbrace \right\rbrace \leq \epsilon_{\mbox{cla}},
    \end{equation}
    since (\ref{eq:E.94}) is nothing more than a classical DeepONet, which, by Theorem \ref{theorem:Teo-A.2}, we know is capable of approximating the constant \(-1\) with arbitrary precision. Therefore, it suffices to compute the expectation of expression (\ref{eq:E.91}) to obtain:
    \begin{equation}\label{eq:E.95}
        \begin{split}
            &\mathbb{E}_{\left( \boldsymbol{\Xi}^{\mathcal{B} \ast}, \boldsymbol{\Xi}^{\mathcal{T} \ast} \right)} \left[\left\Vert \nabla_{\boldsymbol{\Theta}_{p}^{\mathcal{B} \ast}} \mathfrak{L}_{j} \left( \boldsymbol{\tilde{\mathfrak{v}}}_{\flat}, \mathbf{y}, \hat{O}^{\mathcal{B}}, \hat{O}^{\mathcal{T}}, \boldsymbol{\mathcal{W}}_{\flat}^{\mathcal{B} \ast}, \boldsymbol{\mathcal{W}}^{\mathcal{T} \ast}, \boldsymbol{\Xi}^{\mathcal{B} \ast}, \boldsymbol{\Xi}^{\mathcal{T} \ast} \right) \right\Vert_{2}^{2}\right] \geq \left(2 \epsilon_{\mbox{hib}} \epsilon_{\mbox{cla}}\right)^{2} \\
            &\times \mathbb{E}_{\left( \boldsymbol{\Xi}^{\mathcal{B} \ast}, \boldsymbol{\Xi}^{\mathcal{T} \ast} \right)} \left[\left\Vert \mathcal{P}_{m^{\mathcal{T}}n^{\mathcal{T}}} \bra{\Psi_{m^{\mathcal{T}}} \left( \boldsymbol{\Theta}^{\mathcal{T} \ast}_{p} \right) } \hat{O}^{\mathcal{T}} \ket{\Psi_{n^{\mathcal{T}}} \left( \boldsymbol{\Theta}^{\mathcal{T} \ast}_{p} \right) } \right\Vert_{2}^{2} \right. \\
            &\times \left. \left\Vert \nabla_{\boldsymbol{\Theta}_{p}^{\mathcal{B} \ast}} \left[ \mathcal{P}_{m^{\mathcal{B}}n^{\mathcal{B}}} \bra{\Psi_{m^{\mathcal{B}}} \left( \boldsymbol{\Theta}^{\mathcal{B} \ast}_{p} \right) } \hat{O}^{\mathcal{B}} \ket{\Psi_{n^{\mathcal{B}}} \left( \boldsymbol{\Theta}^{\mathcal{B} \ast}_{p} \right) } \right] \right\Vert_{2}^{2}\right],
        \end{split}
    \end{equation}
    which, in the initialization regime, the statistical independence of the parameters of the Branch (\(\Xi^{\mathcal{B} \ast}\)) and the Trunk (\(\Xi^{\mathcal{T} \ast}\)) allows us to factorize the cross expectations on the right-hand side of (\ref{eq:E.95}), that is,
    \begin{equation}\label{eq:E.96}
        \begin{split}
            &\mathbb{E}_{\left( \boldsymbol{\Xi}^{\mathcal{B} \ast}, \boldsymbol{\Xi}^{\mathcal{T} \ast} \right)} \left[\left\Vert \mathcal{P}_{m^{\mathcal{T}}n^{\mathcal{T}}} \bra{\Psi_{m^{\mathcal{T}}} \left( \boldsymbol{\Theta}^{\mathcal{T} \ast}_{p} \right) } \hat{O}^{\mathcal{T}} \ket{\Psi_{n^{\mathcal{T}}} \left( \boldsymbol{\Theta}^{\mathcal{T} \ast}_{p} \right) } \right\Vert_{2}^{2} \right.\\
            &\times \left. \left\Vert \nabla_{\boldsymbol{\Theta}_{p}^{\mathcal{B} \ast}} \left[ \mathcal{P}_{m^{\mathcal{B}}n^{\mathcal{B}}} \bra{\Psi_{m^{\mathcal{B}}} \left( \boldsymbol{\Theta}^{\mathcal{B} \ast}_{p} \right) } \hat{O}^{\mathcal{B}} \ket{\Psi_{n^{\mathcal{B}}} \left( \boldsymbol{\Theta}^{\mathcal{B} \ast}_{p} \right) } \right] \right\Vert_{2}^{2}\right] = \\
            &\mathbb{E}_{\boldsymbol{\Xi}^{\mathcal{T} \ast}} \left[\left\Vert \mathcal{P}_{m^{\mathcal{T}}n^{\mathcal{T}}} \bra{\Psi_{m^{\mathcal{T}}} \left( \boldsymbol{\Theta}^{\mathcal{T} \ast}_{p} \right) } \hat{O}^{\mathcal{T}} \ket{\Psi_{n^{\mathcal{T}}} \left( \boldsymbol{\Theta}^{\mathcal{T} \ast}_{p} \right) } \right\Vert_{2}^{2}\right] \\
            &\times \mathbb{E}_{\boldsymbol{\Xi}^{\mathcal{B} \ast}} \left[ \left\Vert \nabla_{\boldsymbol{\Theta}_{p}^{\mathcal{B} \ast}} \left[ \mathcal{P}_{m^{\mathcal{B}}n^{\mathcal{B}}} \bra{\Psi_{m^{\mathcal{B}}} \left( \boldsymbol{\Theta}^{\mathcal{B} \ast}_{p} \right) } \hat{O}^{\mathcal{B}} \ket{\Psi_{n^{\mathcal{B}}} \left( \boldsymbol{\Theta}^{\mathcal{B} \ast}_{p} \right) } \right] \right\Vert_{2}^{2}\right].
        \end{split}
    \end{equation}
    Therefore, it is evident that:
    \begin{equation}\label{eq:E.97}
        0 \leq \mathbb{E}_{\boldsymbol{\Xi}^{\mathcal{T} \ast}} \left[\left\Vert \mathcal{P}_{m^{\mathcal{T}}n^{\mathcal{T}}} \bra{\Psi_{m^{\mathcal{T}}} \left( \boldsymbol{\Theta}^{\mathcal{T} \ast}_{p} \right) } \hat{O}^{\mathcal{T}} \ket{\Psi_{n^{\mathcal{T}}} \left( \boldsymbol{\Theta}^{\mathcal{T} \ast}_{p} \right) } \right\Vert_{2}^{2}\right] \equiv \mathfrak{c}^{\mathcal{T}}.
    \end{equation}
    Furthermore, it is reported in \cite{McClean201800, Larocca202500} that local Dynamical Lie Algebras for shallow circuits imply that their depth is of the order \(\mathcal{N}^{\mathcal{B}} = \mathcal{O}\left(\log{N^{\mathcal{B}}}\right)\), which leads to:
    \begin{equation}\label{eq:E.98}
        \mathbb{E}_{\boldsymbol{\Xi}^{\mathcal{B} \ast}} \left[\left\Vert \nabla_{\boldsymbol{\Theta}_{p}^{\mathcal{B} \ast}} \left[\mathcal{P}_{m^{\mathcal{B}}n^{\mathcal{B}}} \bra{\Psi_{m^{\mathcal{B}}} \left( \boldsymbol{\Theta}^{\mathcal{B} \ast}_{p} \right) } \hat{O}^{\mathcal{B}} \ket{\Psi_{n^{\mathcal{B}}} \left( \boldsymbol{\Theta}^{\mathcal{B} \ast}_{p} \right) } \right] \right\Vert_{2}^{2}\right] \geq \mathcal{O} \left(\frac{1}{\mbox{poly} \left(N^{\mathcal{B}}\right)}\right).
    \end{equation}
    Thus, we can use (\ref{eq:E.96})-(\ref{eq:E.98}) to obtain:
    \begin{equation}\label{eq:E.99}
        \begin{split}
            &\mathbb{E}_{\left( \boldsymbol{\Xi}^{\mathcal{B} \ast}, \boldsymbol{\Xi}^{\mathcal{T} \ast} \right)} \left[\left\Vert \nabla_{\boldsymbol{\Theta}_{p}^{\mathcal{B} \ast}} \mathfrak{L}_{j} \left( \boldsymbol{\tilde{\mathfrak{v}}}_{\flat}, \mathbf{y}, \hat{O}^{\mathcal{B}}, \hat{O}^{\mathcal{T}}, \boldsymbol{\mathcal{W}}_{\flat}^{\mathcal{B} \ast}, \boldsymbol{\mathcal{W}}^{\mathcal{T} \ast}, \boldsymbol{\Xi}^{\mathcal{B} \ast}, \boldsymbol{\Xi}^{\mathcal{T} \ast} \right) \right\Vert_{2}^{2}\right] \geq \mathcal{O} \left(\frac{1}{\mbox{poly} \left(N^{\mathcal{B}}\right)}\right),
        \end{split}
    \end{equation}
    and through an analogous procedure
    \begin{equation}\label{eq:E.100}
        \begin{split}
            &\mathbb{E}_{\left( \boldsymbol{\Xi}^{\mathcal{B} \ast}, \boldsymbol{\Xi}^{\mathcal{T} \ast} \right)} \left[\left\Vert \nabla_{\boldsymbol{\Theta}_{p}^{\mathcal{T} \ast}} \mathfrak{L}_{j} \left( \boldsymbol{\tilde{\mathfrak{v}}}_{\flat}, \mathbf{y}, \hat{O}^{\mathcal{B}}, \hat{O}^{\mathcal{T}}, \boldsymbol{\mathcal{W}}_{\flat}^{\mathcal{B} \ast}, \boldsymbol{\mathcal{W}}^{\mathcal{T} \ast}, \boldsymbol{\Xi}^{\mathcal{B} \ast}, \boldsymbol{\Xi}^{\mathcal{T} \ast} \right) \right\Vert_{2}^{2}\right] \geq \mathcal{O} \left(\frac{1}{\mbox{poly} \left(N^{\mathcal{T}}\right)}\right).
        \end{split}
    \end{equation}
    Thus, combining the gradient bounds of the branch (\ref{eq:E.99}) and trunk (\ref{eq:E.100}) for the \(j\)-th component, and using the definition of the supremum norm over the output space, the validity of inequality (\ref{eq:48}) is directly concluded.
\end{proof}
\section{Proof of the Theorem \ref{theorem:Teo-3.2}}\label{sec:Ap-F}
In this appendix, we determine the minimum number of sensors required for our hybrid QPI-DeepONet-MAC architecture to approximate the operator \( \boldsymbol{\mathcal{G}} \) with arbitrary precision—an operator that maps a function \( \mathbf{u} \in C_{\mathbb{R}^{M}}^{\infty} \left( \mathcal{K} \right) \) serving as a solution to the problem defined by equations (\ref{eq:04})–(\ref{eq:07}).

To achieve this, we begin by assuming that \(\mathbb{B}\) is Euclidean of dimension \(d\). Now, let us see that given \(x_{1}, \ldots, x_{m}\in \mathbb{K}\), all distinct, and \(\mathbf{f} \in C_{\mathbb{R}^{M}}\left(\mathbb{K}\right)\), there exists a subspace \(\mathbb{X}_{m}\) of \(C_{\mathbb{R}^{M}}\left(\mathbb{K}\right)\) of dimension less than or equal to \(m\) such that:
\begin{equation}\label{eq:F.101}
    \mathbf{f} \left(x_{i}\right) = \mathbf{g} \left(x_{i}\right),
\end{equation}
for all \(i \in \mathcal{I}_{m} = \{1, \dots, m\}\) and some \(\mathbf{g} \in \mathbb{X}_{m}\). Indeed, we construct \(m\) continuous scalar functions \(\xi_{1}, \dots, \xi_{m}: \mathbb{K} \to \mathbb{R}\) satisfying the following condition:
\begin{equation}\label{eq:F.102}
    \xi_{i} \left(x_{j}\right) = \delta_{ij}.
\end{equation}
It is clear that \(\dim{\left( \operatorname{span}\left\lbrace \xi_{1}, \dots,\xi_{m}\right\rbrace\right)} = m\). Next, we define \(\mathbf{g} : \mathbb{K}\to \mathbb{R}^{M}\) by
\begin{equation}\label{eq:F.103}
	\mathbf{g} \left(x\right) \coloneq \sum_{i \in \mathcal{I}_{m}} \xi_{i} \left(x\right)\mathbf{f} \left(x_{i}\right),
\end{equation}
consequently,
\begin{equation}\label{eq:F.104}
    \mathbf{g} \left(x_{j}\right) = \sum_{i \in \mathcal{I}_{m}} \delta_{ij}\mathbf{f}\left(x_{i}\right) = \mathbf{f} \left(x_{j}\right)
\end{equation}
and \(\mathbf{g} \in C_{\mathbb{R}^{M}}\left(\mathbb{K}\right)\). Finally, to obtain our objective, we take
\begin{equation}\label{eq:F.105}
	\mathbb{X}_{m} = \operatorname{span}\left\lbrace \xi_{1}\mathbf{f} \left(x_{1}\right), \dots,\xi_{m} \mathbf{f} \left(x_{m}\right) \right\rbrace, 
\end{equation} 
which evidently satisfies the desired conditions. For the sake of notational simplicity, in what follows we shall write \(\mathbf{g} = \mathbf{f}_{\operatorname{int},m}\). A detail worth highlighting is that if we consider the discretization of \(\mathbf{f}\) by the points \(x_{1}, \dots, x_{m}\), these very same points define the same discretization of \(\mathbf{f}_{\operatorname{int},m}\) due to (\ref{eq:F.101}). This allows us to state the theorem \ref{theorem:Teo-3.2}.
\begin{proof}
    The proof begins by considering
    \begin{equation}\label{eq:F.106}
    	\sup_{\boldsymbol{\mathfrak{v}} \in \mathbb{V}^{M}} \left\lbrace \inf_{\boldsymbol{\tilde{\mathfrak{v}}}_{\flat}} \left\Vert \boldsymbol{\mathfrak{G}} \left( \boldsymbol{\tilde{\mathfrak{v}}}_{\flat}, \hat{O}^{\mathcal{B}}, \boldsymbol{\mathcal{W}}_{\flat}^{\mathcal{B} \ast}, \boldsymbol{\Xi}^{\mathcal{B} \ast} \right) \left( \cdot, \hat{O}^{\mathcal{T}}, \boldsymbol{\mathcal{W}}^{\mathcal{T} \ast}, \boldsymbol{\Xi}^{\mathcal{T} \ast} \right) - \boldsymbol{\mathcal{G}} \left( \boldsymbol{\mathfrak{v}} \right) \left( \cdot \right) \right\Vert_{\infty,M}\right\rbrace,
    \end{equation}
    where the infimum is taken over the set of all discretizations of \(\mathfrak{v}\) by \(m\) strictly distinct sensors chosen in an arbitrary manner. Recalling that the discretization of each \(\mathfrak{v}\) by \(m\) sensors is also the discretization, by the same sensors, of \(\mathfrak{v}_{\operatorname{int},m}\), we can conveniently apply the triangle inequality to the expression inside the infimum of (\ref{eq:F.106}) to obtain:
    \begin{equation}\label{eq:F.107}
        \begin{split}
            &\left\Vert \boldsymbol{\mathfrak{G}} \left( \boldsymbol{\tilde{\mathfrak{v}}}_{\flat}, \hat{O}^{\mathcal{B}}, \boldsymbol{\mathcal{W}}_{\flat}^{\mathcal{B} \ast}, \boldsymbol{\Xi}^{\mathcal{B} \ast} \right) \left( \cdot, \hat{O}^{\mathcal{T}}, \boldsymbol{\mathcal{W}}^{\mathcal{T} \ast}, \boldsymbol{\Xi}^{\mathcal{T} \ast} \right) - \boldsymbol{\mathcal{G}} \left( \boldsymbol{\mathfrak{v}} \right) \left( \cdot \right) \right\Vert_{\infty,M}\le \\
            &\hspace{2.0cm}\left\Vert \boldsymbol{\mathfrak{G}} \left( \boldsymbol{\tilde{\mathfrak{v}}}_{\flat}, \hat{O}^{\mathcal{B}}, \boldsymbol{\mathcal{W}}_{\flat}^{\mathcal{B} \ast}, \boldsymbol{\Xi}^{\mathcal{B} \ast} \right) \left( \cdot, \hat{O}^{\mathcal{T}}, \boldsymbol{\mathcal{W}}^{\mathcal{T} \ast}, \boldsymbol{\Xi}^{\mathcal{T} \ast} \right) - \boldsymbol{\mathcal{G}} \left( \boldsymbol{\mathfrak{v}}_{\operatorname{int},m} \right) \left( \cdot \right) \right\Vert_{\infty,M} \\
            &\hspace{6.70cm} + \left\Vert \boldsymbol{\mathcal{G}} \left( \boldsymbol{\mathfrak{v}}\right) \left( \cdot \right) - \boldsymbol{\mathcal{G}} \left( \boldsymbol{\mathfrak{v}}_{\operatorname{int},m} \right) \left( \cdot \right) \right\Vert_{\infty,M}.
        \end{split}
    \end{equation}
    Now, since \(\epsilon > 0\) is the approximation error of the hybrid model, \(\boldsymbol{\mathcal{G}}\) is Lipschitz continuous with respect to the second variable, and thanks to the Hille-Yosida theorem \cite{Brezis201100}, there exist constants \(\mathfrak{C}_{L} > 0\) and \(\omega \in \mathbb{R}\) such that from
    \begin{equation}\label{eq:F.108}
        \begin{split}
            &\sup_{\boldsymbol{\mathfrak{v}} \in \mathbb{V}^{M}} \left\lbrace \inf_{\boldsymbol{\mathfrak{u}}\in \tilde{\mathbb{X}}_{m}} \left\Vert \boldsymbol{\mathfrak{G}} \left( \boldsymbol{\tilde{\mathfrak{u}}}_{\flat}, \hat{O}^{\mathcal{B}}, \boldsymbol{\mathcal{W}}_{\flat}^{\mathcal{B} \ast}, \boldsymbol{\Xi}^{\mathcal{B} \ast} \right) \left( \cdot, \hat{O}^{\mathcal{T}}, \boldsymbol{\mathcal{W}}^{\mathcal{T} \ast}, \boldsymbol{\Xi}^{\mathcal{T} \ast} \right) - \boldsymbol{\mathcal{G}} \left( \boldsymbol{\mathfrak{v}} \right) \left( \cdot \right) \right\Vert_{\infty,M}\right\rbrace \le \\
            &\hspace{5.0cm} \epsilon + \mathfrak{C}_{L} \exp{\omega t} \sup_{\boldsymbol{\mathfrak{v}} \in \mathbb{V}^{M}} \left\lbrace \inf_{\boldsymbol{\mathfrak{u}} \in \tilde{\mathbb{X}}_{m}} \left\Vert \boldsymbol{\mathfrak{u}} - \boldsymbol{\mathfrak{v}} \right\Vert_{\infty,M}\right\rbrace,
        \end{split}
    \end{equation}
    where \(\boldsymbol{\tilde{\mathfrak{u}}}_{\flat}\) denotes the discretization of \(\boldsymbol{\mathfrak{u}}\) by the \(m\) sensors and \(\tilde{\mathbb{X}}_{m}\) is a vector subspace of \(C_{\mathbb{R}^{M}}\left(\mathbb{K}\right)\) of dimension less than or equal to \(m\), the existence of at least one such subspace is guaranteed by the construction in (\ref{eq:F.105}).
    Now, in order to utilize the properties of the Kolmogorov \(n\)-width \cite{Pinkus201200}, we take the infimum in (\ref{eq:F.108}) over all subspaces \(\tilde{\mathbb{X}}_{m}\) of \(C_{\mathbb{R}^{M}}\left(\mathbb{K}\right)\) with \(\dim{\tilde{\mathbb{X}}_{m}} \le m\), to conclude that:
    \begin{equation}\label{eq:F.109}
        \begin{split}
            &\displaystyle\inf_{\tilde{\mathbb{X}}_{m}} \left\lbrace \sup_{\boldsymbol{\mathfrak{v}} \in \mathbb{V}^{M}} \left\lbrace \inf_{\boldsymbol{\mathfrak{u}}\in \tilde{\mathbb{X}}_{m}} \left\Vert \boldsymbol{\mathfrak{G}} \left( \boldsymbol{\tilde{\mathfrak{u}}}_{\flat}, \hat{O}^{\mathcal{B}}, \boldsymbol{\mathcal{W}}_{\flat}^{\mathcal{B} \ast}, \boldsymbol{\Xi}^{\mathcal{B} \ast} \right) \left( \cdot, \hat{O}^{\mathcal{T}}, \boldsymbol{\mathcal{W}}^{\mathcal{T} \ast}, \boldsymbol{\Xi}^{\mathcal{T} \ast} \right) - \boldsymbol{\mathcal{G}} \left( \boldsymbol{\mathfrak{v}} \right) \left( \cdot \right) \right\Vert_{\infty,M}\right\rbrace \right\rbrace \le \\
            &\hspace{5.0cm} \epsilon + \mathfrak{C}_{L} \exp{\omega t} \displaystyle\inf_{\tilde{\mathbb{X}}_{m}} \left \lbrace\sup_{\boldsymbol{\mathfrak{v}} \in \mathbb{V}^{M}} \left\lbrace \inf_{\boldsymbol{\mathfrak{u}} \in \tilde{\mathbb{X}}_{m}} \left\Vert \boldsymbol{\mathfrak{u}} - \boldsymbol{\mathfrak{v}} \right\Vert_{\infty,M}\right\rbrace \right\rbrace.
        \end{split}
    \end{equation}
    Finally, from the inequality above, thanks to the Kolmogorov \(n\)-width property
    \begin{equation}\label{eq:F.110}
        \inf_{\tilde{\mathbb{X}}_{m}} \left\lbrace \sup_{\boldsymbol{\mathfrak{v}}\in \mathbb{V}^{M}}\left\lbrace \inf_{\boldsymbol{\mathfrak{u}}\in \tilde{\mathbb{X}}_{m}} \left\Vert \boldsymbol{\mathfrak{u}}-\boldsymbol{\mathfrak{v}}\right\Vert_{\infty,M} \right\rbrace \right\rbrace = \mathcal{O}\left(m^{-k/d}\right),
    \end{equation}
    and by the definition of the infimum, we have that there exists a subspace \(\mathbb{Y}_{m}\) of \(C_{\mathbb{R}^{M}}\left(\mathbb{K}\right)\) with \(\dim{\mathbb{Y}_{m}} \le m\) such that:
    \begin{equation}\label{eq:F.111}
        \begin{split}
            &\sup_{\boldsymbol{\mathfrak{v}} \in \mathbb{V}^{M}} \left\lbrace \inf_{\boldsymbol{\mathfrak{u}}\in \tilde{\mathbb{X}}_{m}} \left\Vert \boldsymbol{\mathfrak{G}} \left( \boldsymbol{\tilde{\mathfrak{u}}}_{\flat}, \hat{O}^{\mathcal{B}}, \boldsymbol{\mathcal{W}}_{\flat}^{\mathcal{B} \ast}, \boldsymbol{\Xi}^{\mathcal{B} \ast} \right) \left( \cdot, \hat{O}^{\mathcal{T}}, \boldsymbol{\mathcal{W}}^{\mathcal{T} \ast}, \boldsymbol{\Xi}^{\mathcal{T} \ast} \right) - \boldsymbol{\mathcal{G}} \left( \boldsymbol{\mathfrak{v}} \right) \left( \cdot \right) \right\Vert_{\infty,M}\right\rbrace \le \\
            &\hspace{7.25cm} \epsilon + \mathfrak{C}_{L} \exp{\omega t} \mathcal{O}\left(m^{-k/d}\right).
        \end{split}
    \end{equation}
    Thus, by imposing that (\ref{eq:F.106}) is bounded from above by \(\tilde{\epsilon}\), we can rewrite (\ref{eq:F.111}) in the form
    \begin{equation}\label{eq:F.112}
        \epsilon + C_{\mbox{Int}} \mathfrak{C}_{L} \exp{\omega t} m^{-k/d} \leq \tilde{\epsilon},
    \end{equation}
    where \(C_{\mbox{Int}}\) is a constant that absorbs the geometric characteristics of the compact domain \(\mathbb{K}\), the function regularity \(k\), and the spatial dimension \(d\), ensuring that the inequality is strictly quantitative for any mesh or sensor choice. Finally, via (\ref{eq:F.112}) we obtain expression (\ref{eq:50}), which mathematically represents the minimum number of sensors \(m\) required to guarantee system observability. This proves the theorem, establishing that for unstable dynamical systems (\(\omega > 0\)) the requirement on the number of sensors scales exponentially with the evolution time.
\end{proof}
\section{Nomenclature}\label{sec:Ap-G}
To facilitate readability and ensure clarity regarding the mathematical notation employed throughout this work, a comprehensive summary of the symbols is presented in Table \ref{tab:nomenclature}.
\begin{xltabular}{\textwidth}{>{\raggedright\arraybackslash}p{0.32\textwidth} >{\raggedright\arraybackslash}X}
    \caption{Summary of the nomenclature and symbols for the QPI-DeepONet-MAC architecture.} \label{tab:nomenclature} \\
    \toprule
    \textbf{Symbol} & \textbf{Description} \\
    \midrule
    \endhead 

    \bottomrule
    \endlastfoot 

    $D, \mathcal{K}, \mathbb{V}$ & Spatio-temporal domain $D = \Omega \times \mathbb{R}^{+}$, its compact subset $\mathcal{K}$, and function space compact subset $\mathbb{V}$. \\
    $\mathcal{I}_{X}$ & General index set of the form $\{1, \dots, X\}$ with \(X \in \mathbb{N}^{\ast}\), e.g.: $\mathcal{I}_{M}, \mathcal{I}_{P}, \mathcal{I}_{N_1}, \mathcal{I}_{N_2}, \mathcal{I}_{N_{\nu}}$. \\
    $\nu \in \{\mathcal{B}, \mathcal{T}\}$ & Index for the Branch ($\mathcal{B}$) or Trunk ($\mathcal{T}$) subnetworks. \\
    \midrule
    \addlinespace
    \multicolumn{2}{l}{\textit{Classical Variables and Operators}} \\
    \addlinespace
    \midrule
    $M, m$ & Output dimension of the target PDE solution, and number of input sensors. \\
    $\mathbf{u}, \boldsymbol{\mathfrak{v}}$ & Target PDE solution function, and input parameter function. \\
    $\tilde{\mathfrak{v}}_{\flat}, \mathbf{y}$ & Discretized/vectorized branch input, and trunk continuous spatio-temporal coordinates. \\
    $\boldsymbol{\mathcal{G}}, \mathbf{DeepONet}$ & Target continuous operator, and the classical DeepONet operator. \\
    $\boldsymbol{\mathcal{W}}_{\flat}^{\mathcal{B}}, \boldsymbol{\mathcal{W}}^{\mathcal{T}}$ & Sets of classical weights and biases for the Branch and Trunk networks. \\
    $\hat{\boldsymbol{\beta}}, \boldsymbol{\tau}$ & Branch latent coefficients and Trunk basis functions. \\
    \midrule
    \addlinespace
    \multicolumn{2}{l}{\textit{Quantum and Hybrid MAC Components}} \\
    \addlinespace
    \midrule
    $N^{\nu}, \mathcal{N}^{\nu}$ & Number of qubits and circuit depth for the quantum module. \\
    $\boldsymbol{\Theta}_{p}^{\nu}, \boldsymbol{\Xi}^{\nu}$ & Local parameters of the $p$-th quantum node, and the global parameter tensor. \\
    $|\Psi(\boldsymbol{\Theta}_{p}^{\nu})\rangle, \widehat{U}(\boldsymbol{\Theta}_{p}^{\nu})$ & Parameterized quantum state and its corresponding unitary operator. \\
    $\hat{O}^{\nu}, \mathbb{O}^{\nu}$ & Hermitian quantum observable acting on $\mathcal{H}_{\mathcal{Q}}^{\nu}$, and its corresponding set. \\
    $\mathfrak{B}_{jp^{\prime}}, \mathfrak{T}_{p^{\prime}}, \mathfrak{b}_j$ & Multiplicative Branch component, Trunk component, and Additive bias of the MAC coupling. \\
    $\boldsymbol{\mathfrak{G}}$ & Output operator of the hybrid QPI-DeepONet-MAC model. \\
    \midrule
    \addlinespace
    \multicolumn{2}{l}{\textit{Quantum Informational Dynamics}} \\
    \addlinespace
    \midrule
    $\rho_{\mathcal{C}}^{\nu}$ & Density matrices of the classical subsystems. \\
    $\rho_{\mathcal{C}}^{\nu}, \varrho_{\mathcal{Q}}^{\nu}$ & Density matrix of the quantum subsystem, and its diagonalized form. \\
    $\mathcal{C}_{p}, \mathcal{D}_{p}$ & Quantum Coherence of the reduced state, and the Decoherence rate. \\
    $S_{p}, F_{p}$ & Von Neumann Entropy, and Fidelity between subsequent steps. \\
    \midrule
    \addlinespace
    \multicolumn{2}{l}{\textit{Loss Functions and Regularization}} \\
    \addlinespace
    \midrule
    $\mathcal{L}, \tilde{\mathcal{L}}, \mathfrak{L}$ & Classical Physics-Informed loss, Hybrid Physics-Informed loss, and theoretical loss for gradients. \\
    $\mathcal{L}_{IC}, \mathcal{L}_{BC}$ & Loss function components for Initial Conditions (IC) and Boundary Conditions (BC). \\
    $\mathcal{L}_{PDE}, \mathcal{L}_{SOL}$ & Loss function components for PDE residuals and known analytical/empirical Solutions. \\
    $\mathcal{L}_{INF}, \mathcal{L}_{TOT}$ & Informational regularization term, and the Total hybrid optimization loss function. \\
    $\lambda_{IC}, \lambda_{BC}, \lambda_{PDE}, \lambda_{SOL}$ & Hyperparameters weighting the physical constraints and \\
    $\lambda_{COE}, \lambda_{FID}$ & informational regularization (Coherence, Fidelity). \\
\end{xltabular}
\end{appendices}
\backmatter
\bmhead{Acknowledgments}
First, I thank God the Father, infinite goodness, God the Son, redeemer of the world, and God the Holy Spirit, who—together with the Virgin Mary and Saint Joseph—are my protection, my source of inspiration, and my strength on this path of knowledge. I also thank my family, for they have made me who I am today.

\section*{Declarations}
\begin{itemize}
    \item \textbf{Funding:} I would like to thank the National Council for Scientific and Technological Development (CNPq), which initially funded this work through the Institutional Capacity Building Program (PCI), process no. 301066/2025-6, granted by the Ministry of Science, Technology and Innovation (MCTI), and subsequently through the Junior Postdoctoral Fellowship (PDJ) program under process no. 150351/2026-7.
    \item \textbf{Conflict of interest/Competing interests:} The authors declare no competing interests.
    \item \textbf{Ethics approval and consent to participate:} Not applicable.
    \item \textbf{Consent for publication:} Not applicable.
    \item \textbf{Data and code availability:} The code and data supporting the findings of this study are available from the corresponding author upon reasonable request.
    \item \textbf{Author contributions:} Author Said Lantigua (SL) conceived the project; subsequently, authors José Valencia (JV), Gilson Giraldi (GG), Renato Portugal (RP), and Jonas Maziero (JM) contributed to the development of the research. Specifically, SL and JV carried out the formal derivations that provide the mathematical foundation for the project, under the supervision of GG, RP, and JM. Furthermore, SL and JV actively worked on drafting the initial version of the manuscript as well as subsequent versions, which were improved and revised by GG, RP, and JM. Throughout the research development process, all authors discussed and reviewed the manuscript.
\end{itemize}
\bibliography{sn-bibliography}

\end{document}